\documentclass[a4paper,11pt]{amsart}
\usepackage[utf8]{inputenc}
\usepackage{xcolor}
\usepackage{graphicx}
\usepackage{dcolumn}
\usepackage{hyperref}
\hypersetup{colorlinks=true, linkcolor=blue, citecolor=blue, urlcolor=blue}
\usepackage[english]{babel}
\usepackage[autostyle, english = american]{csquotes}
\MakeOuterQuote{"}
\usepackage{mathrsfs}
\usepackage{amssymb}
\usepackage{bm}
\usepackage{physics}

\usepackage[p,osf]{cochineal}
\usepackage[cochineal]{newtxmath}

\usepackage[backend=bibtex,style=numeric,sorting=nyt,maxcitenames=5,mincitenames=3,maxbibnames=10,giveninits=true, doi = false, isbn = false, url=false]{biblatex}

\theoremstyle{plain}

\newtheorem{theorem}{Theorem}[section]
\newtheorem{proposition}[theorem]{Proposition}
\newtheorem{lemma}[theorem]{Lemma}
\newtheorem{corollary}[theorem]{Corollary}

\theoremstyle{definition}
\newtheorem{definition}{Definition}[section]
\newtheorem{remark}{Remark}[section]

\DeclareMathOperator\supp{supp}

\DeclareMathOperator\Hess{Hess}
\DeclareMathOperator\Ran{Ran}
\DeclareMathOperator\Ent{Ent}

\newcommand{\vN}{\mathfrak W}
\newcommand{\vNb}{\mathfrak W_\beta}
\newcommand{\AvN}{\mathfrak U}
\newcommand{\gvN}{\mathfrak A}
\newcommand{\E}{\mathbb E}
\newcommand{\Cauchy}{\mathcal C}

\newcommand{\Mb}{\mathbb M_\beta}

\newcommand{\h}{\mathbf{h}}
\newcommand{\K}{\mathbf{K}}

\newcommand{\Q}{\mathcal Q}

\newcommand{\mnorm}[1]{\left \Vert #1 \right \Vert}

\newcommand{\Hi}{\mathcal H}
\newcommand{\p}{p}

\newcommand{\Ba}{\mathcal B}

\newcommand{\CCR}{\mathcal W(\mathfrak h)}
\newcommand{\Emery}{Émery }
\newcommand{\Cet}{C_{\epsilon,t}}
\newcommand{\Dcet}{\dot C_{\epsilon,t}}
\newcommand{\Vn}{V^{(n)}}
\newcommand{\FVn}{\hat{V}^{(n)}}
\newcommand{\x}{\mathbf{x}}

\begin{document}
\title{Renormalization group bounds on Araki relative entropy}
\author{Edoardo D'Angelo}
\address{Dipartimento di Matematica,
Università di Milano, Italy}
\email{edoardo.dangelo@unimi.it}

\keywords{Renormalisation, Araki relative entropy, Logarithmic Sobolev inequalities}

\date{}

\begin{abstract}
This paper derives a multiscale Bakry-\'Emery criterion for an upper bound on the Araki relative entropy of perturbations of an equilibrium state in relativistic QFT. Using stochastic positivity, affiliated perturbations of the relativistic bosonic field can be put in correspondence with perturbations of the massive Euclidean free field. This gives a Feynman-Kac formula, which in turns is used to derive a probabilistic expression for the Araki relative entropy. The description of the massive Euclidean free field as an abstract Wiener space provides the infinite-dimensional framework to derive a Polchinski equation. Extending the methods of Bauerschmidt, Bodineau, and Dagallier to this context yields a probabilistic bound on the Araki relative entropy, which can be formulated entirely in operator-algebraic terms. Using the criterion, it is shown that the Lorentzian massive Sine-Gordon model in the ultraviolet finite regime satisfies the relative entropy bound.
\end{abstract}

\maketitle

\section{Introduction}
Probabilistic methods have been at the heart of constructive Quantum Field Theory (QFT) since its very inception. After the seminal works of Schwinger~\cite{Schwinger1958} and Symanzik~\cite{Symanzik1966}, Nelson realised the free relativistic bosonic field in terms of a stochastic Markov process evolving in a fictitious Euclidean time~\cite{Nelson1973a,Nelson1973}. 

Nelson's approach brought techniques from probability and statistical mechanics to the construction of relativistic field theories~\cite{GlimmJaffe1987}. Conversely, ideas from QFT crossed into probability theory, leading to new and important results: one prominent example is the use of the \emph{Logarithmic Sobolev Inequality} (LSI), introduced by Gross~\cite{Gross1975} as a refinement of Nelson's \emph{hypercontractivity}~\cite{Nelson1966}.

On a probability space $(\Q, \mu)$, in terms of the relative entropy of $F:\Q \to \mathbb R_+$ with respect to $\mu$,
\[
\mathrm{Ent}_\mu(F) \equiv \E_\mu(\Phi(F)) - \Phi(\E_\mu(F)) \ , \quad \Phi(x) \equiv x \log x \ , \quad \E_\mu(F) \equiv \int_\Q F \dd \mu \ ,
\]
the LSI takes the expression
\[
\mathrm{Ent}_\mu(F^2)  \leq \frac{1}{2\gamma} \E_\mu(\abs{\nabla F}^2) \ ,
\]
for some constant $\gamma > 0$. 

Since their introduction, LSIs have become a central tool in probability theory with many important consequences, such as concentration of measure and transport inequalities~\cite{Bakry2013,Ledoux2004}.

More recently, Bauerschmidt, Bodineau and Dagallier~\cite{BauerschmidtBodineau2020,Bauerschmidt2024,Bauerschmidt2023b,Bauerschmidt2023} initiated a programme to derive LSIs from a continuum decomposition of the interacting measure in scales along the Renormalisation Group (RG) flow. Using as central tool the \emph{Polchinski equation}, they derived a multiscale version of the celebrated \emph{Bakry-\Emery criterion}, allowing them to derive LSIs for non-convex interactions.

In this paper, we close the circle by using the multiscale Bakry-\Emery criterion to derive a bound for Araki relative entropy in relativistic QFT.

In the non-commutative setting of von Neumann algebras, the entropy functional $\mathrm{Ent}_\mu$ is generalised by the Araki formula~\cite{Araki1975}. Consider a von Neumann algebra $\vN$ in standard form $(\vN, J, \mathcal H, \mathcal P)$, where $\mathcal H$ is the separable Hilbert space on which $\vN$ acts, $J$ is the antilinear, isometric involution of $\mathcal H$ mapping $\vN$ to its commutant, and $\mathcal P$ is a self-dual cone, invariant under $J$. 

The relative entropy between two normal states $\omega, \psi$ on $\vN$, with unique vector representatives $\Omega, \ \Psi \in \mathcal P$, is (see Def.~\ref{def:relative-entropy})
\[
S(\omega| \psi) = - \left (\Omega, \log\Delta_{\Omega,\Psi} \Omega \right ) \ ,
\]
where $\Delta_{\Omega,\Psi}$ is the relative modular operator. 

Thanks to its generality, the Araki formula can be applied also to the type III algebras appearing in relativistic QFT, where the absence of a trace makes the definition of an entropy measures particularly challenging. However, precisely because it is a very general formula, there are few examples of explicit computations of the Araki relative entropy, and upper and lower bounds can provide important indirect information (for recent results, see \cite{Chialastri2026,Frob2025,Hollands2025,Hollands2026,Longo2024,LongoMorinelli2024,LongoXu2018}).

The method presented here uses a Feynman-Kac formula to represent the Araki relative entropy between equilibrium states in terms of probabilistic quantities. Indeed, if a von Neumann algebra is \emph{stochastically positive}, its correlation functions can be computed as expectation values of a random process realised on a probability space $(\Q,\mu)$ \cite{Klein1981b,Klein1981}. 
In particular, a suitable class of (possibly unbounded) perturbations of the relativistic bosonic field is in correspondence with a class of perturbations of Nelson's Euclidean free field.

The classical relativistic bosonic field at inverse temperature $\beta$ is described as a solution with compact support of the massive Klein-Gordon equation on Minkowski spacetime,
\[
(\partial_t^2 - \Delta_x+m^2)\phi =0 \ .
\]
By the Leray theorem, the space of solutions is in one-to-one correspondence with the linear space of Cauchy data $(q,p) \ni \mathscr L \equiv C^\infty_c(\Cauchy) \oplus C^\infty_c(\Cauchy)$ by $\phi|_\Cauchy=q$, $\partial_t\phi|_\Cauchy = p$, where $\Cauchy$ is a Cauchy surface isometric to $\mathbb R^{d-1}$. 

Its quantization is described by the Weyl von Neumann algebra $\vNb \subset \mathcal B(\mathcal H)$ generated by the Weyl unitaries $W_\pi(f)$, with $f \equiv \h^{1/2} q + i \h^{-1/2}p \in \mathfrak h$, where $\h \equiv (-\Delta_x + m^2)^{1/2}$ and $\mathfrak h$ is the one-particle Hilbert space. The cyclic and separating vector $\Omega \in \mathcal H$ is assumed to be the vector representative of a Gaussian KMS state on $\vNb$ at inverse temperature $\beta$. In particular, this means that $e^{isL}\Omega = \Omega$ for every $s\in \mathbb R$, where $L$ is the \emph{Liouvillean} associated with the free dynamics, described by a strongly continuous one-parameter group of automorphisms $\alpha^s : \vN \to \vN$. For analytic elements of the free dynamics $A_i \in \vNb$ and $-\beta/2 < s_1 <\ldots < s_n <\beta/2$, the \emph{Euclidean Green's functions} are defined by (Def.~\ref{def:euclidean-Greens-functions})
\[
G^E_n(A_1,\ldots,A_n;s_1,\ldots,s_n) \equiv (\Omega,\prod_{j=1}^n\alpha^{is_j}(A_j) \Omega ) \ .
\]

It is possible to show that on the Abelian subalgebra $\vNb^+$ generated by $W_\pi(\mathfrak h_+)$, with $\mathfrak h_+ = \{ f \in \mathfrak h \ | \ f = \h^{1/2}q, \ q \in C^\infty_c(\Cauchy) \}$, for every positive element $A_1,\ldots ,A_n \in \AvN$, the Euclidean Green's functions are positive. The system $(\vNb,\vNb^+,\alpha,\omega)$ is called a \emph{Gaussian stochastically positive KMS system}.

In this system, we consider an operator $K$ affiliated to an Abelian subalgebra $\AvN$. The KMS state for the free dynamics is $\Omega$, and the KMS state for the perturbed dynamics is $\omega_K(A) \equiv \norm{\Omega_K}^{-2} (\Omega_K, A\Omega_K)$, where $\Omega_K = e^{-\beta/2(L+K)}\Omega$.

Using the Klein-Landau correspondence, the relativistic bosonic fields on Minkowski spacetime at finite temperature can be put in correspondence with Nelson's massive Euclidean free field on the thermal cylinder $\Mb \equiv \mathbb S_\beta \times \mathbb R^{d-1}$, defined as the probability space $(\Q,\Sigma,\mu_C)$, where $\Q=\mathscr D'(\Mb)$ and $\mu_C$ is the Gaussian measure with covariance $C(f,g) = (f, (-\partial_\tau^2-\Delta_x +m^2)^{-1} g)$ for every $f, \ g \in \mathscr D(\Mb) \equiv C^\infty_c(\Mb)$. The abelian subalgebra $\vNb^+$ is then in correspondence with a distinguished sub-$\sigma-$algebra $\Sigma_0$. Together with a strongly continuous group of measure-preserving automorphisms $U(\tau)$ (implementing the Euclidean time evolution) and a reflection operator $R$, the tuple $(\Q,\Sigma,\Sigma_0,U(\tau),R,\mu_C)$ is a \emph{generalised path space}.

Our first result gives a probabilistic formula for the relative entropy between a state $\Omega$ which is KMS with respect to the free dynamics, and the perturbed $\Omega_K$ state, KMS with respect to the perturbed dynamics. By the Klein-Landau correspondence, $K$ can be identified with a measurable function $K:\Q\to \mathbb R$. We assume that $K \in L^{2+\epsilon}(\Q,\Sigma_0,\mu_C)$ for $\epsilon >0$ and $e^{-\beta K} \in L^1(\Q,\Sigma_0,\mu_C)$, and define $V \equiv \int_{-\beta/2}^{\beta/2} U(\tau) K \dd \tau$. Using a Feynman-Kac formula, we get (see Prop.~\ref{prop:relative-entropy-probabilistic})
\[
S(\omega | \omega_K) = 
\E_\mu\left ( V \right ) + \log \E_\mu(e^{-V}) \ .
\]
Up to the normalisation of the perturbation, the Araki relative entropy corresponds to the classical functional $\Ent_\mu$. 

The probabilistic methods of Bauerschmidt, Bodineau, and Dagallier can then be transported back to QFT, yielding a bound for the Araki relative entropy if the interaction $V$ satisfies the multiscale Bakry-\Emery criterion.

However, Bauerschmidt, Bodineau, and Dagallier formulate the Polchinski equation initially in a finite-dimensional probability space $\Q \subseteq \mathbb R^N$, introducing a lattice cutoff. Proving that the LSI holds uniformly in $N$ gives a well-defined continuum limit $N \to \infty$.

In the case of a stochastically positive relativistic bosonic field, the corresponding probability space is $\mathscr D'(\Mb)$. This requires a genuinely infinite-dimensional formulation of the Polchinski equation. The formulation of potential theory in abstract Wiener spaces introduced by Gross \cite{Gross1967,Gross1967potential} provides a reasonably simple framework to give a non-perturbative derivation of the RG flow in infinite dimensions.

\subsection{Polchinski equation in infinite dimensions and the entropy bound}

The first step is a refinement of the construction of the massive Euclidean free field, so that its probability space is not $\mathscr D'(\Mb)$, but rather a smaller Hilbert space $\Ba$ such that $\mu_C(\Ba) = 1$.

More precisely, the massive Euclidean free field can be realised as an abstract Wiener space $(\Hi_C, \Ba,\mu_C)$, in terms of the Cameron-Martin space 
\[
\Hi_C \equiv \left ( C^{1/2}L^2(\Mb), \ ( \cdot \ , \cdot )_{\Hi_C} \right ) \ ,
\]
and the weighted Sobolev space
\[
\Ba \equiv \left \{ u \in \mathscr D'(\Mb) \ | \ \norm{w_\rho C^a u }_{L^2(\Mb)} < \infty \right  \} \ , \quad w_\rho: \mathbb R^{d-1} \ni x \mapsto \left (1+ \abs{x}^2 \right )^{-\rho/2} \ ,
\]
for some fixed $\rho > \frac{d-1}{2}$ and $a > \frac{d}{4}$. A distinguished sub-$\sigma-$algebra $\Sigma_0 \subset \Sigma$, where $\Sigma$ is the $\sigma-$algebra of $\Ba$, is generated by sharp-time random fields $\varphi(t, \ \cdot \ )$. Together with the operator $R$ implementing reflections in Euclidean time, $(\Ba,\Sigma,\Sigma_0,U(\tau), R, \mu_C)$ is the generalised path  space associated to the Euclidean massive free field.

The rich structure of an abstract Wiener space permits to define Fréchet derivatives in $\Ba$ and $\Hi_C$ directions. In fact, given a function $F : \Ba \ni \varphi \mapsto F(\varphi)$, we can consider the function $G: \Hi_C \ni h \mapsto F(\varphi+h)$ for every $\varphi \in \Ba$, and the $\Hi_C-$derivative of $F$ is the directional derivative of $G$ evaluated at $0$. It is a result by Gross that if a function is twice $\Ba-$differentiable, then, its second $\Ba-$derivative restricted to $\Hi_C$ is a trace-class operator; this leads to the definition of Gross' functional Laplacian \cite{Gross1967potential}
\[
\Delta_C^{\mathrm G} F \equiv \Tr_{\Hi_C} D^2 F
\ .
\]
The idea behind the RG is a decomposition of the covariance, given by a one-parameter family of positive, symmetric maps $\dot C_t :\mathscr D(\Mb) \to \mathscr D'(\Mb)$, such that $C_t = \int_0^t \dot C_s \dd s$ is an effective covariance, with $C_\infty = C$. 

To generalise $\dot C_t$ to a covariance operator on $\Ba^*$, we assume that $\dot C_t$ has a positive bounded realisation in $L^2(\Mb)$. Using the inclusion $j: L^2(\Mb) \to \Ba$, we assume that $\dot Q_t \equiv j \dot C_t j^*$ is a trace-class operator on $\Ba$, and $t \mapsto \dot Q_t$ is continuous in trace-class norm. Then, $\dot Q_t$ is the infinitesimal covariance operator on $\Ba^*$, and there is a family of Gaussian measures $\p_{s,t}$ on $\Ba$ with covariance $Q_t-Q_s\equiv \int_s^t \dot Q_r \dd r$.

The class of functions for which the Polchinski equation is well-defined is the set $\mathcal F$ of bounded, measurable, $C^3(\Ba)-$functions $F : \Ba \to \mathbb C$, with $F''$ uniformly $\Ba$ continuous in the weak operator topology and uniformly bounded.  $\Hess F$ will denote the restriction of $F''$ to $L^2(\Mb)-$directions.

On this space, we can define the Gaussian operators
\[
P_{s,t}F(\phi) \equiv \int_\Ba F(\varphi + \phi) \dd \p_{s,t}(\varphi)
\]
form a time-dependent Markov semigroup.

The key result to derive the Polchinski equation in an abstract Wiener space is a generalisation of Proposition 8 in Gross' paper on potential theory on Hilbert spaces \cite{Gross1967potential} (see Prop. \ref{prop:generators}). In fact, we show that the Gaussian operators $P_{s,t}$ satisfy
\[
\frac{\partial}{\partial t} P_{s,t}F(\varphi) = \frac{1}{2}\Delta_{\dot C_t} P_{s,t} F(\varphi) \ ,
\]
where $\Delta_{\dot C_t}$ is a \emph{functional laplacian} defined by
\[
\Delta_{\dot C_t} F(\varphi) \equiv \Tr_{L^2(\Mb)} \dot C_t \Hess F(\varphi) \ .
\]
This result allows one to import standard RG techniques into the abstract Wiener space of the Euclidean free field: given a bare interaction $V_0 \in \mathcal F$ and bounded from below, the \emph{effective interaction}
\[
V_t(\phi) \equiv - \log\left[ P_{0,t}e^{-V_0}(\phi) \right]
\]
satisfies the Polchinski equation (see Prop. \ref{prop:polchinski}) \cite{Polchinski1983}:
\[
\frac{\partial}{\partial t} V_t = \frac{1}{2}\Delta_{\dot C_t} V_t - \frac{1}{2}\dot C_t(\nabla V_t, \nabla V_t) \ .
\]

Now that the tools of Bauerschmidt, Bodineau, and Dagallier have been reformulated in the infinite-dimensional context of stochastically positive QFT, their derivation of the multiscale Bakry-\Emery criterion carries over to the Araki relative entropy. In fact, since the Araki relative entropy computes the entropy of a perturbation with respect to the free state, in this case their Polchinski semigroup, defined, in the present notation, as \cite[Definition 3.2]{Bauerschmidt2024}
\[
P_{s,t}^V F(\phi) \equiv e^{V_t(\phi)} \int_\Ba e^{-V_s(\varphi+\phi)}F(\varphi + \phi) \dd \p_{s,t} \ ,
\] 
reduces to the Gaussian semigroup. In particular, the RG decomposition of the Araki relative entropy takes the form (Prop. \ref{prop:entropy-decomposition})
\[
S\left(\omega | \omega_K\right) = \frac12 \int_0^\infty \dd t P_{t,\infty}\left[\dot C_t(\nabla V_t, \nabla V_t) \right](0) \ .
\]
A Bochner formula then leads to a probabilistic bound of the Araki relative entropy in the same form as the classical LSI, 
\[
S\left(\omega | \omega_K\right) \leq \frac{1}{2\gamma}\int_\Ba \dot C_0(\nabla V_0,\nabla V_0)(\phi) \dd p(\phi) \ .
\]
The last step consists in translating back the RHS of the bound to an expression in terms of the von Neumann algebra $\vNb$.

Let $D_f:D(D_f) \to R(D_f)$ be the derivation acting on elements of $\vNb$ by
\[
D_f(A) = i [\phi_\omega(f),A ] \ \forall \ A \in D(D_f) \ ,
\]
where $\phi_\omega(f)$ is the generator of the Weyl unitary $W_\omega(f) \in \vNb$. Let $\{p_j\}_{j\geq 1}$ be an orthonormal basis in $L^2(\mathbb R^{d-1})$, and set $f_j \equiv i \h^{-1/2}p_j$.

\begin{theorem}[Multiscale Bakry-\Emery criterion]\label{thm:entropy-inequality}
Consider the Gaussian stochastically positive KMS system $(\vNb,\vNb^+, \alpha,\omega)$ associated to the relativistic bosonic field at inverse temperature $\beta$, together with the $(\beta, \alpha^K)-$KMS state $\omega^K$, and let $(\Hi_C,\Ba,\p)$ be the corresponding abstract Wiener space describing the massive Euclidean free field. 

In the setup described above, if there are real numbers $\dot \lambda_t$ such that
\begin{equation}\label{eq:BE-criterion}
\dot C_t \Hess V_t \dot C_t- \frac{1}{2}\ddot C_t \geq \dot \lambda_t \dot C_t \ ,
\end{equation}
as quadratic forms on $L^2(\Mb)$,
define $\lambda_t \equiv \int_0^t \dot \lambda_s \dd s$ and
\[
\frac{1}{\gamma} \equiv \int_0^\infty e^{-2\lambda_t} \dd t \ .
\]
Then,
\begin{equation}\label{eq:entropy-inequality}
S \left (\omega \mid \omega_K \right ) \leq \frac{1}{2\gamma} \beta \sum_{i,j}\int_{-\beta/2}^{\beta/2}\dd \tau (p_i ,\dot C_0(\tau) p_j)_{L^2(\mathbb R^{d-1})} G^E_2 \left (D_{f_i}K,D_{f_j}K;\tau,0 \right )  \ .
\end{equation}
\end{theorem}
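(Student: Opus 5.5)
The proof has three stages. First, express the relative entropy probabilistically. Second, run the Bauerschmidt--Bodineau--Dagallier multiscale Bakry--\Emery argument along the Gaussian Polchinski flow. Third, translate the resulting Dirichlet form back into Euclidean Green's functions of the derivations $D_{f_j}K$.

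\textbf{Stage one: the multiscale decomposition.} By Prop.~\ref{prop:relative-entropy-probabilistic} we have $S(\omega|\omega_K)=\E_\mu(V)+\log\E_\mu(e^{-V})$. With $V_0=V$ and $V_t=-\log P_{0,t}e^{-V_0}$, Prop.~\ref{prop:entropy-decomposition} gives
\[
S(\omega|\omega_K)=\tfrac12\int_0^\infty P_{t,\infty}\big[\dot C_t(\nabla V_t,\nabla V_t)\big](0)\,\dd t.
\]
It then suffices to show that $g_t\equiv P_{t,\infty}[\dot C_t(\nabla V_t,\nabla V_t)](0)$ satisfies $g_t\le e^{-2\lambda_t}g_0$. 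Integrating this in $t$ produces the prefactor $\frac{1}{2\gamma}$ in front of $g_0=\int_\Ba \dot C_0(\nabla V_0,\nabla V_0)\,\dd p$.

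\textbf{Stage two: the Bochner estimate.} Differentiate $g_t$ in $t$, using Prop.~\ref{prop:generators} (so that $\partial_t P_{t,\infty}=-\tfrac12 P_{t,\infty}\Delta_{\dot C_t}$) together with the Polchinski equation of Prop.~\ref{prop:polchinski} for $\partial_t\nabla V_t$. The commutator computation is the standard Bochner identity:
\[
\tfrac12\Delta_{\dot C_t}\big(\dot C_t(\nabla V_t,\nabla V_t)\big)-\dot C_t(\nabla V_t,\nabla\partial_t V_t)
=\Tr\big(\dot C_t\Hess V_t\dot C_t\Hess V_t\big)+\dot C_t\Hess V_t\dot C_t(\nabla V_t,\nabla V_t),
\]
where the first term on the right is nonnegative. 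Adding the contribution $\ddot C_t(\nabla V_t,\nabla V_t)$ from differentiating the covariance, one gets
\[
\frac{\dd}{\dd t}g_t\le -2P_{t,\infty}\big[(\dot C_t\Hess V_t\dot C_t-\tfrac12\ddot C_t)(\nabla V_t,\nabla V_t)\big](0)\le-2\dot\lambda_t g_t
\]
by \eqref{eq:BE-criterion}, and Gr\"onwall then gives the claim. The main obstacle is making this rigorous in infinite dimensions. The differentiation requires $V_t\in\mathcal F$ (boundedness and $C^3$), which fails for a general $V=\int U(\tau)K\,\dd\tau$. The traces must be finite, which I would obtain from the trace-class assumption on $\dot Q_t$. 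And the limit $t\to\infty$ must be controlled. I would first prove everything for bounded smooth cylinder approximations of $V$ and then pass to the limit using $K\in L^{2+\epsilon}$ and $e^{-\beta K}\in L^1$, via lower semicontinuity of relative entropy and dominated convergence on the right-hand side.

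\textbf{Stage three: back to the algebra.} Expand $\dot C_0(\nabla V,\nabla V)$. Since $V=\int_{-\beta/2}^{\beta/2}U(\tau)K\,\dd\tau$ with $K$ being $\Sigma_0$-measurable, the gradient of $U(\tau)K$ is supported at Euclidean time $\tau$ and is given spatially by the derivative of $K$ in direction $p_j$. Inserting the orthonormal basis $\{p_j\}$ turns $\dot C_0(\nabla V,\nabla V)$ into a sum over $i,j$ of the kernels $(p_i,\dot C_0(\tau-\tau')p_j)$ times $U(\tau)\partial_{p_i}K\cdot U(\tau')\partial_{p_j}K$. Time-translation invariance of $\mu_C$ then reduces the double time integral to $\beta\int\dd\tau$ with relative time $\tau$. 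Under the Klein--Landau correspondence, the derivative of $K$ in direction $p_j$ matches $D_{f_j}K$ with $f_j=i\h^{-1/2}p_j$, because $[\phi_\omega(f),\cdot]$ acts on the abelian algebra $\vNb^+$ as a derivative through the canonical commutation relations. The Feynman--Kac formula identifies $\E_\mu(U(\tau)F\,G)$ with $G^E_2(F,G;\tau,0)$. This yields \eqref{eq:entropy-inequality}.
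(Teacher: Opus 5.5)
Your proposal follows the paper's proof. It uses the same probabilistic representation and RG decomposition of $S(\omega|\omega_K)$ (Props.~\ref{prop:relative-entropy-probabilistic} and~\ref{prop:entropy-decomposition}), the same Bochner formula combined with the criterion \eqref{eq:BE-criterion}, and the same translation back via $D_{f_p}K=\iota((\nabla K,p))$, Euclidean time-translation invariance and the Klein--Landau identity. The only repackaging is in the Gr\"onwall step. You apply it to the integrated quantity $g_t=P_{t,\infty}[\dot C_t(\nabla V_t,\nabla V_t)](0)$. The paper instead first proves the pointwise bound $(\nabla V_t)^2_{\dot C_t}\le e^{-2\lambda_t}P_{0,t}(\nabla V_0)^2_{\dot C_0}$ via the interpolation $\psi(s)=e^{2\lambda_s-2\lambda_t}P_{s,t}[(\nabla V_s)^2_{\dot C_s}]$, then applies $P_{t,\infty}$ and uses $P_{t,\infty}P_{0,t}=P_{0,\infty}$. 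This is the same computation.

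There are two corrections.

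\begin{itemize}
\item Your displayed commutator identity is off by factors of two. It should read $\tfrac12\Delta_{\dot C_t}(\nabla V_t)^2_{\dot C_t}-2\,\dot C_t(\nabla V_t,\nabla\partial_t V_t)=\Tr(\dot C_t\Hess V_t\dot C_t\Hess V_t)+2\,\dot C_t\Hess V_t\dot C_t(\nabla V_t,\nabla V_t)$. The differential inequality you then state for $g_t$ is nonetheless correct.
\item Drop the regularisation detour.
  \begin{itemize}
  \item It is unnecessary: the setup of the theorem already assumes $V_0\in\mathcal F$ and bounded from below.
  \item As sketched, it would not work anyway. The criterion \eqref{eq:BE-criterion} is a hypothesis on the flow $V_t$ of $V_0$ itself, and it is not inherited by the flows of bounded cylinder approximants. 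You therefore could not run the Bakry-\'Emery argument on the approximants with the same $\lambda_t$.
  \end{itemize}
\end{itemize}
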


In the last part of the paper, we show an application of the present framework to the Sine-Gordon model. 
Let $\alpha \in \mathbb R$ and $z \in \mathbb R$ be two real parameters defining the theory and let $\epsilon >0$ be a UV regularisation parameter. Let $g \in C^\infty_c(\mathbb R)$, $0\leq g(x) \leq 1$ for every $x\in \mathbb R$. The \emph{Sine-Gordon Hamiltonian} is (Def.~\ref{def:sine-gordon-hamiltonian})
\[
K_{0}(\varphi)  \equiv z e^{\frac{\alpha^2}{2}C_{\epsilon,T}(0,0)}\int_{\mathbb R} \dd x g(x) \cos(\alpha \varphi_\epsilon(0,x))   \ .
\]

The interacting equilibrium state $\omega_K$ at finite temperature in Minkowski spacetime has been recently obtained by Bahns, Pinamonti, and Rejzner \cite{Bahns2021}, in the ultraviolet finite regime $\alpha^2 < 4\pi$, and therefore we also restrict to this regime. Then, an adaptation of the methods of Brydges and Kennedy \cite{Brydges1987} and Bauerschmidt and Bodineau \cite{BauerschmidtBodineau2020} to the continuum case shows that the interacting state of the Lorentzian, massive Sine-Gordon model satisfies the multiscale Bakry-\Emery criterion, so we can give a quantitative RG bound for the Araki relative entropy between $\omega_K$ and the free state for the bosonic field $\omega$.

\begin{theorem}\label{thm:entropy-inequality-sine-gordon}
For $\alpha^2 < 4 \pi$ and every $\beta, \ m$ there is a $z$ small enough such that the interacting equilibrium state for the Lorentzian, massive Sine-Gordon model satisfies the multiscale Bakry-\Emery criterion, uniformly in $\epsilon$ and $\supp g$, and, therefore, it satisfies the entropy inequality Eq. \eqref{eq:entropy-inequality} with respect to the free equilibrium state for the massive scalar field.
\end{theorem}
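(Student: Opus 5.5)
The plan is to check the hypothesis \eqref{eq:BE-criterion} of Theorem~\ref{thm:entropy-inequality} directly for the Sine-Gordon effective potential, following Bauerschmidt--Bodineau. Once that is done, the entropy inequality \eqref{eq:entropy-inequality} follows from Theorem~\ref{thm:entropy-inequality}.

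\emph{Choice of decomposition and admissibility.} I would take a heat-kernel decomposition of the thermal covariance on $\Mb$,
\[
\dot C_t = e^{-t(-\partial_\tau^2-\Delta_x+m^2)},
\]
possibly with an $\epsilon$-dependent UV cutoff, so that $C_\infty = C_{\epsilon}$ is the regularised covariance. For this choice,
\[
\ddot C_t = -(-\partial_\tau^2-\Delta_x+m^2)\dot C_t \le -m^2\dot C_t .
\]
Hence $-\tfrac12\ddot C_t \ge \tfrac{m^2}{2}\dot C_t$, and the free part already contributes a positive rate. Next I verify that $V_0 = \int_{-\beta/2}^{\beta/2} U(\tau)K_0\,\dd\tau$ lies in the class $\mathcal F$ and is bounded from below. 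Because of the cutoffs $\epsilon$ and $g$, $V_0$ is a bounded smooth function of finitely many smeared field values, so it is bounded. The Wick factor $e^{\alpha^2 C_{\epsilon,T}(0,0)/2}$ is finite for fixed $\epsilon$. Together with the integrability conditions on $K$, this makes Prop.~\ref{prop:polchinski} applicable.

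\emph{Hessian bound.} The required estimate is
\[
\dot C_t \Hess V_t \dot C_t \ge -\theta_t \dot C_t \quad \text{with } \int_0^\infty \theta_t \,\dd t \text{ small}.
\]
I would follow Brydges--Kennedy and write $V_t$ as a convergent Mayer/cluster expansion in $z$:
\[
V_t = \sum_{n\ge1} \frac{z^n}{n!}\sum_{\sigma\in\{\pm1\}^n}\int \tilde V^{(n)}_t(x_1,\dots,x_n,\sigma)\, e^{i\alpha\sum_j\sigma_j\varphi(x_j)}\,\dd x .
\]
Here the coefficients are determined by iterating the Polchinski flow with respect to $C_t$ and are bounded by tree graphs. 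The $n=1$ term is $z\int g\,e^{\frac{\alpha^2}{2}(C_\infty - C_t)(0,0)}\cos(\alpha\varphi_t)$, which is controlled uniformly in $\epsilon$ because the renormalisation constant cancels $e^{-\frac{\alpha^2}{2}C_t(0,0)}$. Its Hessian is bounded in $L^\infty$ norm by
\[
\alpha^2 |z|\, e^{\frac{\alpha^2}{2}(C_\infty - C_t)(0,0)} \lesssim |z|\,L_t^{\alpha^2/4\pi}.
\]
The condition $\alpha^2<4\pi$ is exactly what makes this weight, sandwiched between the two factors $\dot C_t$, integrable in $t$. The tree-graph bounds give the same summability for the higher orders once $|z|$ is small.

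Each term bounds the kernel of $\Hess V_t$ as an operator on $L^2(\Mb)$ by a function of $x$ times $g$. By Schur's test, and since $\|g\|_\infty\le1$, the bound is uniform in $\supp g$. Setting $\dot\lambda_t = m^2/2 - \theta_t$ then gives
\[
1/\gamma \le \int_0^\infty e^{-m^2 t + 2\int_0^t\theta}\,\dd t < \infty,
\]
uniformly in $\epsilon$ and $g$.

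\emph{Main obstacle.} The hard part is the uniform convergence of the cluster expansion for $V_t$ in the continuum, on the cylinder $\mathbb S_\beta\times\mathbb R$, uniformly in $\epsilon$. The short-distance singularity of $e^{\alpha^2 C_t(x,y)}$ for $\sigma_i=-\sigma_j$ pairs has to be integrable. This is where $\alpha^2<4\pi$ is needed in $d=2$, since $e^{\alpha^2 C}\sim |x|^{-\alpha^2/2\pi}$ must be integrable against $\dot C_t$ factors. I would first try the Brydges--Kennedy bound of the tree-graph sum by products of $\|\dot C_s\|_{L^1}$ and $\sup e^{\alpha^2 C}$, with the thermal compactification affecting only constants depending on $\beta$.
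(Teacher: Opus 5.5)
Your proposal follows essentially the same route as the paper. It uses a heat-kernel decomposition of $(-\partial_\tau^2-\Delta_x+m^2)^{-1}$ so that $-\tfrac12\ddot C_t$ supplies the mass term, then the Brydges--Kennedy/Bauerschmidt--Bodineau tree bound on the Fourier coefficients of $V_t$ controlled by $M_{\epsilon,t}=\int_0^t\norm{\dot u_{\epsilon,s}}e^{\alpha^2(C_{\epsilon,t}-C_{\epsilon,s})(0,0)}\dd s$ (bounded uniformly in $\epsilon$ via Lemma~\ref{lemma:key-estimates}), and finally an operator-norm Hessian bound of order $\abs{z_{\epsilon,t}}$ integrated against the exponential decay of $\norm{\dot C_t}$.

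One small correction: the first-order weight $t^{-\alpha^2/8\pi}$ is already integrable for $\alpha^2<8\pi$. So the threshold $\alpha^2<4\pi$ enters through the convergence of the expansion, i.e.\ the finiteness of $\int_0^t (t/s)^{\alpha^2/4\pi}\dd s$ (your dipole-integrability remark), and not through that time integral.
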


\subsection{Bibliographical remarks and outlook}

The Polchinski equation is by now a standard tool in RG methods \cite{Salmhofer1999}. It is a formulation of the Wilsonian RG~\cite{Wilson73} in terms of a continuum flow between RG scales, and it has been widely used in particular for proofs of perturbative renormalisability of various interacting QFTs \cite{Keller1992,Keller1993}. One of the first applications of the Polchinski equation in mathematical physics was the derivation of rigorous bounds on the Mayer expansion by Brydges and Kennedy \cite{Brydges1987}, which we use in Prop.~\ref{prop:fourier-estimates}.

However, it is usually formulated by introducing a lattice cutoff, so that the probability space becomes finite-dimensional, and the proofs of renormalisability are extended to the continuum relying on estimates which are uniform in the cutoff parameter.

Barashkov and Gubinelli have introduced an infinite-dimensional formulation of the RG as a stochastic control problem~\cite{Barashkov2020}; however, it relies on a variational formulation of the effective interaction, rather than a flow equation. Bauerschmidt, Bodineau, and Dagallier's review~\cite{Bauerschmidt2024} discusses in detail the connection between Barashkov and Gubinelli's variational formula and the Polchinski equation. Moreover, Duch introduced a flow equation approach to the renormalisation of stochastic PDEs in the continuum \cite{Duch2022,Duch2025}, but it relies on a different flow equation, and it would be interesting to investigate the connections with his approach. See also~\cite{Ziebell2023} for a continuum formulation of the Wetterich equation in a context similar to the present one.

There are two possible directions to extend the results in this paper. First, it would be interesting to apply the multiscale Bakry-\Emery criterion and Theorem~\ref{thm:entropy-inequality} to different models. Thanks to stochastic positivity and the probabilistic formula for the Araki relative entropy, there are two natural strategies. One may prove stochastic positivity for a Lorentzian model whose Euclidean LSI is known, or establish an LSI for an already stochastically positive model.

A first, immediate generalisation would be to the Sine-Gordon model up to the threshold $\alpha^2 < 6\pi$, for which Bauerschmidt and Bodineau \cite{BauerschmidtBodineau2020} already proved the result in the Euclidean setting. This extension requires the construction of the interacting state for the Lorentzian Sine-Gordon model, extending the work of Bahns, Pinamonti, and Rejzner~\cite{Bahns2021}. 

Another important class of models would be $P(\varphi)_2$ and $\varphi^4$. Indeed, Bauerschmidt and Dagallier proved the corresponding LSIs for $\varphi^4_{2,3}$~\cite{Bauerschmidt2023}. In contrast to the Sine-Gordon model, in this case the Lorentzian $P(\varphi)_2$ model has been constructed using stochastic positivity by Gérard and Jäkel~\cite{Gerard2005,Gerard2005b}, so that proving the classical LSIs would immediately translate to an Araki relative entropy bound. It would also be interesting to derive RG bounds for the Araki relative entropy for non-relativistic Bose gases, with applications to Bose-Einstein condensation, in the context introduced by Galanda and Pinamonti \cite{Galanda2025,Galanda2026}.

A second direction of research consists in deriving consequences for the Araki relative entropy bound. Just as the LSI gives quantitative control on the relaxation of stochastic dynamics, it would be interesting to investigate whether Theorem~\ref{thm:entropy-inequality} can be used to obtain quantitative control on the return to equilibrium for open quantum systems~\cite{Derezinski2026}, and put in relation to entropic fluctuation theorems~\cite{Benoist2025}. Moreover, it would be interesting to investigate whether Theorem~\ref{thm:entropy-inequality} can be related to other examples of Araki relative entropy bounds, such as the Bekenstein bound \cite{Bekenstein1980,Hollands2025,Hollands2026,Longo2024,LongoXu2018}. 

Finally, it would be interesting to investigate connections of the RG entropy bound with non-commutative LSIs~\cite{Olkiewicz1999}, which are known to imply spectral gap inequalities~\cite{Carbone2015} and exponential decay of relative entropy~\cite{Carbone2014}, through non-commutative Dirichlet forms~\cite{Albeverio1977,Cipriani1997,Cipriani2003}.

\section{Stochastically positive KMS systems and perturbations}\label{sec:Weyl}

We give here a summary on the construction of the GNS representation of the Weyl $C^*-$algebra as a stochastically positive KMS system, its states and GNS representations, and on imaginary Green's functions. Main references for this section are the standard textbook by Bratteli and Robinson~\cite{BratteliRobinson-1,BratteliRobinson-2}, the classical work of Klein and Landau~\cite{Klein1981} and the works on stochastically positive KMS Weyl systems of Gérard and Jäkel~\cite{Gerard2005,Gerard2005b} and Gielerak, Jakóbczyk, and Olkiewicz~\cite{Gielerak1998}.

\subsection{Weyl algebra}\label{sec:quasi-free-systems}

Consider a Hilbert space $(\mathfrak h, ( \ \cdot \ , \ \cdot \ )_{\mathfrak h})$, together with a symplectic form $\sigma(f,g) = \Im( f,g )_{\mathfrak h}$ (a bilinear, antisymmetric, and non-degenerate map). 

The \emph{Weyl unitaries} are abstract unitaries $W(f)$, indexed by $f \in\mathfrak h$, with product
\[
W(f)W(g) = e^{-\frac{i}{2}\sigma(f,g)} W(f+g) \ ,
\]
and involution $W(f)^* = W(-f)$. The complex algebra generated by $W(f)$ is a $*-$algebra $\mathcal W_0(\mathfrak h, \sigma)$. A $*-$norm on the complex algebra is 
\[
\norm{\sum_{n=1}^N c_n W(f_n)}_1 = \sum_{n=1}^N \abs{c_n} \ ,
\]
and the completion $\overline{\mathcal W_0(\mathfrak h, \sigma)}^{\norm{ \ \cdot \ }_1}$ is a unital Banach $*-$algebra.

A state on an algebra is a normalised, positive functional $\omega:\mathcal A \to \mathbb C$. The completion of $\overline{\mathcal W_0(\mathfrak h, \sigma)}^{\norm{ \ \cdot \ }_1}$ with respect to the $C^*-$norm
\[
\norm{a} = \sup_{\omega} \sqrt{\omega(a^* a)} 
\]
defines the Weyl $C^*-$algebra $\CCR$.

A state is \emph{regular} if the map
\[
\mathbb R \ni t \mapsto \omega\left(W(tf)\right)
\]
is continuous for all $f \in \mathfrak h$. We will always work with regular states in the following.

\subsection{Gaussian and KMS states}
Let $\mathbb R \ni s \mapsto T^s$ be a one-parameter unitary group of symplectic automorphisms on $\mathfrak h$, with self-adjoint generator $\h$. The operator $\h$ is called the \emph{one-particle Hamiltonian}. Then,
\[
\alpha^s : W(f) \mapsto W(T^s f) = W(e^{is\h}f)
\]
defines a strongly continuous one-parameter unitary group of automorphisms of the Weyl algebra $\CCR$. This allows to define a free dynamics in the algebra, even though the free Hamiltonian is usually not an element of $\CCR$. 

A state is $\alpha-$invariant if $\omega \circ \alpha^s = \omega$ for every $s$. Equilibrium states are characterised by the stronger \emph{KMS condition}.

\begin{definition}[KMS state, separability and continuity]\label{def:equilibrium-state}
A state $\omega$ is a $(\alpha,\beta)-$KMS state or \emph{equilibrium state} for the dynamics $\alpha$ at inverse temperature $\beta$ if, for all $a, \ b \in \CCR$, there is a function $\mathbb C \ni z \mapsto f_{a,b}(z)$ such that
\begin{enumerate}
\item \label{property-continuity}  $f_{a,b}$ is continuous and bounded for $0\leq \Im z \leq \beta$;
\item $f_{a,b}$ is holomorphic in $0 < \Im z<\beta$; and
\item for real $s$,
\[
f_{a,b}(s) = \omega \left (a \alpha^s(b) \right ) \ , \quad f_{a,b}(s+i\beta) = \omega\left(\alpha^s(b)a\right ) \ .
\]
\item The above conditions characterise KMS states.  Furthermore, we assume that \label{property-separability} the algebra $\CCR$ can be equipped with a topology $\tau$, so that $(\CCR,\tau)$ is a separable topological space, and $\omega_\beta(ab)$ is jointly continuous in $a$ and $b$ in the product topology on $\CCR \times \CCR$.
\end{enumerate}
\end{definition}
The last condition is necessary for infinite systems in order to have a GNS representation on a separable Hilbert space \cite{Birke2002}.

It directly follows from the definition that a $(\alpha,\beta)-$KMS state with $\beta \in \mathbb R \setminus \{0\}$ is an $\alpha-$invariant state.

Consider a real, symmetric, positive, bilinear map $H : \mathfrak h \times \mathfrak h \to \mathbb R$ satisfying
\[
\abs{\sigma(f,g)} \leq 2 \sqrt{H(f,f)}\sqrt{H(g,g)} \ \forall f, \ g \in \mathfrak h \ .
\]

A \emph{Gaussian} (or \emph{quasifree}) state on $\CCR$ is uniquely determined by
\[
\omega(W(f)) = e^{- \frac{1}{2} H(f,f)} \ .
\]

The covariance $H$ for a Gaussian $(\alpha, \beta)-$KMS state can be written in terms of a bounded operator on $\mathfrak h$ as \cite[Theorem 3.1]{Gielerak1998}
\[
H(f,g) = \left \langle f, \frac{1+e^{-\beta \h}}{1-e^{-\beta \h}} g \right \rangle \ .
\]

\subsection{GNS representation and imaginary Green's functions}

The GNS construction of $(\CCR,\omega)$ is a representation on the algebra of bounded operators on the Hilbert space $\mathcal H_\omega$, $\pi_\omega : \CCR \to \pi_\omega(\CCR) \subset \mathcal B(\mathcal H_\omega)$, together with a distinguished vector $\Omega \in \mathcal H_\omega$ such that 
\[
\omega(a) = (\Omega, \pi_\omega(a) \Omega) \ \forall \ a \in \CCR \ ,
\]
where the brackets denote the inner product in $\mathcal H_\omega$. The von Neumann algebra representing the CCR relations is $\vN \equiv \pi_\omega(\CCR)''$.

In the following, by a slight abuse of notation we denote $\omega: \vN \to \mathbb C$ the algebraic state on the von Neumann algebra obtained by $\omega(A) \equiv (\Omega, A \Omega)$ for all $A \in \vN$. Moreover, we denote $W_\omega(f) \equiv \pi_\omega(W(f))$.

The GNS representation $(\mathcal H_\omega, \pi_\omega, \Omega_\omega)$ of $(\CCR, \omega)$ is called \emph{regular} if $\omega$ is regular.
If the representation is regular, by Stone's theorem we can define the \emph{field operator} as the generator of the Weyl unitary operator,
\[
\phi_\omega(f) \equiv \frac{1}{i}\eval{\frac{\dd}{\dd t}W_\omega(tf)}_{t=0} \ .
\]

If property~\ref{property-separability} in Def.~\ref{def:equilibrium-state} holds for a state $\omega$, then the Hilbert space $\mathcal H_\omega$ obtained by the GNS construction of $(\CCR, \omega)$ is separable, and $\Omega_\omega$ is a cyclic and separating vector.

From now on, we assume that $(\CCR, \omega, \alpha)$ satisfies Properties~\ref{property-continuity} and~\ref{property-separability} in Def. \ref{def:equilibrium-state}, we fix a regular, Gaussian, and equilibrium state $\omega$ at inverse temperature $\beta$, and we consider the associated von Neumann algebra $\vN \equiv \pi_{\omega}(\CCR)''$. The vector representative of $\omega$ in $\Hi_{\omega}$ is the cyclic and separating vector $\Omega$.

In this case, there exists a strongly continuous one-parameter group of unitaries $\{e^{isL}\}_{s \in \mathbb R}$, with a self-adjoint generator $L$, which represents $\alpha$ in $\vN$. In particular, we have
\[
\pi_\omega(\alpha^s(a)) = e^{i sL} \pi_\omega(a) e^{-i s L} \ \forall a \in \CCR \ , \quad e^{i sL} \Omega =\Omega \ .
\]
The self-adjoint generator $L$ is called the \emph{Liouvillean}. In the following, with a slight abuse of notation we denote $\pi_\omega(\alpha^s) =\alpha^s$. If $\omega$ is a $(\alpha,\beta)-$KMS state, the corresponding state on the von Neumann algebra $\omega(A) = (\Omega, A \Omega)$ is also a $(\alpha,\beta)-$KMS state.

For arbitrary $A_i \in \vN$ and $s_i \in \mathbb R^n$, with $i=1,\ldots,n$, we define the \emph{real-time Green's functions}
\[
\omega_n(A_1,\ldots,A_n;s_1,\ldots,s_n) \equiv \omega \left (\prod_{i=1}^n\alpha^{s_i}(A_i)\right ) 
\]
From the KMS condition, a theorem of Araki~\cite{Araki1968} shows that the real-time Green's functions for a KMS state $\omega_n(A_1,\ldots,A_n;s_1,\ldots,s_n)$ can be analytically continued to the functions $\omega^\beta_n(A_1,\ldots,A_n;z_1,\ldots,z_n)$. These are holomorphic in the region
\[
\mathcal I_\beta \equiv \left \{ (z_1,\ldots,z_n) \in \mathbb C^n \ | \ - \frac{\beta}{2} < \Im z_1 < \ldots < \Im z_n < \frac{\beta}{2} \right \}
\]
and continuous on the boundary of $\mathcal I_\beta$. 

\begin{definition}\label{def:euclidean-Greens-functions}
Let $(\vN,\omega)$ be a von Neumann algebra with a regular equilibrium state $\omega$, and let
\[
\mathcal R_\beta \equiv \left \{ (s_1,\ldots,s_n) \in \mathbb R^n \ | \ - \frac{\beta}{2} < s_1 < \ldots < s_n < \frac{\beta}{2} \right \} \ .
\]
The \emph{Euclidean Green's functions} $G^E(A_1,\ldots,A_n;s_1,\ldots,s_n)$ are defined by
\[
G^E(A_1,\ldots,A_n;s_1,\ldots,s_n) \equiv \omega_n(A_1,\ldots,A_n;is_1,\ldots,i s_n) \ , \quad (s_1,\ldots,s_n) \in \mathcal R_\beta \ .
\]
\end{definition}

For a Gaussian state, the two point Green function can be computed explicitly in terms of $H$ and $\sigma$ as
\[
G_2(f,g;s)= \exp{-\frac{1}{2}H(f,f)-\frac{1}{2}H(g,g)-F(g,f;s)} \ ,
\]
where
\[
F(g,f;s)= H(f,T^s g)+ \frac{i}{2}\sigma(f,T^s g) \ .
\]

Since the state is KMS, $F(f,g;t)$ can be analytically continued to a function $F(f,g;z)$, analytic in the strip $0 < \mathrm{Im}z < \beta$ and such that $F(f,g;t+i0) = F(f,g;t)$. The imaginary-time field Green's function is
\begin{equation}\label{eq:gaussian-covariance} 
F^E(f,g;s) \equiv F(f,g;is)  \ ,
\end{equation}
and the two-point Euclidean Green's function is 
\[
G_2^E(f,g;s)= \exp{-\frac{1}{2}H(f,f)-\frac{1}{2}H(g,g)-F^E(g,f;s)} \ .
\]

\begin{proposition}\label{prop:OS-positivity}
\emph{\cite[Proposition 3.1]{Gielerak1998}}
Let $F^E : \mathfrak h \times \mathfrak h \times \mathbb R^+  \to \mathbb R$ be defined by Eq.~\eqref{eq:gaussian-covariance}. Then, for every $f, \ g \in \mathfrak h$ and $s \in [0,\beta]$,
\[
F^E(f,g,s) = F^E(f,g, \beta-s) \ .
\]
Moreover, for every sequence $\{s_n\}_n$, $s_n \in [0,\beta]$, and $\{c_n\}_n \in \mathbb C$,
\[
\sum_{n,m} \bar c_n c_m F^E(f,f;s_n+s_m) \geq 0 \ .
\]
The last condition is called \emph{Osterwalder-Schrader (OS) positivity}. More generally, for every sequence $\{f_n\}_n$, $f_n \in \mathfrak h$,
\[
\sum_{n,m} \bar c_n c_m F^E(f_n,f_m;s_n+s_m) \geq 0 \ .
\]
\end{proposition}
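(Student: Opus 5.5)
The plan is to compute $F^E$ explicitly through the spectral calculus of the one-particle Hamiltonian $\h$, which we take to be positive as in the relativistic case, $\h=(-\Delta_x+m^2)^{1/2}\geq m>0$. Both claims should then follow from symmetries of a scalar kernel and from Gram-matrix positivity.

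First, I rewrite the real-time function
\[
F(g,f;s)=H(f,T^sg)+\tfrac{i}{2}\sigma(f,T^sg)
\]
in terms of the complex inner product on $\mathfrak h$. The covariance is $H(f,g)=\Re\langle f,\coth(\beta\h/2)\,g\rangle$ and the symplectic form is $\sigma(f,g)=\Im\langle f,g\rangle$. Writing $\Re$ and $\Im$ as half-sums and half-differences of $\langle f,\cdot\,g\rangle$ and its complex conjugate, and using
\[
\coth(\beta\h/2)=(1+e^{-\beta\h})(1-e^{-\beta\h})^{-1},
\]
I expect to obtain, up to the precise placement of complex conjugations (which has to be tracked carefully),
\[
F(g,f;s)=\Big\langle f,\frac{e^{is\h}}{1-e^{-\beta\h}}\,g\Big\rangle+\overline{\Big\langle f,\frac{e^{-\beta\h}e^{is\h}}{1-e^{-\beta\h}}\,g\Big\rangle}
=\Big\langle f,\frac{e^{is\h}}{1-e^{-\beta\h}}\,g\Big\rangle+\Big\langle g,\frac{e^{-\beta\h}e^{-is\h}}{1-e^{-\beta\h}}\,f\Big\rangle .
\]
This is the familiar thermal two-point function: a positive-frequency piece with Bose weight $(1-e^{-\beta\h})^{-1}$ and a negative-frequency piece with weight $e^{-\beta\h}(1-e^{-\beta\h})^{-1}$.

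Second, I continue $s\mapsto z$ into the strip $0<\Im z<\beta$. For $z=is$ with $0\le s\le\beta$, the operators $e^{-s\h}$ and $e^{-(\beta-s)\h}$ are bounded by spectral calculus, since $\h\ge0$. Together with the factor $(1-e^{-\beta\h})^{-1}$, which is bounded by $(1-e^{-\beta m})^{-1}$, this gives boundedness and continuity up to the boundary. Holomorphy in the interior follows from dominated convergence in the spectral integral. Hence
\[
F^E(g,f;s)=\Big\langle f,\frac{e^{-s\h}}{1-e^{-\beta\h}}\,g\Big\rangle+\Big\langle g,\frac{e^{-(\beta-s)\h}}{1-e^{-\beta\h}}\,f\Big\rangle .
\]
For the real subspace on which $F^E$ is real-valued, the two terms are exchanged under $s\mapsto\beta-s$ (after using the reality and symmetry of the inner product there), which proves $F^E(f,g;s)=F^E(f,g;\beta-s)$.

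Third, for OS positivity I substitute $s=s_n+s_m$ and factor the exponentials:
\[
e^{-(s_n+s_m)\h}=e^{-s_n\h}e^{-s_m\h},\qquad e^{-(\beta-s_n-s_m)\h}=e^{-(\beta/2-s_n)\h}\,e^{-(\beta/2-s_m)\h}.
\]
Setting $u_n=(1-e^{-\beta\h})^{-1/2}e^{-s_n\h}f_n$ and $v_n=(1-e^{-\beta\h})^{-1/2}e^{-(\beta/2-s_n)\h}f_n$, the quadratic form becomes
\[
\sum_{n,m}\bar c_nc_m F^E(f_n,f_m;s_n+s_m)=\Big\|\sum_n c_nu_n\Big\|^2+\Big\|\sum_n \bar c_n v_n\Big\|^2\geq0,
\]
where the bars and the order of arguments are to be matched to the conventions fixed in the first step.

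I expect the main obstacle to lie in this last step, for two reasons. The first is bookkeeping of complex conjugations, so that each term really is a Gram form. The second is the range of the times: the factorisation needs $s_n\in[0,\beta/2]$ in order that $e^{-(\beta/2-s_n)\h}$ is bounded and $s_n+s_m\in[0,\beta]$. So I would first prove positivity for $s_n\in[0,\beta/2]$, which is what reflection positivity actually requires, and read the stated range accordingly.
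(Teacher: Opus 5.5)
The paper gives no proof of this proposition; it only imports it from \cite{Gielerak1998}. Your explicit spectral computation is therefore a different, self-contained route, and it works. Once $F^E(g,f;s)$ is written (up to a constant) as $\langle f,e^{-s\h}(1-e^{-\beta\h})^{-1}g\rangle+\langle g,e^{-(\beta-s)\h}(1-e^{-\beta\h})^{-1}f\rangle$, factoring the exponentials turns the OS form into a sum of two Gram norms. This holds for arbitrary complex $f_n\in\mathfrak h$, so only the symmetry claim needs a real subspace. Compared with the citation, you give up the abstract setting of Section~2, since you use the quasi-free form and $\h\geq m>0$. That is exactly the case used later, however, and in exchange you get $R_\beta(s)$ and the Gram vectors explicitly.

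Three points should be stated as corrections rather than caveats.

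First, restricting to $s_n\in[0,\beta/2]$ is necessary, not a matter of reading. For $s_n+s_m>\beta$ the left-hand side is undefined. Even with the $\beta$-periodic extension, the choice $s_1=\beta/4$, $s_2=3\beta/4$, $c=(1,-1)$ gives $2\bigl(F^E(f,f;\beta/2)-F^E(f,f;0)\bigr)<0$.

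Second, the symmetry is false for general $f,g\in\mathfrak h$. The KMS relation gives $F^E(f,g;\beta-s)=F^E(g,f;s)$, and at $s=0$ one finds $F^E(f,g;0)-F^E(f,g;\beta)=-i\sigma(f,g)$. So your restriction is required: either to a subspace where $F^E$ is real and hence symmetric, such as $\mathfrak h_\pm$ (the only place the paper uses it), or to $f=g$.

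Third, the discrepancy you anticipate in your first step is a factor of two, not conjugation bookkeeping. With $H(f,g)=\Re\langle f,\coth(\beta\h/2)g\rangle$ and the phase $\frac{i}{2}\sigma$, one actually gets weights $\frac12\coth(\beta\h/2)\pm\frac14$ on the two frequency pieces. The negative-frequency weight tends to $\frac14$ at large $\h$, so it cannot absorb $e^{\tau\h}$, and the continuation into the strip fails for general $f,g$. The KMS hypothesis, which is what makes $F^E$ well defined at all, forces $H(f,g)=\frac12\Re\langle f,\coth(\beta\h/2)g\rangle$ in these conventions. With that choice your formula holds with an overall factor $\frac12$, which is irrelevant for both claims.
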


As an immediate corollary, it follows that 
\begin{gather*}
G^E_2(f,g;s) = G^E_2(f,g;\beta-s) \ , \quad s \in \left [0,\frac{\beta}{2} \right ] \ , \\
\sum_{n,m}\bar c_n c_m G^E_2(f,g;s_n+s_m) \geq 0 \ .
\end{gather*}

A \emph{stochastically positive KMS system} is identified if the von Neumann algebra $\vN$ contains an Abelian subalgebra $\AvN$, which generates $\vN$ under the free dynamics.

For the von Neumann algebra $\vN = \pi_\omega(\CCR)''$, there is a canonical choice for the abelian subalgebra given by what is known as \emph{Abelian splitting} \cite{Gielerak1998}.
\begin{definition}[Abelian splitting]
An \emph{Abelian splitting} of a symplectic space $(\mathscr L,\sigma)$ is a pair $(\mathscr L_+,\mathscr L_-)$ of linear subspaces of $\mathscr L$ such that $\mathscr L_+ + \mathscr L_- = \mathscr L$, and $\sigma(\mathscr L_\pm,\mathscr L_\pm) = 0$. For such a splitting, let $\vN^\pm \subset \vN$ be obtained as the double commutant of the Abelian subalgebras generated by $W_\omega(f_\pm) \ , \ f_\pm \in \mathscr L_\pm$. Then, the system $(\vN, \vN^\pm, \alpha,\omega)$ is \emph{$\vN^\pm-$stochastically positive} if, for every positive element $A_1,\ldots,A_n \in \vN_\pm^+$ and every $s_1,\ldots,s_n \in \mathcal R_\beta$,
\[
G^E(A_1,\ldots,A_n;s_1,\ldots,s_n) \geq 0 \ .
\]
\end{definition}

In the case in which $\mathscr L$ is a Hilbert space $\mathfrak h$, a conjugation on $\mathfrak h$ canonically identifies an Abelian splitting. In fact, consider an abstract conjugation $\Theta$ on $\mathfrak h$ such that $\Theta D(\h) = D(\h)$ and $\Theta \h f = \h \Theta f$ for every $f \in \mathfrak h$. This defines a natural Abelian splitting $\mathfrak h = \mathfrak h_+ + \mathfrak h_-$, with $\mathfrak h_\pm \equiv \{ f \in \mathfrak h \ | \ \Theta f = \pm f \}$. In fact, since $\Theta$ is a complex conjugation, $(f,g)_{\mathfrak h_\pm} \in \mathbb R$, and so it identifies isotropic subspaces. Furthermore, 
\[
F^E(f,g;s) = F^E(g,f;s)  \ \forall \ f, \ g \in \mathscr L_\pm \ , s \in [0,\beta] \ ,
\]
and by Prop.~\ref{prop:OS-positivity} we can extend $s\mapsto F^E(f,g;s)$ to a periodic function with period $\beta$, defined over the entire real line. We denote the extended function by the same symbol.

Moreover, for all finite sequences $f_n \in \mathfrak h_\pm$, $c_1,\ldots,c_n \in \mathbb C$ and  $s_1,\ldots,s_n \in \mathbb R$, $F^E$ is positive-definite:
\[
\sum_{n,m} \bar c_n c_m F^E(f_n,f_m,s_n-s_m) \geq 0 \ .
\]

Since $F^E$ is bounded, symmetric, bilinear, and positive-definite, there exists a bounded and positive operator $R_\beta(s)$ on $\mathfrak h_+$ such that $F^E(f,g;s) = \langle f, R_\beta(s) g \rangle $, and the function $s \mapsto R_\beta(s)$ is positive-definite, OS-positive, and periodic. The operator $R_\beta(s)$ will be identified with the covariance of a Gaussian process under the Klein-Landau correspondence.

 \subsection{Automorphisms of translations}\label{sec:automorphisms-translations}
To obtain the quantum inequality in Thm. \ref{thm:entropy-inequality} we need to introduce an appropriate notion for the infinitesimal derivation with respect to the field operators in the von Neumann algebra. To this end, we can consider the adjoint action of Weyl operators on the algebra $\CCR$. By direct computation, this acts by translations,
\[
W(f) W(g) W(f)^* = e^{-i\sigma(f,g)}W(g) \ .
\]

Then, in a regular GNS representation $\vN$, the adjoint action of Weyl operators generates a strongly continuous unitary group of automorphisms defined by
\[
\beta^s_f: A \mapsto W_\omega(s f) A W_\omega(s f)^* \ .
\]
By Stone's theorem, since $\beta^s_f$ is a strongly continuous group on $\vN$, it is possible to define a derivation $D_f: D(D_f) \to R(D_f)$ by
\[
D_f(A) = \lim_{s \to 0} \frac{1}{s}\left [ \beta^s_f(A) - A \right ] \ ,
\] 
where $D(D_f)$ is a strongly dense subalgebra of $\vN$. Since $\beta^s_f(A) = e^{i s\phi_\omega(f)}Ae^{-i s\phi_\omega(f)}$, then
\[
D_f(A) = i[\phi_\omega(f),A] \ \forall \ A \in D(D_f) \ .
\]

On the von Neumann algebra, the derivation $D_f$ is the analogue of a non-commutative generalisation of a derivative with respect to the field operator.

\subsection{Perturbations}\label{sec:perturbations}

The pair $(\vN, \alpha)$ is an example of a \emph{$W^*-$dynamical system}, and it is well-suited to describe the free dynamics of a system at equilibrium. Now, we study perturbations of the dynamics, and identifies the associated perturbed equilibrium state. 

We can now consider a self-adjoint element $K \in \vN$. If $K$ is bounded, by a well-known theorem of Araki~\cite{Araki1973}, it is possible to associate a time evolution $\alpha_K$, given by a convergent perturbative expansion with respect to the free evolution $\alpha$, and there is a unique $(\alpha_K,\beta)-$equilibrium state, constructed by Araki in terms of the $(\alpha,\beta)-$KMS state for the free dynamics $\alpha$~\cite[Proposition 5.4.1.]{BratteliRobinson-2}. 

More recently, Derezi\'{n}ski, Jak{\v s}i{\'c}, and Pillet proved that the same results hold for an unbounded perturbation affiliated to a von Neumann algebra, under certain conditions \cite{DerezinskiJaksicPillet2003}. We summarise the results relevant for this work in the following theorem (for the proof, see Ref. \cite{DerezinskiJaksicPillet2003}).

\begin{theorem}\label{thm:interacting-equilibrium-state}
Consider the $W^*-$dynamical system $(\vN, \alpha)$, and let $L$ denote the infinitesimal generator of $\alpha$. Let $\Omega$ be the vector representative of the $(\alpha,\beta)-$KMS state. Let $K$ be a self-adjoint operator affiliated to $\vN$, such that $L+K$ is essentially self-adjoint on $\mathcal D(L) \cap \mathcal D(K)$, where $\mathcal D$ denotes the domain of the operator; denote its closure by the same symbol. Let
\[
\alpha^s_K(A) = e^{is(L+K)}Ae^{-is(L+K)} \ \forall \ A \in \vN \ .
\]
Then, $\alpha_K$ is a strongly continuous group of $*-$automorphisms on $\vN$. 

Moreover, if
\[
\norm{ e^{-\frac{\beta}{2}K}\Omega} < \infty \ ,
\]
then, $\Omega \in D(e^{-\beta(L+K)/2})$. Set $\Omega_K = e^{-\beta(L+K)/2}\Omega$, and
\[
\omega_K(A) = \frac{(\Omega_K, A \Omega_K)}{(\Omega_K, \Omega_K)}  \ \forall \ A \in \vN  \ .
\]
Then, $\Omega_K$ is cyclic and separating and $\omega_K$ is a $(\alpha_K,\beta)-$KMS state.
\end{theorem}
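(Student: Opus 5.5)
The plan is to reduce the unbounded case to Araki's bounded perturbation theory by a cutoff and a limiting argument, following the strategy of \cite{DerezinskiJaksicPillet2003}. First we show that $\alpha_K$ preserves $\vN$. Since $K$ is affiliated to $\vN$, we have $e^{isK}\in\vN$ for all $s$. Because $L+K$ is essentially self-adjoint on $\mathcal D(L)\cap\mathcal D(K)$, the Trotter product formula gives
\[
e^{is(L+K)}=\mathrm{s}\text{-}\lim_{n\to\infty}\left(e^{isL/n}e^{isK/n}\right)^n .
\]
Each approximant multiplied by $e^{-isL}$ is a product of factors $\alpha^{r}(e^{isK/n})\in\vN$. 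Hence the strong limit $\Gamma_s\equiv e^{is(L+K)}e^{-isL}$ is a unitary in $\vN$, and it satisfies the cocycle identity $\Gamma_{s+r}=\Gamma_s\alpha^s(\Gamma_r)$. Therefore $\alpha^s_K(A)=\Gamma_s\alpha^s(A)\Gamma_s^*\in\vN$, the map $\alpha_K^s$ is a $*$-automorphism, and strong continuity of $s\mapsto\alpha_K^s(A)$ follows from strong continuity of the two unitary groups. We also note that $JKJ$ is affiliated to $\vN'$, so replacing $L+K$ by the standard Liouvillean $L+K-JKJ$ does not change $\alpha_K$ on $\vN$; this freedom will be useful below.

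Second, for the vector $\Omega_K$ we introduce bounded cutoffs $K_n\equiv K\,1_{\{\abs{K}\le n\}}\in\vN$. For these, Araki's theory \cite[Prop.~5.4.1]{BratteliRobinson-2} applies. It gives $\Omega\in D(e^{-\beta(L+K_n)/2})$ with an explicit Dyson-type expansion, and it shows that $\Omega_{K_n}$ is cyclic, separating, and $(\alpha_{K_n},\beta)$-KMS. The key uniform estimate is Araki's Golden--Thompson type inequality
\[
\norm{e^{-\beta(L+K_n)/2}\Omega}\le\norm{e^{-\beta K_n/2}\Omega} .
\]
By monotone convergence the right-hand side is bounded uniformly in $n$ by a multiple of $\norm{e^{-\beta K/2}\Omega}<\infty$. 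Next we prove $L+K_n\to L+K$ in the strong resolvent sense, using convergence on the common core $\mathcal D(L)\cap\mathcal D(K)$. Then $e^{-\beta(L+K_n)/2}\Omega$ converges weakly along a subsequence, and closedness of $e^{-\beta(L+K)/2}$ identifies the limit as $\Omega_K$, so $\Omega\in D(e^{-\beta(L+K)/2})$. This limiting step is the main obstacle: one must control $e^{-\beta(L+K_n)/2}\Omega$ in a topology strong enough to pass to the limit, despite unbounded operators in an imaginary exponent. I would first try to obtain norm convergence via a spectral-measure argument on the uniformly bounded family, and only fall back to weak compactness if that fails.

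Finally, the KMS property and cyclicity pass to the limit. For $A,B$ in a $*$-strongly dense set of $\alpha$-analytic elements, the functions $z\mapsto(\Omega_{K_n},A\,\alpha^{z}_{K_n}(B)\Omega_{K_n})$ are bounded and analytic on the KMS strip, with uniformly bounded boundary values. By Vitali's theorem and the convergence $\alpha_{K_n}\to\alpha_K$ (again from strong resolvent convergence), they converge to the required KMS function for $\omega_K$. Positivity of the weak limit ensures $\Omega_K\ne0$, so the normalisation is legitimate. Separating follows because $\omega_K$ is a KMS state on $\vN$, hence faithful, and $\Omega_K$ lies in the natural cone $\mathcal P$ as a limit of cone vectors; cyclicity is then automatic from the standard form via $J\Omega_K=\Omega_K$.
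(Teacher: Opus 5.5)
The paper does not prove this theorem; it quotes it from \cite{DerezinskiJaksicPillet2003}. Your first step is correct and standard: the Trotter product formula gives $\Gamma_s=e^{is(L+K)}e^{-isL}\in\vN$ as the strong limit of $\prod_{k=1}^n\alpha^{ks/n}(e^{isK/n})$, together with the cocycle identity.

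The domain statement is also fine, but it is cleaner without weak limits and ``closedness''. Closedness of $e^{-\beta(L+K)/2}$ alone does not relate it to $e^{-\beta(L+K_n)/2}$. Instead, strong resolvent convergence $L+K_n\to L+K$ gives weak convergence of the spectral measures $\mu_n\to\mu$ of $\Omega$. Lower semicontinuity of $\int e^{-\beta x}\,d\mu$ together with Araki's Golden--Thompson inequality then yields $\norm{e^{-\beta(L+K)/2}\Omega}\le\liminf_n\norm{e^{-\beta K_n/2}\Omega}=\norm{e^{-\beta K/2}\Omega}$.

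The genuine gap is your final step. To carry the KMS boundary values $\omega_{K_n}(A\alpha^t_{K_n}(B))$, $\omega_{K_n}(\alpha^t_{K_n}(B)A)$ and the normalisation $\norm{\Omega_{K_n}}^2$ to the limit, you need $\Omega_{K_n}\to\Omega_K$ in norm. This is because $\alpha^t_{K_n}(B)$ converges only strongly, and weak convergence of $\Omega_{K_n}$ does not even give $(\Omega_{K_n},X\Omega_{K_n})\to(\Omega_K,X\Omega_K)$ for fixed $X$.

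Nothing in your argument produces norm convergence. The spectral-measure route gives only $\norm{\Omega_K}\le\liminf_n\norm{\Omega_{K_n}}$. Equality would need uniform integrability of $e^{-\beta x}$ under $\mu_n$ as $x\to-\infty$, which a uniform bound on $\int e^{-\beta x}\,d\mu_n$ does not provide. Equivalently, a two-constants argument for $z\mapsto e^{-z(L+K_n)}\Omega$ gives norm convergence for $\mathrm{Re}\,z<\beta/2$. On the edge $\mathrm{Re}\,z=\beta/2$, which is exactly where $\Omega_K$ sits, it gives only boundedness. So the Vitali step does not prove that $\omega_K$ is $(\alpha_K,\beta)$-KMS. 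Restricting to $\alpha$-analytic $A,B$ does not help: such elements are not $\alpha_{K_n}$-analytic, and KMS functions of $\omega_{K_n}$ exist for all $A,B\in\vN$ anyway.

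Two further claims fail as stated. First, a normal KMS state is faithful only when its support, which is central, equals $1$; this is automatic for factors but not in general. Second, trading $L+K$ for $L+K-JKJ$ presupposes an essential self-adjointness that is not among the hypotheses.

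All of this can be avoided, with no limit in the KMS step, by using the cocycle you already constructed. Since $\Delta_\Omega=e^{-\beta L}$, i.e.\ $\sigma^\omega_t=\alpha^{-\beta t}$, the family $u_t\equiv\Gamma_{-\beta t}$ is a strongly continuous unitary $\sigma^\omega$-cocycle in $\vN$. Connes' converse Radon--Nikodym theorem then gives a unique faithful normal semifinite weight $\psi$ with $(D\psi:D\omega)_t=u_t$. It follows that:
\begin{itemize}
\item $\sigma^\psi_t=\mathrm{Ad}(u_t)\circ\sigma^\omega_t=\alpha_K^{-\beta t}$, so $\psi$ is $(\alpha_K,\beta)$-KMS;
\item $\Delta_{\psi,\omega}^{it}=u_t\Delta_\Omega^{it}=e^{-i\beta t(L+K)}$, so $\Delta_{\psi,\omega}=e^{-\beta(L+K)}$ and $\psi(1)=\norm{e^{-\beta(L+K)/2}\Omega}^2$.
\end{itemize}
Your domain estimate says precisely that $\psi(1)<\infty$. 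Hence $\psi$ is a faithful normal positive functional, and its representative in $\mathcal P$ is $\Delta_{\psi,\omega}^{1/2}\Omega=\Omega_K$. Faithfulness makes $\Omega_K$ separating, and since it lies in $\mathcal P$ it is also cyclic. Finally, $\omega_K=\psi/\psi(1)$ is the required $(\alpha_K,\beta)$-KMS state.
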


\subsection{Araki relative entropy}

The generalisation of the classical entropy functional $\Ent_\mu(F)$ given in the introduction is based on the modular theory of Tomita and Takesaki~\cite{Takesaki1970}. We give here only the fundamental notions to define the Araki relative entropy, referring to the textbook~\cite{BratteliRobinson-1,BratteliRobinson-2} for more details.

Let $\omega$ be a faithful state on $\vNb$, with a cyclic and separating vector representative $\Omega \in \Hi$, and let $\Psi \in \Hi$. Let $S_{\Omega,\Psi}$ be the closure of the unique antilinear operator defined on $\mathfrak M \Omega$ by
\[
S_{\Omega,\Psi} A \Omega = A^* \Psi \ \forall A \in \vNb \ .
\]

Now, consider an anti-unitary isometric involution $J$ on $\Hi_\beta$ (meaning that $J$ is antilinear, $J^2 = 1$ and $J^*=J$), mapping $J \vNb = \vNb'$, and the self-dual cone $\mathcal P  = \overline{\{ AJAJ\Omega \ | \ A \in \vNb \}}$ associated with $\Omega$ and invariant under $J$.

\begin{definition}\label{def:relative-entropy}
Consider two normal states $\varphi,\psi$ on $\vNb$, with unique vector representatives $\Phi, \ \Psi \in \mathcal P$.
The \emph{relative entropy} of $\varphi$ and $\psi$ is defined by
\[
S(\varphi | \psi ) \equiv - (\Phi, \log\Delta_{\Phi,\Psi} \Phi) \ ,
\]
where the \emph{relative modular operator} is $\Delta_{\Phi,\Psi}\equiv S_{\Phi, \Psi}^* S_{\Phi,\Psi}$.
\end{definition}

In the special case in which $\varphi=\omega$, the $(\alpha,\beta)-$equilibrium state $\omega$, and $\psi = \omega_K$, the relative entropy admits an expression reminiscent of the classical thermodynamical relation. 

\begin{proposition}\label{prop:relative-entropy} 
\emph{\cite[Theorem 5.5]{DerezinskiJaksicPillet2003}}
Consider the assumptions in Thm. \ref{thm:interacting-equilibrium-state}.
Furthermore, assume that the operator $L_K \equiv L + K - J K J$ is essentially self-adjoint on
\[
\mathcal D(L) \cap \mathcal D(K) \cap \mathcal D(J K J) \ .
\]
Then,
\begin{equation}\label{eq:relative-entropy-1}
S(\omega |\omega_K) = \beta(\Omega,K \Omega) +\log \norm{\Omega_K}^2 \ .
\end{equation}

\end{proposition}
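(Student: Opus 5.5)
The plan is to identify the relative modular operator $\Delta_{\Omega,\Psi}$, where $\Psi \equiv \Omega_K/\norm{\Omega_K} \in \mathcal P$, explicitly in terms of the perturbed generator, and then read off $(\Omega, \log\Delta_{\Omega,\Psi}\Omega)$ using $L\Omega = 0$.

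\emph{Bounded case.} I would first treat the case of bounded self-adjoint $K \in \vN$, where Araki's perturbation theory is available.

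1. Recall that $\log \Delta_\Omega = -\beta L$ for the free KMS vector, and that $J\Omega_K$ lies in the natural cone.
2. By Theorem~\ref{thm:interacting-equilibrium-state}, $\omega_K$ is $(\alpha_K,\beta)$-KMS, so its modular group is $\sigma^{\omega_K}_t = \alpha_K^{-\beta t}$.
3. The Connes cocycle is therefore
\[
(D\omega_K : D\omega)_t = \norm{\Omega_K}^{-2it}\, e^{-it\beta(L+K)} e^{it\beta L} \in \vN .
\]
   This is the Dyson series for the perturbation $K$, and the scalar prefactor is fixed by the normalisation of $\Psi$.
4. By the relative modular theory $\Delta_{\Omega,\Psi}^{it} = (D\omega_K:D\omega)_t\, \Delta_\Omega^{it}$ (the relative modular group acts as $\sigma^{\omega_K}$ on $\vN$ and as $\sigma^\omega$ on $\vN'$), this gives
\[
\log \Delta_{\Omega,\Psi} = -\beta(L+K) - \log\norm{\Omega_K}^2 .
\]
   The relative modular operator is thus a modification of the Liouvillean $L_K = L + K - JKJ$ in which only the $\vN$-side perturbation appears.
5. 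Since $\Omega \in D(L)$ with $L\Omega = 0$ and $K$ is bounded,
\[
-(\Omega, \log\Delta_{\Omega,\Psi}\Omega) = \beta(\Omega, K\Omega) + \log\norm{\Omega_K}^2 ,
\]
   which is Eq.~\eqref{eq:relative-entropy-1}. I would fix the sign conventions in steps 3 and 4 by checking against $K = c\,\mathbb 1$. In that case $\Omega_K = e^{-\beta c/2}\Omega$ and the entropy must vanish: indeed $\beta c + \log e^{-\beta c} = 0$.

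\emph{Unbounded case.} I would extend to affiliated $K$ satisfying the hypotheses by approximation. Take bounded truncations $K_n \equiv K\,\chi_{[-n,n]}(K)$ (or $K$ cut off from below and above). Then:

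1. Essential self-adjointness of $L + K$ and $L_K$ on the stated cores gives strong resolvent convergence of $L + K_n \to L + K$ and $L_{K_n} \to L_K$.
2. Hence $\Omega_{K_n} \to \Omega_K$, using $\norm{e^{-\beta K/2}\Omega} < \infty$ together with a dominated-convergence argument on the Araki–Dyson expansion of $e^{-\beta(L+K_n)/2}\Omega$.
3. Also $(\Omega, K_n\Omega) \to (\Omega, K\Omega)$, provided $\Omega \in D(K)$, which must be assumed or checked.
4. On the entropy side, use lower semicontinuity of Araki's relative entropy for $\liminf S(\omega|\omega_{K_n}) \geq S(\omega|\omega_K)$. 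For the reverse inequality, use convergence of the relative modular operators in the strong resolvent sense together with the spectral representation of $-(\Omega, \log\Delta_{\Omega,\Psi}\Omega)$.

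The main obstacle is this last limiting step. Lower semicontinuity only yields one inequality, and the $\log$ of the relative modular operator is unbounded in both directions, so upgrading to equality requires control of both tails of its spectral measure in $\Omega$. It is precisely here that the essential self-adjointness of $L_K$ (rather than just $L + K$) is needed. I would first try to prove a uniform bound on $(\Omega, \abs{\log\Delta_{\Omega,\Psi_n}}\Omega)$ in $n$ using monotonicity of the truncations.
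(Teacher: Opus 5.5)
The paper does not prove this statement; it quotes Theorem~5.5 of Derezi\'nski--Jak\v{s}i\'c--Pillet, so your argument has to stand on its own. Your bounded case is correct: it is Araki's relative-Hamiltonian computation, and the check with constant $K=c$ fixes the signs. However, the statement concerns affiliated, in general unbounded, $K$, and there your proposal has a genuine gap.

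\begin{itemize}
\item Step~2 of your unbounded part fails as described. The $k$-th term of the Araki--Dyson expansion of $e^{-\beta(L+K_n)/2}\Omega$ is only controlled by $(\beta\norm{K_n}/2)^k/k!$, so there is no majorant uniform in $n$, and for $K$ itself no such expansion is available. Strong resolvent convergence $L+K_n\to L+K$ gives no control of $e^{-\beta(L+K_n)/2}\Omega$ either, because $x\mapsto e^{-\beta x/2}$ is unbounded on the spectrum.
\item Step~4 you leave open yourself. Lower semicontinuity only gives $S(\omega|\omega_K)\leq\beta(\Omega,K\Omega)+\log\norm{\Omega_K}^2$, and nothing in the proposal produces the reverse inequality.
\item Your guess that essential self-adjointness of $L_K$ controls the spectral tails is misplaced. $-\beta L_K$ is the logarithm of the modular operator of $\Omega_K$. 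The operator entering $S(\omega|\omega_K)$ is $\Delta_{\Omega,\Psi}$, which, as your own bounded computation shows, involves only $L+K$.
\end{itemize}

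The missing idea is that no approximation is needed. The only input your bounded argument really uses is $\Gamma_t\equiv e^{-it\beta(L+K)}e^{it\beta L}\in\vN$. This holds for affiliated $K$ by the Trotter product formula, which applies because $L+K$ is essentially self-adjoint on $\mathcal D(L)\cap\mathcal D(K)$. Indeed, $(e^{isL/n}e^{isK/n})^n e^{-isL}$ is the ordered product of the elements $\alpha^{js/n}(e^{isK/n})\in\vN$, $j=1,\ldots,n$, and it converges strongly to $e^{is(L+K)}e^{-isL}$.

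The argument then runs as follows.
\begin{itemize}
\item $e^{-it\beta(L+K)}$ implements $\alpha_K^{-\beta t}=\sigma^{\omega_K}_t$ on $\vN$, by the KMS property in Thm.~\ref{thm:interacting-equilibrium-state}. It acts on $\vN'$ as $\Delta_\Omega^{it}\,\cdot\,\Delta_\Omega^{-it}$, exactly like $\Delta_{\Omega,\Psi}^{it}$.
\item Hence $W_t\equiv\Delta_{\Omega,\Psi}^{it}e^{it\beta(L+K)}$ commutes with both $\vN$ and $\vN'$. Both factors act trivially on the centre, so $W_t=e^{itZ}$ is a strongly continuous unitary group. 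Here $Z$ is affiliated to $\vN\cap\vN'$ and commutes strongly with $L+K$.
\item For a spectral projection $P$ of $Z$ on a bounded set,
\[
\norm{e^{Z/2}P\Omega_K}^2=\norm{\Delta_{\Omega,\Psi}^{1/2}P\Omega}^2=\norm{S_{\Omega,\Psi}P\Omega}^2=\norm{P\Psi}^2=\norm{P\Omega_K}^2/\norm{\Omega_K}^2 .
\]
Since $\Omega_K$ is separating, this forces $Z=-\log\norm{\Omega_K}^2$.
\end{itemize}
This also justifies the ``therefore'' in step~3 of your bounded part: the two modular groups alone determine the cocycle only up to such a central unitary group.

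So $\log\Delta_{\Omega,\Psi}=-\beta(L+K)-\log\norm{\Omega_K}^2$ holds for unbounded $K$ as well. The formula then follows from $L\Omega=0$ as soon as $\Omega\in\mathcal D(K)$. As you correctly note, this has to be assumed; in the paper's application it holds since $K\in L^{2+\epsilon}$.
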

The above formula translates to the quantum setting the thermodynamical relation between the entropy and the energy of a classical system; indeed, on one hand, the first term is the expectation value of the relative energy of the system times the inverse temperature; on the other hand, $-1/\beta \log \norm{\Omega_K}^2$ represents the quantum free energy of the system.

\section{Stochastically positive KMS systems and stochastic processes}\label{sec:stochastic}

We can now introduce the correspondence between quantum von Neumann algebras and stochastic processes. The correspondence relies on the identification of an abelian subalgebra of the von Neumann algebra, which must be large enough to generate the full algebra under the free dynamics. The abelian algebra can then be realised as a sub-$\sigma-$algebra of a probability space. A time evolution of the subalgebra in a fictitious Euclidean direction preserves commutativity, generating the full probability space. Osterwalder-Schrader positivity and stochastic positivity guarantee a correspondence between the commutative probability space and the full von Neumann algebra. Main references for this section are the classical work of Klein and Landau~\cite{Klein1981}, the work on quasifree Weyl systems by Gielerak, Jakóbczyk, and Olkiewicz~\cite{Gielerak1998}, and the papers on the application to $P(\phi)_2$ models by Gérard and Jäkel~\cite{Gerard2005,Gerard2005b}, which also give a modern introduction to the subject.

We start with a few definitions.
\begin{definition}[Stochastically positive KMS systems]\label{def:stochastically-positive-KMS-system}
Let $(\gvN, \AvN, \alpha,\omega)$ be a von Neumann algebra $\gvN \subset \mathcal B(\Hi)$ with a weakly closed abelian sub-algebra $\AvN$, a dynamics $\alpha:\gvN \to \gvN $ given by
\[
\alpha^s(A) = e^{is L}A e^{-i s L}
\]
for a self-adjoint operator $L$ on $\mathcal H$, and an equilibrium state $\omega$. Then, $(\gvN, \AvN, \alpha,\omega)$ is a \emph{stochastically positive KMS system} if
\begin{enumerate}
\item The von Neumann algebra generated by $\bigcup_{t \in \mathbb R} \alpha^t(\AvN)$ is $\gvN$; 
\item the Euclidean Green's functions $G^E(A_1,\ldots,A_n; s_1,\ldots,s_n) $ are positive for all $A_1,\ldots,A_n \in \AvN^+ = \{ A \in \AvN \ | \ A \geq 0 \}$.
\end{enumerate}
If $\omega$ is Gaussian, $(\gvN, \AvN, \alpha,\omega)$ is a \emph{Gaussian stochastically positive KMS system}.
\end{definition}

Stochastically positive KMS states are associated to generalised path spaces.

\begin{definition}
A generalised path space $(Q,\Sigma,\Sigma_0,U(t),R,\mu)$ consists of
\begin{enumerate}
\item A probability space $(Q,\Sigma,\mu)$;
\item A distinguished sub-$\sigma-$algebra $\Sigma_0 \subset \Sigma$;
\item A one-parameter group $\mathbb R \ni \tau \mapsto U(\tau)$ of measure preserving automorphisms of $L^\infty(Q,\Sigma,\mu)$, strongly continuous in measure, such that $\Sigma = \bigvee_{\tau \in \mathbb R} U(\tau)\Sigma_0$;
\item a measure-preserving automorphism $R$ of $L^\infty(Q,\Sigma,\mu)$ such that $R^2 = 1$, $R U(\tau) = U(-\tau)R$ and $R E_0 = E_0 R$, where $E_0$ is the conditional expectation with respect to $\Sigma_0$.
\end{enumerate}

If $U(\beta)=1$, then the path space is $\beta-$periodic. For $I \subset \mathbb R$, $E_I$ denotes the conditional expectation with respect to the $\sigma-$algebra $\Sigma_I = \bigvee_{t \in I} U(t)\Sigma_0$.

A $\beta-$periodic path space  is \emph{Osterwalder-Schrader (OS-) positive} if $E_{[0,\beta/2]}RE_{[0,\beta/2]} \geq 0$ as an operator on $L^2(Q,\Sigma,\mu)$.
\end{definition}

\subsection{The isomorphism}
Stochastically positive KMS systems are in one-to-one correspondence with $\beta-$periodic, OS-positive path spaces \cite{Klein1981,Gerard2005}. More precisely, Klein and Landau formulated the correspondence in terms of stochastic processes supported on path spaces. More economically, generalised path spaces allow to give a coordinate-free probabilistic representation of a stochastically positive KMS system \cite{Klein1978,Gerard2005}.
We give here a brief account of part of the correspondence \cite{Klein1981}, to exhibit useful formulas for the next sections. 

Set $\mathcal H_{OS} \equiv L^2(Q,\Sigma_{[0,\beta/2]},\mu)$, and define the inner product 
\[
(\phi,\psi) = \int_Q \overline \phi R \psi \dd \mu \ .
\]
Let $\mathcal N \subset \mathcal H_{OS}$ be the kernel of the positive quadratic form $(\psi,\psi)$. The \emph{physical Hilbert space} is $\mathcal H = \overline {\mathcal H_{OS}/\mathcal N}^{( \ \cdot \ , \ \cdot \ )}$. Let $\mathcal V$ be the canonical map $\mathcal V : \mathcal H_{OS} \to \mathcal H_{OS}/\mathcal N$, $\mathcal V : \psi \mapsto [\psi] \equiv \psi + \mathcal N$. Then, the distinguished vector $\Omega \in \mathcal H$ is
\[
\Omega = \mathcal V 1 \ ,
\]
where $1 \in \mathcal H_{OS}$ is the constant function equal to $1$ on $Q$. For $A \in L^\infty(Q,\Sigma_0,\mu)$, define
\[
\tilde A \mathcal V \psi \equiv \mathcal V A \psi \ , \quad \tilde A \in \mathcal B(\mathcal H) \ ,
\]
and define $\AvN \subset \mathcal B(\mathcal H)$ the von Neumann subalgebra $\AvN \equiv \{ \tilde A \ | \ A \in L^\infty(Q,\Sigma_0,\mu)\}$. Then $F \mapsto \tilde F$ is a weakly continuous $*-$isomorphism $\iota: L^\infty(Q,\Sigma_0,\mu) \to \AvN$ \cite{Klein1981}. 

Now, having identified the Abelian subalgebra $\AvN$ with $L^\infty(Q,\Sigma_0,\mu)$, the identification of the full algebra $\gvN$ with the path space proceeds identifying the commutative and quantum dynamics, and then using the facts that $\left(\bigcup_{s \in \mathbb R} \alpha^s(\AvN) \right)''= \gvN$ and $\Sigma = \bigvee_{\tau \in \mathbb R} U(\tau)\Sigma_0$.

To this end, set $\mathcal M_\tau = L^2(Q,\Sigma_{[0,\beta/2-\tau]},\mu)$ with $\tau \in [0,\beta/2]$, and set $\mathcal D_\tau = \mathcal V \mathcal M_\tau$. The automorphism group $U(\tau)$ generates a map $P(s) : \mathcal D_\tau \to \mathcal H$ for $s \in [0,\tau]$ as
\[
P(s) \mathcal V \psi \equiv \mathcal V U(s) \psi \ , \psi \in \mathcal M_\tau \ .
\]
The triple $(P(\tau),\mathcal D_\tau, \beta/2)$ is a \emph{local symmetric semigroup} \cite{Frohlich1980,Klein1981b} (see e.g. Ref~\cite[Definition 6.2]{Gerard2005} for the definition), and it has an associated self-adjoint operator $L$ on $\mathcal H$ such that $P(s) u = e^{-sL}u$, for all $u\in \mathcal D_\tau$ and $s \in[0,\tau]$. 

Now, $\gvN$ is the von Neumann algebra generated by $\{ e^{isL} A e^{-isL} \ | \ s \in \mathbb R, \ A \in \AvN \}$, the weakly continuous group of $*-$automorphisms is given by $\alpha^s: A \mapsto e^{isL}A e^{-isL}$ for $A \in \gvN$, and the state is $\omega: A \mapsto (\Omega, A\Omega) \ \forall A \in \gvN$. Klein and Landau \cite{Klein1981} showed that $(\gvN, \AvN,\alpha,\omega)$ is a stochastically positive KMS system, and the identification between the two objects is fixed by the key identity \cite[Eq. 2.2]{Gerard2005b}
\begin{equation}\label{eq:n-point-functions-probabilistic}
G^E(\tilde A_1,\ldots \tilde A_n; s_1,\ldots,s_n)= \int_Q \prod_{i=1}^n U(s_i)A_i \dd \mu 
\\ .
\end{equation}

The Klein-Landau correspondence identified above implies that Gaussian, stochastically positive, KMS Weyl systems can be represented by a Gaussian thermal process with covariance $R_\beta$ \cite{Gielerak1998}.
\begin{theorem}\label{thm:probabilistic-Weyl-algebra}
Let $(\vN, \vN^+,\alpha,\omega)$ be a stochastically positive KMS system for a Gaussian equilibrium state $\omega$. Then, there exists a Gaussian thermal process $\{\xi_s^\beta\}_s$, indexed by $\mathfrak h_+$, with underlying probability space $(Q,\mu_C)$, and the covariance is $R_\beta(s)$. More precisely, denoting
\[
\E(F) = \int_Q F \dd \mu_C \ ,
\]
we have
\[
\E\left [ \xi_s^\beta(f) \right ] = 0 \ , \quad \E \left [ \xi_{s_1}^\beta(f_1)\xi_{s_2}^\beta(f_2)\right ]= \frac{1}{2}\langle f_1, R_\beta(s_1-s_2) f_2 \rangle \ ,
\]
and
\[
G^E_2(f_1,f_2; s_1,s_2) = \E \left [e^{i \xi_{s_1}^\beta(f_1)} e^{i\xi_{s_2}^\beta(f_2)} \right ] \ .
\]
Moreover, higher $n-$point functions are
\begin{equation}\label{eq:n-point-functions-probabilistic-Weyl}
G^E_n(f_1,\ldots,f_n;s_1,\ldots,s_n) = \E \left [ \prod_{i=1}^n e^{i\xi_{s_i}^\beta(f_i)} \right ] \ .
\end{equation}
\end{theorem}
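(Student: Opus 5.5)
The plan is to build the Gaussian process directly from the covariance $R_\beta$ and then read off the Green's functions from the Gaussian structure of $\omega$. By the discussion following Prop.~\ref{prop:OS-positivity}, the function $(s,f),(s',g)\mapsto \langle f, R_\beta(s-s') g\rangle$ on $\mathbb R\times\mathfrak h_+$ is real, symmetric, $\beta$-periodic and positive definite. I would first extend it by linearity to the real vector space of finite formal sums $\sum_k (s_k,f_k)$, obtaining a positive semidefinite real bilinear form. Kolmogorov's extension theorem, or equivalently Minlos/Gaussian Hilbert-space theory on the completion modulo null vectors, then yields a probability space $(Q,\mu_C)$ and a centred real Gaussian family $\xi^\beta_s(f)$, linear in $f$, with covariance $\tfrac12\langle f_1,R_\beta(s_1-s_2)f_2\rangle$. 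This settles the first two displayed identities. Measurability in $s$ follows from the continuity of $R_\beta$, which comes from the continuity of $F^E$ inherited from Def.~\ref{def:equilibrium-state}.

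Next I would compute the right-hand side of Eq.~\eqref{eq:n-point-functions-probabilistic-Weyl}. Since $\sum_i \xi^\beta_{s_i}(f_i)$ is a centred Gaussian,
\[
\E\Bigl[\prod_i e^{i\xi^\beta_{s_i}(f_i)}\Bigr]=\exp\Bigl(-\tfrac14\sum_{i,j}\langle f_i,R_\beta(s_i-s_j)f_j\rangle\Bigr).
\]
The left-hand side has to be put in the same form. I would use the Weyl relations and the quasifree property to write $\omega(\prod_j\alpha^{t_j}(W(f_j)))$ for real $t_j$ as $\exp(-\tfrac12\sum_i H(f_i,f_i)-\sum_{i<j}F(f_j,f_i;t_j-t_i))$. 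This is the $n$-point analogue of the two-point formula $G_2$ stated above. I would then continue analytically, $t_j\to is_j$, using Araki's theorem on $\mathcal I_\beta$. This gives
\[
G^E_n=\exp\Bigl(-\tfrac12\sum_i H(f_i,f_i)-\sum_{i<j}F^E(f_j,f_i;s_j-s_i)\Bigr).
\]

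The main obstacle is matching the two exponents, including the factor conventions. For $f\in\mathfrak h_+$ one has $\sigma(f,f)=0$ and, at coinciding times, $F^E(f,f;0)=H(f,f)$, so that $R_\beta(0)=H$ on $\mathfrak h_+$. The symmetry $F^E(f,g;s)=F^E(g,f;s)$ on $\mathfrak h_+$ then lets me symmetrise the sum over $i<j$ into half the full off-diagonal sum. Together with periodicity and the symmetry $s\mapsto\beta-s$, this makes the exponent equal to $-\tfrac12\sum_{i,j}\langle f_i,R_\beta(s_i-s_j)f_j\rangle$. This term must be reconciled with the factor $\tfrac14$ coming from the Gaussian side. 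The normalisation of $R_\beta$ therefore needs to be fixed at this point, so that $\tfrac12 R_\beta$ is exactly the Gaussian covariance. Concretely, I would define $R_\beta$ so that $\E[\xi^2]$ equals $\tfrac12\cdot 2F^E$. If a factor discrepancy remains, I would absorb it into the definition of $R_\beta$, since the statement only fixes $R_\beta$ through $F^E$.

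Finally, the two-point identity is the case $n=2$. The identification of $(Q,\mu_C)$ as the path space of Klein–Landau, with $U(\tau)\xi_s=\xi_{s+\tau}$ and $R\xi_s=\xi_{-s}$, follows from Eq.~\eqref{eq:n-point-functions-probabilistic}. The reason is that the Weyl unitaries $W_\omega(f)$, $f\in\mathfrak h_+$, generate $\vN^+$, so the $n$-point functions of exponentials determine the correspondence uniquely.
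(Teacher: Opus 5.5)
Your proof is correct, but it runs in the opposite direction from the paper's. The paper gives no computation here. It obtains the process from the Klein--Landau reconstruction of Section~\ref{sec:stochastic}, citing Gielerak--Jak\'obczyk--Olkiewicz:
\begin{itemize}
\item the probability space is the generalised path space of the stochastically positive system;
\item each $W_\omega(f)$ with $f\in\mathfrak h_+$ becomes a unitary function in $L^\infty(Q,\Sigma_0,\mu)$ of the form $e^{i\xi_0(f)}$, by Stone's theorem for the commuting group $t\mapsto W_\omega(tf)$, with linearity in $f$ coming from $\sigma|_{\mathfrak h_+}=0$;
\item the process is then $\xi_s(f)=U(s)\xi_0(f)$.
\end{itemize}
In that route, \eqref{eq:n-point-functions-probabilistic-Weyl} is automatic from \eqref{eq:n-point-functions-probabilistic}. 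The quasi-free formula for $G^E_n$ is what shows the joint characteristic functions are Gaussian with covariance $F^E$.

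You instead get Gaussianity for free from Kolmogorov. You use stochastic positivity only through positive-definiteness of $(s,f),(s',g)\mapsto F^E(f,g;s-s')$. You then spend the same computation on the Green's-function identity: Weyl relations, the quasi-free $n$-point formula, continuation into $\mathcal I_\beta$, and symmetrisation of the exponent.

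Your version is elementary and self-contained for the displayed identities. The paper's version comes with the whole path-space structure ($\Sigma_0$, $U(\tau)$, $R$, OS-positivity, $\vN^+\cong L^\infty(Q,\Sigma_0,\mu)$), which is what the Feynman--Kac formula later needs. Your closing uniqueness argument can recover that identification, but only after the Stone-theorem step above, which you should state explicitly. Until $\iota^{-1}(W_\omega(f))$ is written as $e^{i\eta(f)}$ with $\eta(f)$ real and linear in $f$, there are no characteristic functions to compare.

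On the normalisation, your hedge should become a definite statement, because the covariance is not a free choice. The one-point case $\E[e^{i\xi_s(f)}]=\omega(W(f))=e^{-\frac12H(f,f)}$ forces $\E[\xi_s(f)^2]=H(f,f)=F^E(f,f;0)$. So the covariance must be $F^E(f_1,f_2;s_1-s_2)$. With the relation $F^E=\langle\cdot,R_\beta\cdot\rangle$ of Section~\ref{sec:Weyl}, the factor $\frac12$ in the statement would then be wrong.

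The statement is right for the explicit $R_\beta(s)=\frac{e^{-s\h}+e^{-(\beta-s)\h}}{1-e^{-\beta\h}}$ of Section~\ref{sec:Euclidean-free-field}, provided $H$ carries its standard factor, $H(f,g)=\frac12\Re(f,(1+2\rho)g)$. The continued two-point function on $\mathfrak h_+$ is then $F^E(g,f;s)=\frac12\bigl[(f,\frac{e^{-s\h}}{1-e^{-\beta\h}}g)+(g,\frac{e^{-(\beta-s)\h}}{1-e^{-\beta\h}}f)\bigr]=\frac12\langle f,R_\beta(s)g\rangle$. This is exactly your choice $R_\beta=2F^E$.
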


\subsection{Feynman-Kac-Nelson perturbations and probabilistic formula for Araki relative entropy} 
The correspondence in Thm. \ref{thm:probabilistic-Weyl-algebra} is best suited to give a probabilistic description of the dynamics for free systems, or, in any case, systems for which the equilibrium state is Gaussian. We can now introduce a class of perturbations in terms of perturbations of a fixed measure. Under the Klein-Landau correspondence, the perturbed measure is equivalent to a stochastically positive KMS system with state $\Omega^K$, as given in Thm. \ref{thm:interacting-equilibrium-state}.

Let $(Q,\Sigma, \Sigma_0,U(t),R,\mu)$ be an OS-positive, $\beta-$periodic generalised path space, with a corresponding stochastically positive KMS system $(\gvN,\AvN,\alpha,\omega)$. Let $K$ be a self-adjoint operator on $\Hi_\beta$, affiliated to $\AvN$; under the Klein-Landau isomorphism, $K$ can also be interpreted as a real $\Sigma_0-$measurable function on $Q$, which we denote by the same symbol.

Assume that $K \in L^1(Q,\Sigma_0,\mu)$ and $e^{-\beta K} \in L^1(Q,\Sigma_0,\mu)$. Then, we can define
\[
V \equiv \int_{-\beta/2}^{\beta/2} U(\tau) K \dd \tau \in L^1(Q,\Sigma,\mu) \ ,
\]
and it is possible to prove (see Ref.~\cite[Proposition 6.2]{Gerard2005} or Ref.~\cite{Klein1981}) that
\[
e^{-V} \in L^1(Q,\Sigma,\mu) \ .
\]
The family $\{F_{[0,s]}\}_{s \in [0,\beta/2]}$ with
\[
F_{[0,s]} \equiv e^{-\int_0^s U(t) K \dd t} \in L^2(Q,\Sigma_{[0,\beta/2]},\mu)
\]
is called a \emph{Feynman-Kac-Nelson kernel}.

Therefore, we can define the perturbed probability measure $\mu_V \equiv \left ( \int_Q e^{-V} \dd \mu \right )^{-1} e^{-V} \dd \mu$. The generalised path space $(Q,\Sigma,\Sigma_0,U(t),R,\mu_V)$ is OS-positive and $\beta-$periodic \cite{Klein1981}, and we can associate to it a stochastically positive KMS system \cite{Gerard2005,Klein1981}.

\begin{proposition}\label{prop:FKN-kernels}
Let $(Q,\Sigma, \Sigma_0,U(t),R,\mu)$ be an OS-positive, $\beta-$periodic generalised path space, let $(\gvN,\AvN,\alpha,\omega)$ be the corresponding stochastically positive KMS system. Let $K \in L^{2+\epsilon}(Q,\Sigma_0,\mu)$ for $\epsilon >0$, and $e^{-\beta K} \in L^1(Q,\Sigma_0,\mu)$. Finally, define $V \equiv \int_{-\beta/2}^{\beta/2} U(t) K \dd t$ and
\[
\mu_V \equiv \left ( \int_Q e^{-V} \dd \mu \right )^{-1} e^{-V} \dd \mu \ .
\]
Then, the operator sum $L+K$ is essentially self-adjoint on $D(L) \cap D(K)$; we denote its closure by the same symbol. Moreover, the OS-positive, $\beta-$periodic path space $(Q,\Sigma,\Sigma_0,U(t),R,\mu_V)$ is associated, via the Klein-Landau reconstruction theorem, to the stochastically positive KMS system $(\gvN,\AvN,\alpha_K,\omega_K)$. In particular,
\[
\alpha_K^s(A) = e^{is(L+K)} A e^{-is (L+K)} \ , \forall \ A \in \gvN \ ,
\]
and the perturbed KMS state is 
\[
\omega_K(A) \equiv \frac{(\Omega_K, A\Omega_K)}{\norm{\Omega_K}^2} \ ,
\]
where
\[
\Omega_K \equiv e^{-\frac{\beta}{2}(L+K)}\Omega \ .
\]
Moreover, if
\[
K \in L^p(Q,\Sigma_0,\mu) \ , \quad e^{-\frac{\beta}{2}K} \in L^q(Q,\Sigma_0,\mu) \ , \frac{1}{p}+\frac{1}{q}= \frac{1}{2} \ , 2 \leq p, \ q \leq \infty \ ,
\]
then, the operator $L+K - JKJ$ is essentially self-adjoint, and it coincides with the Liouvillean for the perturbed KMS system $(\gvN,\alpha_K, \omega_K)$. 
\end{proposition}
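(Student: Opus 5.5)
We need to prove Proposition \ref{prop:FKN-kernels}. This is essentially the Klein--Landau / Gérard--Jäkel perturbation theory, so the plan is to assemble the statement from the cited results rather than reprove them.

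\textbf{Step 1: the perturbed measure.} Since $K \in L^{2+\epsilon}(Q,\Sigma_0,\mu) \subset L^1$ and $e^{-\beta K}\in L^1(Q,\Sigma_0,\mu)$, we first show $e^{-V}\in L^1(Q,\Sigma,\mu)$.
\begin{itemize}
\item Write $e^{-V}$ as the Riemann limit of $\prod_j e^{-\frac{\beta}{N} U(\tau_j)K}$.
\item Bound its expectation by the generalised Hölder inequality together with measure preservation of $U(\tau)$: $\E\bigl(\prod_{j=1}^N e^{-\frac{\beta}{N}U(\tau_j)K}\bigr)\le \prod_j \norm{e^{-\beta K}}_{L^1}^{1/N}$.
\item Conclude by Fatou's lemma.
\end{itemize}
Hence $\mu_V$ is a probability measure.

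\textbf{Step 2: the path-space axioms for $\mu_V$.}
\begin{itemize}
\item $\beta$-periodicity and invariance under $U(t)$ follow from $U(\beta)=1$ and the translation invariance of the integral over the circle.
\item $R$-invariance follows from $RU(\tau)=U(-\tau)R$ and $RK=K$, since $K$ is $\Sigma_0$-measurable and $RE_0=E_0R$.
\item For OS-positivity, split $e^{-V}=F_{[0,\beta/2]}\,R F_{[0,\beta/2]}$, using periodicity to write $\int_{-\beta/2}^{0}=\int_{\beta/2}^{\beta}$. Then
\[
\int \overline{\psi}\, R\psi \, \dd\mu_V \;\propto\; \int \overline{\psi F}\, R(\psi F)\,\dd\mu \;\ge\; 0
\]
for $\psi$ that are $\Sigma_{[0,\beta/2]}$-measurable, by OS-positivity of $\mu$.
\item Klein--Landau reconstruction then gives a stochastically positive KMS system.
\end{itemize}

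\textbf{Step 3: identifying the reconstructed system with $(\gvN,\AvN,\alpha_K,\omega_K)$.}
\begin{itemize}
\item Essential self-adjointness of $L+K$ on $D(L)\cap D(K)$ is proved via the local symmetric semigroup
\[
P_K(s)\mathcal V\psi=\mathcal V\bigl(F_{[0,s]}U(s)\psi\bigr),
\]
and the assumption $K\in L^{2+\epsilon}$ is used here. Following \cite[Proposition 6.2]{Gerard2005}, the Feynman--Kac--Nelson formula shows that $P_K(s)$ coincides on a core with $e^{-s(L+K)}$. A Segal-type argument (Trotter product on $\mathcal D_\tau$ plus the $L^{2+\epsilon}$ bound making $K\Omega$ and $K$ on $\mathcal D_\tau$ well-defined) shows the closure of $L+K$ is the generator.
\item The reconstructed Hilbert space for $\mu_V$ embeds unitarily into $\Hi$ via $\mathcal V_V\psi\mapsto \norm{\Omega_K}^{-1}\mathcal V(F_{[0,\beta/2]}\psi)$. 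Under this map the vacuum goes to $\Omega_K/\norm{\Omega_K}$, with $\Omega_K=e^{-\frac\beta2(L+K)}\Omega$.
\item The Euclidean Green's functions agree by Eq.~\eqref{eq:n-point-functions-probabilistic}. Analytic continuation then identifies the dynamics with $\alpha_K$ and the state with $\omega_K$. By Thm.~\ref{thm:interacting-equilibrium-state}, this is the unique $(\alpha_K,\beta)$-KMS vector state.
\end{itemize}

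\textbf{Step 4: the perturbed Liouvillean.} Assume $K\in L^p$ and $e^{-\frac\beta2K}\in L^q$ with $\frac1p+\frac1q=\frac12$. Then Hölder's inequality gives $K e^{-\frac\beta2K}\in L^2$, so that $\Omega_K\in D(K)\cap D(JKJ)$. The essential self-adjointness of $L+K-JKJ$ and its identification as the perturbed standard Liouvillean then follow from Derezi\'nski--Jak\v si\'c--Pillet \cite{DerezinskiJaksicPillet2003}. This requires checking $e^{-it(L+K-JKJ)}\Omega_K=\Omega_K$ and that the corresponding unitary group implements $\alpha_K$ while preserving $\mathcal P$.

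\textbf{Main obstacle.} I expect the hardest step to be the essential self-adjointness of $L+K$ in Step 3, since $K$ is unbounded and not relatively bounded with respect to $L$. The first line of attack is the local-semigroup approach of Klein--Landau and Fröhlich, using the $L^{2+\epsilon}$ integrability to control the Feynman--Kac kernels on $\mathcal D_\tau$.
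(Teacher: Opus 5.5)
Your overall route is the paper's. The paper gives no separate proof of Proposition~\ref{prop:FKN-kernels}; it imports the result from Klein--Landau and G\'erard--J\"akel. The only pieces it writes out (in the proof of Proposition~\ref{prop:relative-entropy-probabilistic}) are the ones you use: the local symmetric semigroup $P_K(s)\mathcal V\psi=\mathcal V(F_{[0,s]}U(s)\psi)$, whose generator is $\overline{L+K}$ for $K\in L^{2+\epsilon}$ by \cite[Theorem 7.4]{Gerard2005}, and the factorisation $e^{-V}=F_{[0,\beta/2]}\,RF_{[0,\beta/2]}$. Your Steps~1--2 and the isometry $\mathcal V_V\psi\mapsto\norm{\Omega_K}^{-1}\mathcal V(F_{[0,\beta/2]}\psi)$ are correct and fill in details the paper leaves to the references. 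Three smaller corrections apply there:
\begin{itemize}
\item In Step~3 the logic must run the other way. The local semigroup has a unique self-adjoint generator $H_K$, and $K\in L^{2+\epsilon}$ is what lets you show that $H_K$ extends $L+K$ on a core. You cannot invoke $e^{-s(L+K)}$ before that.
\item $RE_0=E_0R$ alone does not give $RK=K$. You need $R$ to act trivially on $\Sigma_0$-measurable functions, which the time reflection of the free field does.
\item Theorem~\ref{thm:interacting-equilibrium-state} asserts no uniqueness of the $(\alpha_K,\beta)$-KMS state. You do not need it anyway, since matching the Euclidean Green's functions already identifies $\omega_K$.
\end{itemize}

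The genuine gap is the essential self-adjointness claim in Step~4. The checks you list ($e^{-it(L+K-JKJ)}\Omega_K=\Omega_K$, implementation of $\alpha_K$, invariance of $\mathcal P$) presuppose that $L+K-JKJ$ already generates a unitary group. They can identify its closure with the standard Liouvillean once self-adjointness is known, but they cannot prove it. Nor does \cite{DerezinskiJaksicPillet2003} supply it the way you use it. In the paper, the Derezi\'nski--Jak\v{s}i\'c--Pillet result (Proposition~\ref{prop:relative-entropy}) takes essential self-adjointness of $L+K-JKJ$ as a hypothesis. The last clause of Proposition~\ref{prop:FKN-kernels}, taken from the stochastic-positivity literature \cite{Gerard2005}, is exactly what discharges that hypothesis. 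What you are missing is a core argument:
\begin{itemize}
\item Show that $L+K-JKJ$ agrees with the abstractly defined standard Liouvillean $L_K$ of $\alpha_K$ on a dense subspace of $D(L)\cap D(K)\cap D(JKJ)$ that is a core for $L_K$, for instance a dense subspace of $D(L_K)$ invariant under $e^{itL_K}$.
\item Since $L+K-JKJ$ is symmetric on the whole intersection, this gives essential self-adjointness with closure $L_K$.
\end{itemize}
The $L^p$/$L^q$ integrability is an input to that argument, not a replacement for it. It should also be stated as $K\Omega_K=\mathcal V(KF_{[0,\beta/2]})$ with $\norm{KF_{[0,\beta/2]}}_{L^2}\le\norm{K}_{L^p}\norm{e^{-\frac\beta2K}}_{L^q}$ (by H\"older plus Jensen), since $\Omega_K\neq e^{-\frac\beta2K}\Omega$.
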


The last Proposition gives a quantum-mechanical interpretation of the construction of perturbations of measures with Feynman-Kac-Nelson kernels; indeed, by direct comparison the stochastically positive KMS state $\omega_K$ corresponds exactly to the perturbation theory discussed in Section \ref{sec:perturbations}. A probabilistic interpretation of the Araki relative entropy $S(\omega|\omega_K)$ becomes a direct consequence of the Feynman-Kac formula.

\begin{proposition}[Probabilistic representation of Araki relative entropy]\label{prop:relative-entropy-probabilistic}
Let $(\gvN,\AvN,\alpha,\omega_\beta)$ be the Gaussian stochastically positive KMS system associated to a Gaussian equilibrium state $\omega_\beta$ on a von Neumann algebra $\vN$ acting on $\Hi$, and let $(Q,\Sigma,\Sigma_0,R,U(\tau),\mu)$ be the corresponding generalised path space. Under the same assumptions of Prop. \ref{prop:FKN-kernels}, denote
\[
\E(F) \equiv \int_Q F \dd \mu \ ,
\]
and define $V \equiv \int_{-\beta/2}^{\beta/2} U(\tau) K \dd \tau$.
Then, the Araki relative entropy can be written as
\begin{equation}\label{eq:relative-entropy-probabilistic}
S(\omega|\omega_K) = \E(V) + \log \E(e^{-V}) \ .
\end{equation}
\end{proposition}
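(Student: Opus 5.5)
The plan is to start from the formula of Prop.~\ref{prop:relative-entropy}, $S(\omega|\omega_K)=\beta(\Omega,K\Omega)+\log\norm{\Omega_K}^2$, whose hypotheses are supplied by Prop.~\ref{prop:FKN-kernels}: $L+K$ is essentially self-adjoint, and under the $L^p$/$L^q$ integrability assumptions $L+K-JKJ$ is essentially self-adjoint as well. It then suffices to identify each of the two terms with its probabilistic counterpart through the Klein--Landau correspondence.

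\emph{First term.} Since $K$ is affiliated to $\AvN$ and corresponds to a $\Sigma_0$-measurable function, Eq.~\eqref{eq:n-point-functions-probabilistic} with $n=1$ gives $(\Omega,K\Omega)=\int_Q K\dd\mu$. For bounded $K$ this is immediate; for unbounded $K$ I will truncate, $K_n=K\mathbf 1_{\{|K|\le n\}}$, and pass to the limit, using $K\in L^{2+\epsilon}\subset L^1$ and $\Omega=\mathcal V1$. Because $U(\tau)$ preserves $\mu$, we have $\int_Q U(\tau)K\dd\mu=\int_Q K\dd\mu$ for every $\tau$. Fubini, justified by $K\in L^1$ and strong continuity in measure, then gives $\E(V)=\int_{-\beta/2}^{\beta/2}\E(U(\tau)K)\dd\tau=\beta(\Omega,K\Omega)$.

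\emph{Second term.} Here I will prove the Feynman--Kac identity $\norm{\Omega_K}^2=\E(e^{-V})$. Write $\Omega_K=e^{-\frac\beta2(L+K)}\Omega$ and apply the Trotter product formula: $e^{-\frac\beta2(L+K)}=\lim_n\bigl(e^{-\frac{\beta}{2n}L}e^{-\frac{\beta}{2n}K}\bigr)^n$, valid since $L+K$ is essentially self-adjoint and bounded below (the latter coming from the Klein--Landau/FKN construction). Using $P(s)\mathcal V\psi=\mathcal VU(s)\psi$ and $\tilde A\mathcal V\psi=\mathcal VA\psi$, the Trotter product acting on $\Omega=\mathcal V1$ becomes $\mathcal V$ of a Riemann product $\prod_k U(s_k)e^{-\frac{\beta}{2n}K}$, which converges to the FKN kernel $F_{[0,\beta/2]}=e^{-\int_0^{\beta/2}U(t)K\dd t}$ in $L^2(Q,\Sigma_{[0,\beta/2]},\mu)$. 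The OS inner product then gives
\[
\norm{\Omega_K}^2=\int_Q \overline{F_{[0,\beta/2]}}\,RF_{[0,\beta/2]}\dd\mu=\int_Q e^{-\int_0^{\beta/2}U(t)K\dd t-\int_{-\beta/2}^0U(t)K\dd t}\dd\mu=\E(e^{-V}),
\]
where I use $RU(t)=U(-t)R$ and $RK=K$ (which holds because $R$ commutes with $E_0$ and $K$ is $\Sigma_0$-measurable). Alternatively, this is exactly the content of \cite[Proposition 6.2]{Gerard2005} and of Klein--Landau's perturbation theorem, and I would cite it, checking only that the normalisations agree. Combining the two terms gives Eq.~\eqref{eq:relative-entropy-probabilistic}.

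The main obstacle is the second term for unbounded $K$. The convergence of the Trotter products to $F_{[0,\beta/2]}$ in $L^2$, and the domain statement $\Omega\in D(e^{-\beta(L+K)/2})$, require the hypotheses $K\in L^{2+\epsilon}$ and $e^{-\beta K}\in L^1$ together with hypercontractivity-type estimates, as in \cite{Klein1981}. I would handle this by first proving the identity for bounded truncations $K_n$, where everything is elementary. I would then pass to the limit: by monotone/dominated convergence on the probabilistic side, using $e^{-V}\in L^1$, and by strong resolvent convergence of $L+K_n\to L+K$ on the operator side.
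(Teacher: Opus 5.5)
Your overall structure is the paper's. You start from $S(\omega|\omega_K)=\beta(\Omega,K\Omega)+\log\|\Omega_K\|^2$ and identify the first term through the one-point Klein--Landau identity and the $U(\tau)$-invariance of $\mu$. You identify the second through $\Omega_K=\mathcal V(F_{[0,\beta/2]})$, the OS inner product and $RF_{[0,\beta/2]}=e^{-\int_{-\beta/2}^{0}U(\tau)K\,\mathrm d\tau}$. Those parts are fine.

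The gap is in your own derivation of $e^{-\frac\beta2(L+K)}\Omega=\mathcal V(F_{[0,\beta/2]})$. Your Trotter step assumes $L+K$ is bounded below, and that is false. Here $L$ is a Liouvillean, not a Hamiltonian. The KMS property gives $\Delta_\Omega=e^{-\beta L}$, and $J\Delta_\Omega J=\Delta_\Omega^{-1}$ then gives $JLJ=-L$. So the spectrum of $L$ is symmetric and unbounded in both directions. Adding $K$, even a bounded truncation $K_n$, does not change this.

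As a result, $e^{-sL}$ is an unbounded operator, defined only on the local domains $\mathcal D_\tau$ for $s\le\tau$. The Trotter--Kato product formula needs uniformly exponentially bounded semigroups, so it tells you nothing about the limit of $\bigl(e^{-\frac{\beta}{2n}L}e^{-\frac{\beta}{2n}K}\bigr)^n\Omega$. Each such product can be made sense of on $\Omega$ via $P(s)\mathcal V\psi=\mathcal VU(s)\psi$, and the Riemann products do converge to $F_{[0,\beta/2]}$ in $L^2$. But nothing in your argument shows the limit vector equals $e^{-\frac\beta2(L+K)}\Omega$. That identification is the real content of the statement, and it is not elementary even for bounded $K_n$. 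In effect you have carried over Nelson's zero-temperature Feynman--Kac argument, where $H_0+V$ is bounded below, to a thermal setting where it does not apply.

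The paper avoids product limits entirely. It defines the perturbed local semigroup $P_K(s)\mathcal V\psi=\mathcal V(F_{[0,s]}U(s)\psi)$ on $\mathcal D^K_\tau$. It then uses G\'erard--J\"akel's Theorem 7.4 to get a local symmetric semigroup with a unique self-adjoint generator $H_K$, and identifies $H_K=\overline{L+K}$ using $K\in L^{2+\epsilon}$. This gives $e^{-s(L+K)}\mathcal V1=\mathcal V(F_{[0,s]})$ directly for $s\le\beta/2$. Your fallback of citing G\'erard--J\"akel or Klein--Landau is this route.

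If you want to keep your truncation scheme, replace Trotter by Araki's norm-convergent Dyson expansion of $e^{-\frac\beta2(L+K_n)}\Omega$ for bounded $K_n$. Its terms map under $\mathcal V$ to time-ordered integrals of $U(t_1)K_n\cdots U(t_k)K_n$, which sum to $F^{(n)}_{[0,\beta/2]}$. Your limit step then works, but only if you combine strong resolvent convergence with the norm convergence $\mathcal VF^{(n)}\to\mathcal VF$; resolvent convergence alone does not control the unbounded function $e^{-\beta x/2}$. Concretely, take $\chi\in C_c(\mathbb R)$ and $g(x)=\chi(x)e^{-\beta x/2}$. Then $g(L+K)\Omega=\chi(L+K)\mathcal VF$, and letting $\chi\uparrow 1$ gives both $\Omega\in D(e^{-\frac\beta2(L+K)})$ and the identity.

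A smaller point: $RK=K$ does not follow from $RE_0=E_0R$, which only says that $R$ maps $\Sigma_0$-measurable functions to $\Sigma_0$-measurable functions. You need $RE_0=E_0$ (Klein's definition), or, concretely, the fact that the time reflection fixes the $\tau=0$ slice.
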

\begin{proof}
First, from Eq. \eqref{eq:n-point-functions-probabilistic}, setting $A_1 = K$, $A_{n\geq 2} = 0 $ and $s_{n\geq 1} =0$ we get
\[
\beta \omega(K) = \beta \E(K) \ ,
\]
which using the invariance of $\E$ under $U(\tau)$ we can rewrite as
\[
 \beta \E(K) = \E(V) \ .
\]
Secondly, for $\tau \in [0,\beta/2]$, we set
\[
\mathcal M^V_\tau \equiv \text{ linear span of } \bigcup_{s \in [0,\beta/2-\tau]} F_{[0,s]} L^\infty(Q,\Sigma_{[0,\beta/2-\tau]},\mu) \ ,
\]
and
\[
\mathcal D^V_\tau \equiv \mathcal V(\mathcal M^V_\tau) \ .
\]
Then, it is possible to show that
\[
P_K(s) : \mathcal D^K_\tau \ni \mathcal V(\psi) \mapsto \mathcal V(F_{[0,s]}U(s) \psi) 
\]
is a well-defined linear operator, and that $(P_K(\tau), \mathcal D_\tau^K,\beta/2)$ is a local symmetric semigroup on the physical Hilbert space $\Hi_\beta$ \cite[Theorem 7.4]{Gerard2005}. Then, there is a unique self-adjoint operator $H_K$ associated to the local symmetric semigroup, such that $P_K(s) u = e^{-sH_K} u$ for every $u \in \mathcal D_\tau^K$; and if $K \in L^{2+\epsilon}(Q,\Sigma_0,\mu)$, then $H_K = L+K$ (or, more precisely, its closure). Therefore,
\[
e^{-s(L+K)} (\mathcal V \mathbf 1) = \mathcal V(F_{[0,s]}) \ .
\]

From the physical inner product $(\mathcal V\psi,\mathcal V\psi) = \int_Q \psi\, R\psi\,d\mu$ we get
\[
\norm{e^{-\frac{\beta}{2}(L+K)}\Omega}^2 = \int_Q F_{[0,\beta/2]}\cdot R\left (F_{[0,\beta/2]}\right ) \dd \mu \ .
\]
Using $RU(\tau)=U(-\tau)R$ and $RK=K$, it follows that $RF_{[0,\beta/2]} = e^{-\int_{-\beta/2}^0 U(\tau)K \dd \tau}$, and so
\[
F_{[0,\beta/2]}\cdot R(F_{[0,\beta/2]}) = e^{-\int_{-\beta/2}^{\beta/2}U(\tau)K\dd \tau} \ ,
\]
from which we have the Feynman-Kac formula
\begin{equation}
(\Omega, e^{-\beta(L+K)} \Omega) = \int_Q e^{-V} \dd \mu_C \ .
\end{equation}
Identifying $\norm{e^{-\frac{\beta}{2}(L+K)} \Omega}^2 = (\Omega, e^{-\beta(L+K)} \Omega)$ the statement follows.
\end{proof}

\section{The Euclidean massive free field at positive temperature}\label{sec:Euclidean-free-field}

\subsection{Relativistic bosonic field}
Before moving to the derivation of the Polchinski equation, we give a concrete probabilistic representation of a Gaussian KMS system, in terms of the \emph{free Euclidean massive field at finite temperature}. The motivating example is the study of perturbations of the massive Klein-Gordon field on Minkowski spacetime; the associated probabilistic model is the free Euclidean massive field.

Let $\mathcal M=(\mathbb R^d,\eta)$ be $d-$dimensional Minkowski spacetime with a Cauchy surface $\Cauchy= \{ \x \in \mathcal M \ | \ x^0 = 0 \}$, isometric to $\mathbb R^{d-1}$. Let $\mathscr L =C^\infty_c(\Cauchy) \oplus C^\infty_c(\Cauchy)$ be the space of real, smooth, and compactly supported Cauchy data on $\Cauchy$. By the Leray theorem, the Cauchy data $(q,p) \in \mathscr L$ define a unique solution $\phi$ to the massive Klein-Gordon equation
\[
(\partial^2_t -\Delta_x + m^2)\phi = 0 \ ,
\]
with spatially compact support, such that $\phi|_\Cauchy = q$ and $\partial_t \phi|_\Cauchy = p$. Here $\Delta_x$ denotes the Laplacian on $\mathbb R^{d-1}$. 
The symplectic form is
\[
\sigma((q_1,p_1),(q_2,p_2))=\int_{\Cauchy}(q_1 p_2 - q_2 p_1) \dd \Cauchy \ .
\]
In order to consider the Abelian splitting defined in Section \ref{sec:quasi-free-systems}, we need to identify a Hilbert space $(\mathfrak h, ( \cdot , \cdot )_{\mathfrak h})$ containing $\mathscr L$. Consider the operator $\h \equiv (-\Delta_x+m^2)^{\frac{1}{2}}$ on $L^2(\mathbb R^{d-1},\dd x)$. In this case, a canonical construction gives the complex, linear map $\K:\mathscr L \to L^2(\Cauchy)$, defined by $\K(q,p) \equiv \h^{1/2}q+i \h^{-1/2}p$. The completion of $\K\mathscr L$ with respect to the $L^2$ inner product gives the one-particle Hilbert space $\mathfrak h \equiv L^2(\Cauchy, \dd x)$. It is immediate to see that $\Im( \K(q_1,p_1),\K(q_2,p_2))_{\mathfrak h} = \sigma((q_1,p_1),(q_2,p_2))$. 

The classical Klein-Gordon evolution is the one-parameter group
\[
T^s \equiv 
\begin{pmatrix}
\cos(s\h) &\h^{-1}\sin(s\h) \\
-\h\sin(s\h) &\cos(s\h)
\end{pmatrix} \ ,
\]
and again by direct computation $\K T^s = e^{-i s \h} \K$, so that $\h$ corresponds to the one-particle Hamiltonian introduced in Section~\ref{sec:Weyl}.

Now, consider the time reversal operator $\theta$ on compactly supported, smooth solutions, acting by $\theta \phi(t,x) = \phi(-t,x)$; on Cauchy data, this acts as $\theta (q,p) = (q,-p)$, so that
\[
\K\theta (q,p) = \h^{1/2}q - i \h^{-1/2}p \ .
\] 
Therefore, on the one-particle Hilbert space $\mathfrak h$, time reversal is realised as the complex conjugation $\Theta : \mathfrak h \ni f \mapsto \overline f$.

Now, the Weyl $C^*-$algebra is $\mathcal W(\mathfrak h, \sigma)$, and the unique Gaussian KMS state $\omega_\beta$ at inverse temperature $\beta$ has covariance \cite[Theorem 3.1]{Gielerak1998}
\[
H(f,g) = \left ( f, \left( 1+2\rho \right) g \right )_{\mathfrak h} \ ,
\]
with $\rho \equiv (e^{\beta h}-1)^{-1}$.

There is a convenient GNS representation of $(\mathcal W(\mathfrak h, \sigma), \omega_\beta)$, known as the \emph{Araki-Woods representation} \cite{ArakiWoods1963}. 

This gives a von Neumann algebra $\vNb = \pi_\omega(\CCR)''$, generated by $\{ W_\omega(f) \ | \ f \in \mathfrak h\}$. The Abelian splitting determined by $\Theta$ is $\mathfrak h_\pm = \{ f \in \mathfrak h \ | \ \Theta f = \pm f \}$, and the Abelian subalgebra is $\vNb^+$, generated by $\{ W_\omega(f) \ | \ f \in \mathfrak h_+\}$. 

In terms of Cauchy data, we have
\[
\mathfrak h_+ = \{ f \in  \mathfrak h \ | \ f = \h^{\frac{1}{2}} q \ , \ q \in C_c^\infty(\Cauchy)\} \ .
\]
In other words, $\mathfrak h_+$ is the Hilbert space corresponding to one-particle elements with vanishing initial momentum.

\subsection{Massive Euclidean free field}
Applying the construction explained in Section \ref{sec:quasi-free-systems}, from $\sigma$ and $H$ we get the positive-definite bounded operator
\[
R_\beta(s) = \frac{e^{-sh}+e^{-(\beta-s)\h}}{1-e^{-\beta \h}} \ , \qquad 0 \leq  s \leq \beta \ ,
\]
which can be extended periodically to the whole real line, and, from Thm. \ref{thm:probabilistic-Weyl-algebra}, we get a Gaussian thermal process $\{\xi_s^\beta\}_s$ indexed on $\mathfrak h_+$, with covariance $R_\beta(s)$.

In order to construct the Euclidean free field, consider the space of functions $f : \mathbb S_\beta \ni \tau \mapsto f(\tau) \in \mathfrak h_+$, and define the covariantly smeared random field
\[
\varphi(f) \equiv \int_{\mathbb S_\beta} \langle \xi_\tau^\beta,f(\tau) \rangle \dd \tau \ .
\]
This satisfies
\[
\E\left [ \varphi(f) \varphi(g) \right ] = \frac{1}{2}\int_{\mathbb S_\beta^{\otimes 2}} \dd \tau_1 \dd \tau_2 \langle f(\tau_1),R_\beta(\tau_1-\tau_2) g(\tau_2) \rangle \ .
\]
Now, recalling that $f \in \mathfrak h_+$ is in the form $f = \h^{\frac{1}{2}}q$ for $q \in C^\infty_c(\Cauchy)$, we get that the random field $\varphi$ is indexed by $\mathscr D(\mathbb S_\beta \times \mathbb R^{d-1})$ with covariance
\[
\E\left [ \varphi(f) \varphi(g) \right ] =\int_{\mathbb S_\beta^{\otimes 2}} \dd \tau_1 \dd \tau_2 \frac{1}{2\h} \langle f(\tau_1),R_\beta(\tau_1-\tau_2) g(\tau_2) \rangle \ ,
\]
that is, the covariance is a bilinear map $C : \mathscr D(\mathbb S_\beta \times \mathbb R^{d-1}) \times \mathscr D(\mathbb S_\beta \times \mathbb R^{d-1}) \to \mathbb R$, where $\mathscr D(\mathbb S_\beta \times \mathbb R^{d-1}) = C^\infty_c(\mathbb S_\beta \times \mathbb R^{d-1})$, with integral kernel
\begin{equation}\label{eq:thermal-covariance}
C(\tau,x) \equiv \frac{1}{2\h}\frac{e^{-\tau \h}+e^{-(\beta-\tau)\h}}{1-e^{-\beta \h}} \ .
\end{equation}

This is exactly the integral kernel of the Green's function of the Laplacian on the thermal cylinder $\Mb \equiv \mathbb S_\beta \times \mathbb R^{d-1}$, where $\mathbb S_\beta=\mathbb R/\beta\mathbb Z$,
\[
\langle f, C g \rangle = \langle f, \left( -\partial_\tau^2-\Delta_x+m^2
\right)^{-1}  g \rangle \ , \quad f, \ g \in \mathscr D(\mathbb S_\beta \times \mathbb R^{d-1}) \ ,
\]
with periodic boundary conditions in $\tau$. Therefore, the Gaussian KMS Weyl system $(\vNb,\vNb^+,\alpha,\omega_\beta)$ is equivalent to the random Gaussian process with zero mean and covariance $C: f,\ g \in \mathscr D(\Mb) \mapsto (-\partial_\tau^2 - \Delta_x + m^2)^{-1}$.

By the Minlos theorem, this can be realised as a Borel measure $\mu_C$ on the topological dual $\mathscr D'(\Mb)$, defined by the characteristic function
\[
\int_{\mathscr D'(\Mb)} e^{i\varphi(f)} \dd \mu_C = e^{-\frac{1}{2}C(f,f)} \ , \forall \ f \in \mathscr D(\Mb) \ .
\]

The probability space then is $Q= \mathscr D'(\Mb)$ with Gaussian measure $\mu_C$. Moreover, the probability space $(\mathscr D'(\Mb), \Sigma, \mu_C)$ supports a generalised path space. First, it is possible to construct sharp-time random fields
\[
\varphi(\tau,f) \equiv \lim_{k\to \infty}\varphi(\delta_k(\cdot-\tau) \otimes f) \ ,
\]
where $\delta_k$ is a family of approximants of the Dirac delta on $\mathbb S_\beta$ and $f \in L^2(\mathbb R^{d-1})$ \cite{Gerard2005b}. Then, $\Sigma_0$ is generated by the functions $\{ \varphi(0,f) \ | \ f \in \mathscr D(\mathbb R^{d-1}) \}$.

Similarly, we see that
\begin{multline*}
C_0(\tau_1,\tau_2,f_1,f_2) \equiv \lim_{k\to \infty} C\left (\delta_k(\cdot - \tau_1) \otimes f_1,\delta_k(\cdot - \tau_2) \otimes f_2 \right ) \\
= \left ( f_1,\frac{1}{2h}\frac{e^{-\abs{\tau_2-\tau_1}h}+e^{-(\beta-\abs{\tau_2-\tau_1})h}}{1-e^{-\beta h}}  f_2 \right)_{L^2(\mathbb R^{d-1})} 
\end{multline*}
gives a well-defined sharp-time covariance.

Now, denoting with $\iota_s:(\tau,x) \mapsto (\tau+s,x)$ the map induced on $Q$ by the time translations, the time evolution is the one-parameter group which, on functions $F: Q \to \mathbb C$, acts by
\[
U(\tau) F(\varphi) = F(\iota_{-\tau} \varphi) \ ,
\]
so that in particular
\[
U(\tau) \varphi(0,f) = \varphi(\tau,f) \ .
\]

Then, it is possible to prove \cite[Section 4.2]{Gerard2005b} that $\Sigma = \bigvee_{\tau \in \mathbb S_\beta} \Sigma_\tau$. Finally, if $r$ is the Euclidean time reflection around $\tau = 0$ and $R$ is the measure-preserving transformation of $(\mathscr D'(\mathbb S_\beta \times \mathbb R^{d-1}), \Sigma, \mu_C)$ generated by $r$, the generalised path space associated to the Euclidean free field equivalent to the relativistic bosonic field is $(\mathscr D'(\mathbb S_\beta \times \mathbb R^{d-1}),\Sigma,\Sigma_0,U(\tau),R,\mu_C)$.

\begin{remark}
Thanks to the regularity properties of the covariance, it is actually possible to extend the index space to $\mathscr S(\mathbb S_\beta\times\mathbb R^{d-1})$, the space of smooth periodic functions in $\tau$ which are Schwartz in $\mathbb R^{d-1}$. Then, the Gaussian random field becomes a tempered distribution $\varphi \in \mathscr S'(\mathbb S_\beta\times\mathbb R^{d-1})$.
\end{remark}

\section{Abstract Wiener space for the massive Euclidean free field}\label{sec:AWS}

The distribution space $\mathscr D'(\Mb)$ given by the Minlos theorem is much larger than what is actually needed: the support of the covariance, and hence of the measure, can be restricted by careful considerations on the decay properties and integrability of $C$; this has been done for the Euclidean free field on $\mathbb R^d$ by Reed and Rosen \cite{Reed1974}. Below, we will give a simple weighted Sobolev space $\Ba$ such that $\mu_C(\Ba) = 1$. This gives the massive Euclidean free field the structure of an \emph{abstract Wiener space}, which in turn provides the right framework to derive the Polchinski equation in infinite dimensions.

We begin with a brief summary of the definition of abstract Wiener spaces, following the original formulation by Gross \cite{Gross1967}. Consider a separable Hilbert space $(\Hi, ( \ \cdot \ , \ \cdot \ )_\Hi)$, with norm $\norm{ \ \cdot \ }_\Hi$.

A norm $\mnorm{ \ \cdot \ }$ on $\Hi$ is \emph{measurable} if, for every $\epsilon >0$, there is a finite-dimensional subspace $E_\epsilon$ such that, for every finite-dimensional subspace $E \subset E_\epsilon^\perp$, $\mu_E(\{x \in E \ | \ \mnorm{x} > \epsilon \})< \epsilon$, where $\mu_E$ is the standard Gaussian measure on $E$ with variance $1$:
\[
\dd \mu_E(x) = (2\pi)^{-k/2} e^{-\frac{\abs{x}^2}{2}} \dd x \ ,
\]
where $k$ is the dimension of $E$ and $\dd x$ is the Lebesgue measure in $E$.

It is a consequence of the definition that there is a constant $a$ such that \cite{Gross1967potential}
\[
\mnorm{x} \leq a \norm{x}_\Hi \ , \ \forall \ x \in \Hi \ .
\]

A useful class of measurable norms is given by Hilbert-Schmidt operators \cite{Sheffield2007}.
\begin{lemma} \label{lemma:measurable-norm}
If $T$ is a Hilbert-Schmidt operator on $\Hi$, then the norm $\norm{T \ \cdot \ }_\Hi$ is measurable.
\end{lemma}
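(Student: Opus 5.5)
The plan is to verify the definition of measurability directly, using the spectral structure of the Hilbert--Schmidt operator. Set $A \equiv T^*T$. It is positive and trace class, with $\Tr A = \norm{T}_{HS}^2 < \infty$. For every $x \in \Hi$ we have $\norm{Tx}_\Hi^2 = (x, Ax)_\Hi$. Let $\{e_n\}_{n \geq 1}$ be an orthonormal eigenbasis of $A$ on $(\ker A)^\perp$, with eigenvalues $\lambda_n \geq 0$ and $\sum_n \lambda_n < \infty$. Given $\epsilon > 0$, choose $N$ with $\sum_{n > N}\lambda_n < \epsilon^3$, and take the finite-dimensional subspace $E_\epsilon \equiv \mathrm{span}\{e_1,\ldots,e_N\}$.

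Now let $E \subset E_\epsilon^\perp$ be finite-dimensional with orthonormal basis $u_1,\ldots,u_k$, and let $P_E$ denote the orthogonal projection onto $E$. Under the standard Gaussian $\mu_E$, the random vector $x = \sum_i x_i u_i$ has i.i.d.\ standard normal coordinates. The key computation is
\[
\int_E \norm{Tx}_\Hi^2 \dd\mu_E(x) = \sum_{i=1}^k (u_i, A u_i)_\Hi = \Tr(P_E A P_E) \ .
\]
Since $E \perp E_\epsilon$ and $E_\epsilon$ is invariant under $A$, we have $P_E A P_E = P_E (1-P_{E_\epsilon}) A (1-P_{E_\epsilon}) P_E$. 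Therefore
\[
\Tr(P_E A P_E) \leq \Tr\big((1-P_{E_\epsilon})A\big) = \sum_{n>N}\lambda_n < \epsilon^3 \ ,
\]
where we use that compressing a positive trace-class operator by a projection does not increase its trace.

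Chebyshev's inequality then gives
\[
\mu_E\big(\{x \in E \ | \ \norm{Tx}_\Hi > \epsilon\}\big) \leq \epsilon^{-2}\int_E \norm{Tx}_\Hi^2 \dd\mu_E < \epsilon \ ,
\]
which is exactly the measurability condition. We should also check that $\norm{T\,\cdot\,}_\Hi$ is a genuine norm. This holds when $T$ is injective, which is implicit in the statement. Without injectivity it is a seminorm, and the same argument still applies.

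No step here is a real obstacle. The only point needing care is the trace-monotonicity claim $\Tr(P_E A P_E) \leq \Tr((1-P_{E_\epsilon})A)$. It holds because $P_E \leq 1 - P_{E_\epsilon}$, so for a positive operator the partial trace over the smaller subspace is bounded by the partial trace over the larger one.
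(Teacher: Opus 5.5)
Your proof is correct. Computing $\int_E \norm{Tx}_\Hi^2 \dd\mu_E(x) = \Tr(P_E T^*T P_E)$, bounding it by the spectral tail $\sum_{n>N}\lambda_n<\epsilon^3$ of $T^*T$ for $E\perp E_\epsilon$, and applying Chebyshev is the standard argument for this fact, and your caveat that $T$ must be injective for $\norm{T\,\cdot\,}_\Hi$ to be a norm (which holds for the operator $T_{a,\rho}$ used later) is well taken. The paper gives no proof of its own and simply cites Sheffield, so there is nothing further to compare.
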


Denote by $\Ba$ the Banach space completion of $\Hi$ with respect to a measurable norm $\mnorm{ \ \cdot \ }$ and with $\Ba^*$ the space of continuous linear functionals on $\Ba$. Since each element in $\Ba^*$ can be identified with a continuous linear functional on $\Hi$, thanks to the canonical identification $\Hi^* \simeq \Hi$ we have the chain of inclusions
\[
\Ba^* \subset \Hi^* \simeq \Hi \subset \Ba \ .
\]

In the larger Banach space $\Ba$ it is possible to identify a Gaussian measure. Indeed,  we have the following theorem \cite{Gross1967,Sheffield2007}. 
\begin{theorem}[Gross 1967]\label{thm:gross}
If $\mnorm{ \ \cdot \ }$ is a measurable norm, then there is a unique probability measure $\p$ on $\Ba$ such that, if $\varphi$ is a random variable with probability measure $\p$, then, for every $f \in \Ba^*$, the random variable $f(\varphi)$ is a one-dimensional Gaussian with zero mean and variance $\norm{f}^2_\Hi$; that is,
\[
\int_\Ba e^{if(\varphi)} \dd \p( \varphi) = e^{-\frac{1}{2}\norm{f}_\Hi^2} \ .
\]
\end{theorem}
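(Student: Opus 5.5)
The statement to prove is Gross's theorem (Theorem~\ref{thm:gross}): a measurable norm $\mnorm{\cdot}$ on $\Hi$ yields a unique Borel probability measure $\p$ on the completion $\Ba$ with characteristic functional $e^{-\frac12\norm{f}_\Hi^2}$ for $f\in\Ba^*$.

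\emph{Uniqueness.} Since $\Ba$ is a separable Banach space, its Borel $\sigma$-algebra is generated by the cylinder functions $\varphi\mapsto (f_1(\varphi),\dots,f_n(\varphi))$ with $f_i\in\Ba^*$. The characteristic functional fixes every finite-dimensional marginal by Bochner/Lévy. A $\pi$–$\lambda$ argument then gives uniqueness.

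\emph{Existence.} I would build $\p$ as the law of a random series.
\begin{enumerate}
\item Use the measurability of $\mnorm{\cdot}$ to choose an increasing sequence of finite-dimensional subspaces $E_1\subset E_2\subset\cdots$ of $\Hi$. Take $E_n\supset E_{\epsilon_n}$ with $\epsilon_n=2^{-n}$, and arrange that $\bigcup_n E_n$ is dense in $\Hi$ by adding one vector of a fixed orthonormal basis at each step.
\item Write $P_n$ for the orthogonal projection onto $E_n$, and let $\{e_k\}$ be an orthonormal basis adapted to this flag. Take $\{\gamma_k\}$ i.i.d.\ standard Gaussians on some probability space and set $S_n=\sum_{e_k\in E_n}\gamma_k e_k$.
\item Then $S_{n+1}-S_n$ is standard Gaussian on $E_{n+1}\ominus E_n\subset E_{\epsilon_n}^\perp$. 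The definition of measurability gives $\mathbb P(\mnorm{S_{n+1}-S_n}>2^{-n})<2^{-n}$.
\item By Borel–Cantelli, $S_n$ is almost surely Cauchy in $\mnorm{\cdot}$, so it converges to a $\Ba$-valued random variable $\varphi$. Define $\p$ as its law.
\end{enumerate}

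\emph{Identifying the characteristic functional.} Fix $f\in\Ba^*$. Its restriction to $\Hi$ is continuous, because $\mnorm{x}\le a\norm{x}_\Hi$, so it corresponds to a vector $h_f\in\Hi$. We have $f(S_n)=\sum_{e_k\in E_n}\gamma_k (h_f,e_k)_\Hi$, which is a centered Gaussian with variance $\norm{P_nh_f}^2_\Hi$. This variance tends to $\norm{f}^2_\Hi$ because $\bigcup E_n$ is dense. Since $f$ is $\mnorm{\cdot}$-continuous, $f(S_n)\to f(\varphi)$ almost surely. Dominated convergence applied to $e^{if(S_n)}$ then gives $\int_\Ba e^{if(\varphi)}\dd\p=e^{-\frac12\norm{f}_\Hi^2}$.

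\emph{Main obstacle.} The delicate step is the one that uses measurability, with $E\subset E_{\epsilon}^\perp$ finite-dimensional and arbitrary, to obtain the summable tail bounds. The flag must be built so that each increment lies orthogonal to the relevant $E_{\epsilon_n}$ while density is still achieved. Interleaving the basis vectors into the flag, as in step 1, handles this: the increments $E_{n+1}\ominus E_n$ stay inside $E_{\epsilon_n}^\perp$ because $E_n\supset E_{\epsilon_n}$.
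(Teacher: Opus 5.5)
Your proposal is correct. It is essentially Gross's original argument, also the one in Sheffield's notes, which the paper cites instead of giving a proof: a flag $E_n\supset E_{2^{-n}}$ with dense union, the random series $\sum_k\gamma_k e_k$ made almost surely Cauchy in $\mnorm{\cdot}$ by Borel--Cantelli, identification of the characteristic functional via $\mnorm{x}\le a\norm{x}_\Hi$ and dominated convergence, and uniqueness because the Borel $\sigma$-algebra of the separable space $\Ba$ is generated by $\Ba^*$.
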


\begin{definition}[Abstract Wiener space]
An abstract Wiener space is a triple $(\Hi, \Ba, \p)$ consisting of a Hilbert space $\Hi$, densely and continuously embedded in a Banach space $\mathcal B$ via an inclusion $\iota:\mathcal H \to \mathcal B$, and a measure $\p$ such that
\[
\int_\Ba e^{if(x)} \dd \p( x) = e^{-\frac{1}{2}\norm{f}_\Hi^2} \ .
\]
\end{definition}

\subsection{Cameron-Martin space}
In order to accommodate the massive Euclidean free field in the framework of abstract Wiener spaces, we follow Gross' procedure of identifying a suitable Hilbert space and then complete it to a Banach space, such that $\Ba \subset \mathscr D'(\Mb)$. We will then show that the massive Euclidean free field measure $\mu_C$ satisfies $\mu_C(\Ba)=1$, so that the random field is almost surely in $\Ba$.

First, the Hilbert space $\mathcal H$ is the Cameron-Martin space for $(\mathscr D'(\Mb),\Sigma,\mu_C)$. This is canonically associated with the covariance $C$. By the Schwartz kernel theorem,  the covariance as a bilinear map $C: \mathscr D(\Mb) \times \mathscr D(\Mb) \to \mathbb R$ can be realised as an operator $C : \mathscr D(\Mb) \to \mathscr D'(\Mb)$, so that $C(f,g) = \langle f, Cg\rangle$, where the angle brackets denote the standard pairing between the test function space and the distribution space. Then, we can equip $C \mathscr D$ with the product and associated norm
\[
(Cf,Cg)_{\Hi_C} \equiv \langle f, C g \rangle \ , \quad \norm{Cf}_{\Hi_C}\equiv \sqrt{(Cf,Cf)}=\sqrt{C(f,f)} \ .
\]
\begin{definition}[Cameron-Martin space]
The \emph{Cameron-Martin space} is the Hilbert space completion
\[
\Hi_C \equiv \overline{C\mathscr D}^{\norm{ \ \cdot \ }_{\Hi_C}} \ .
\]
\end{definition}

\begin{proposition}\label{prop:Cameron-Martin-bound}\emph{\cite[Proposition 3.32]{Hairer2026}}
For any $h \in \mathscr \Hi_C$, there is a positive constant $A$ such that
\[
\abs{\langle h,f\rangle} \leq A \sqrt{C(f,f)} \ , \quad \forall f \in \mathscr D(\Mb) \ .
\]
\end{proposition}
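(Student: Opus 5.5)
The plan is to reduce the statement to the Cauchy--Schwarz inequality in $\Hi_C$, taking $A = \norm{h}_{\Hi_C}$. The main point is to make precise what $\langle h, f\rangle$ means when $h$ is an abstract element of the completion $\Hi_C$ rather than a distribution of the form $Cg$.

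\emph{Dense subspace.} First I will treat $h = Cg$ with $g \in \mathscr D(\Mb)$. Here $\langle Cg, f\rangle = C(g,f)$ by the Schwartz kernel realisation of $C$. The bilinear form $C(\cdot,\cdot)$ is symmetric and positive on $\mathscr D(\Mb)$. In fact it is positive definite, since $C = (-\partial_\tau^2-\Delta_x+m^2)^{-1}$ with $m>0$, so $\norm{\cdot}_{\Hi_C}$ is a genuine norm on $C\mathscr D$. Cauchy--Schwarz for $C(\cdot,\cdot)$ then gives
\[
\abs{\langle Cg, f\rangle} = \abs{C(g,f)} \leq \sqrt{C(g,g)}\,\sqrt{C(f,f)} = \norm{Cg}_{\Hi_C}\sqrt{C(f,f)} \ .
\]
Equivalently, $\langle Cg,f\rangle = (Cg, Cf)_{\Hi_C}$.

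\emph{Extension to the completion.} For general $h \in \Hi_C$, choose $g_n \in \mathscr D(\Mb)$ with $Cg_n \to h$ in $\Hi_C$. For each fixed $f$, the identity above shows that $\langle Cg_n, f\rangle = (Cg_n, Cf)_{\Hi_C}$ is Cauchy, with limit $(h, Cf)_{\Hi_C}$. This limit is independent of the approximating sequence. I will therefore regard $h$ as the linear functional
\[
f \mapsto \langle h, f\rangle \equiv (h, Cf)_{\Hi_C} \ .
\]
The bound $\abs{\langle h,f\rangle} \le \norm{h}_{\Hi_C}\sqrt{C(f,f)}$ is then immediate from Cauchy--Schwarz in $\Hi_C$, so the constant is $A = \norm{h}_{\Hi_C}$.

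\emph{Main obstacle: $\Hi_C$ as a space of distributions.} The step requiring care is the identification of $\Hi_C$ with a subspace of $\mathscr D'(\Mb)$. Two facts are needed.
\begin{enumerate}
\item The functional $f \mapsto (h,Cf)_{\Hi_C}$ is continuous on $\mathscr D(\Mb)$. This follows because $f\mapsto \sqrt{C(f,f)}$ is continuous, being dominated by a Sobolev $H^{-1}$ norm, which is in turn controlled by finitely many seminorms on each compact support.
\item The map $h \mapsto \langle h,\cdot\rangle$ is injective. If $(h, Cf)_{\Hi_C} = 0$ for all $f$, then $h \perp C\mathscr D$, and $C\mathscr D$ is dense in $\Hi_C$ by construction, so $h = 0$.
\end{enumerate}
With this identification, the pairing $\langle h, f\rangle$ in the statement is the distributional pairing, consistent with the case $h=Cg$, and the bound holds for all $h\in\Hi_C$.
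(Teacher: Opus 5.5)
Your proposal is correct and follows the same route as the paper: there too $h$ is realised as a distribution through the canonical injection $\langle i_C h, f\rangle = (h, Cf)_{\Hi_C}$, and Cauchy--Schwarz in $\Hi_C$ gives the bound with $A = \norm{h}_{\Hi_C}$ (for $h=0$ any positive constant works). You simply spell out the density, continuity and injectivity facts behind $i_C$ that the paper asserts as canonical.
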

\begin{proof}
Since $C: \mathscr D \to \mathscr D'$, there is a canonical continuous injection $i_C:\Hi_C \to \mathscr D'$.
For every $f \in \mathscr D$ this is given by 
\[
\langle i_C h, f\rangle = ( h, Cf)_{\Hi_C} \ .
\]
By Cauchy-Schwarz, we then have
\[
\abs{\langle i_C h, f\rangle} \leq \norm{h}_{\Hi_C} \norm{C f}_{\Hi_C} = \norm{h}_{\Hi_C} \sqrt{C(f,f)} \ .
\]

\end{proof}
The subspace $i_C(\Hi_C)$ identifies the Cameron-Martin directions leaving the Gaussian measure $\mu_C$ quasi-invariant, by the Cameron-Martin theorem.

There is an equivalent realisation of the Cameron-Martin space based on the Gel'fand triple $\mathscr D(\Mb) \subset L^2(\Mb) \subset \mathscr D'(\Mb)$. Suppose $C$ is represented by a bounded positive operator $C: L^2(\Mb) \to L^2(\Mb)$. By the spectral theorem we can construct $C^{1/2}:L^2(\Mb) \to L^2(\Mb)$. Consider $f \in \mathscr D(\Mb) \subset L^2(\Mb)$; then,  
\[
\norm{Cf}_{\Hi_C}^2 = C(f,f) = (C^{1/2}f,C^{1/2}f) = \norm{C^{1/2}f}_{L^2(\Mb)}^2 \ .
\]
Then, the map $C^{-1/2}: Cf \mapsto C^{1/2}f$ is an isometry from the dense subspace $C\mathscr D(\Mb) \subset \Hi_C$ into $L^2(\Mb)$.

Since $C$ is injective, $\Ran C^{1/2}$ is dense in $L^2(\Mb)$, and $C^{-1/2}$ extends uniquely to a unitary
\[
C^{-1/2}: \Hi_C \to L^2(\Mb) \ ,
\]
Equivalently, its inverse uniquely identify the Cameron-Martin space with $C^{1/2}L^2(\Mb)$, equipped with the product
\[
(h,k)_{\Hi_C} = (C^{-1/2}h,C^{-1/2}k)_{L^2(\Mb)}
\]
and the norm
\[
\norm{h}_{\Hi_C} = \norm{C^{-1/2}h}_{L^2} \ , h = C^{1/2} u \ , \quad u \in L^2(\Mb) \ .
\]
Equivalently, as reference for a computation in the next section, we see that
\[
\norm{C^{1/2}u}_{\Hi_C} = \norm{ u }_{L^2(\Mb)} \ .
\]

In the case of the massive Euclidean free field, the covariance is $C = (-\partial_\tau^2 - \Delta_x + m^2)^{-1}$, and the Cameron-Martin space is the Sobolev space $\Hi_C = H^1_m(\Mb)$, with norm
\[
\norm{h}_{\Hi_C}^2 = \int_{\Mb} \left (\abs{\partial_\tau h}^2 + \abs{\nabla_x h}^2 + m^2 \abs{h}^2 \right ) \dd\tau \dd x \ .
\]
Equivalently, thanks to the previous discussion, this is $H^1_m(\Mb) = C^{1/2}L^2(\Mb)$, with inner product
\[
(h,k)_{\Hi_C} =( C^{-1/2}h,C^{-1/2}k)_{L^2(\Mb)} \ .
\]

\subsection{Banach space completion}
The identification of the Banach space $\Ba$ is non-canonical, and it requires to find a suitable measurable norm on $\Hi_C$ and complete it to a Banach space, carrying, by Gross' construction, a Gaussian measure $p$. Moreover, we have to show that there exists an injection $j : \Ba \to \mathscr D'(\Mb)$ such that $j_* \p = \mu_C$ and such that $\mu_C(\Ba)  =1$, using an argument similar to that of Reed and Rosen \cite{Reed1974}.

The Banach space $\Ba$ will be a suitably weighted Sobolev space. We define the weight
\[
w_\rho:\mathbb R^{d-1} \ni x \mapsto w_\rho(x) \equiv (1+\abs{x}^2)^{-\rho/2} \ ,
\]
with $\rho >\frac{d-1}{2}$, so that $w_\rho(x) \in L^2(\mathbb R^{d-1})$. Now fix an $a>\frac{d-2}{4}$ and introduce the space
\[
\mathcal L_{a,\rho} \equiv \left \{u \in \mathscr D'(\Mb) \ | \ \mnorm{u}_{a,\rho} < \infty \right \} \ ,
\]
where the norm is
\[
\mnorm{u}_{a,\rho}\equiv \norm {w_\rho(x) C^{a} u}_{L^2(\Mb)} \ .
\]
The Cameron-Martin space then is $\Hi_C = \mathcal L_{-1/2,0}$.

\begin{proposition}\label{prop:measurable-norm}
If  $a>\frac{d-2}{4}$ and  $\rho >\frac{d-1}{2}$, then $\norm{\ \cdot \ }_{a,\rho}$ is a measurable norm on $\Hi_C$.
\end{proposition}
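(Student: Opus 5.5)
The plan is to use Lemma~\ref{lemma:measurable-norm}. We write the norm $\mnorm{\cdot}_{a,\rho}$ restricted to $\Hi_C$ as $\norm{T\,\cdot\,}_{\Hi_C}$ for an operator $T$ on $\Hi_C$, and show that $T$ is Hilbert--Schmidt. Using the unitary $C^{-1/2}:\Hi_C\to L^2(\Mb)$, any $h\in\Hi_C$ can be written as $h=C^{1/2}u$ with $u\in L^2(\Mb)$ and $\norm{h}_{\Hi_C}=\norm{u}_{L^2}$. Then
\[
\mnorm{h}_{a,\rho}=\norm{w_\rho C^{a}C^{1/2}u}_{L^2(\Mb)}=\norm{w_\rho C^{a+1/2}u}_{L^2(\Mb)} .
\]
Hence $\mnorm{h}_{a,\rho}=\norm{Th}_{\Hi_C}$ with $T\equiv C^{1/2}\,w_\rho C^{a+1/2}\,C^{-1/2}$, a unitary conjugate of the multiplication-composed operator $M\equiv w_\rho C^{a+1/2}$ on $L^2(\Mb)$. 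Since Hilbert--Schmidt norms are invariant under unitary conjugation, it suffices to prove $M\in\mathcal L^2(L^2(\Mb))$.

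The key step is to compute the Hilbert--Schmidt norm of $M$ through its integral kernel. The operator $C^{s}$, with $s=a+1/2$, is a Fourier multiplier on $\Mb=\mathbb S_\beta\times\mathbb R^{d-1}$, with symbol $(\omega_n^2+|k|^2+m^2)^{-s}$, where $\omega_n=2\pi n/\beta$. Its kernel is therefore translation invariant, $K_s(\tau-\tau',x-x')$, and $M$ has kernel $w_\rho(x)K_s(\tau-\tau',x-x')$. By Plancherel on the cylinder,
\[
\norm{M}_{\mathrm{HS}}^2=\int_{\Mb}w_\rho(x)^2\dd\tau\dd x\int_{\Mb}|K_s|^2=\beta\,\norm{w_\rho}^2_{L^2(\mathbb R^{d-1})}\cdot\frac{1}{\beta}\sum_{n\in\mathbb Z}\int_{\mathbb R^{d-1}}\frac{\dd k}{(2\pi)^{d-1}}\,(\omega_n^2+|k|^2+m^2)^{-2s}.
\]
The first factor is finite because $\rho>\frac{d-1}{2}$. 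The second factor is a sum-integral over effectively $d$ dimensions, so it converges iff $4s>d$, i.e. $4a+2>d$, i.e. $a>\frac{d-2}{4}$. This is precisely the hypothesis. The mass $m>0$ removes any infrared problem at $n=0$, $k=0$. To justify convergence we compare the sum over $n$ with an integral, or bound $\sum_n(\omega_n^2+A)^{-2s}\lesssim A^{-2s+1/2}+A^{-2s}$ and then integrate in $k$.

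The step I expect to be the main obstacle is the technical justification. One must check that $M$ is densely defined and that the kernel manipulation is legitimate: the weight $w_\rho$ is bounded, so $M$ is bounded once $C^{s}$ is, and the Hilbert--Schmidt computation can be done first on $\mathscr D(\Mb)$ and then extended. One must also check that $T$ is well defined on all of $\Hi_C$. Finally, $\mnorm{\cdot}_{a,\rho}$ must be a genuine norm rather than a seminorm. This holds because $w_\rho>0$ everywhere and $C^{a+1/2}$ is injective, so $Th=0$ forces $h=0$. Once these points are in place, Lemma~\ref{lemma:measurable-norm} gives measurability.
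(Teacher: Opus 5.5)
Your proposal is correct and follows the paper's route. Your $T=C^{1/2}w_\rho C^{a+1/2}C^{-1/2}$ is exactly the paper's $T_{a,\rho}=C^{1/2}w_\rho C^{a}$, and both arguments reduce, via Lemma~\ref{lemma:measurable-norm}, to finiteness of $\kappa_{d-1}(\rho)\sum_{n}\int \dd k\,E_{n,k}^{-1-2a}$, which holds precisely when $a>\frac{d-2}{4}$. The only difference is that you evaluate the Hilbert--Schmidt norm through the translation-invariant kernel and Plancherel, rather than by summing over the generalised eigenfunctions $f_{n,k}$; this is a cleaner justification of the same computation.
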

\begin{proof}
Introduce the generalised eigenfunctions
\[
e_{n,k}(\tau,x) \equiv \beta^{-1/2}(2\pi)^{-(d-1)/2}e^{i(\omega_n \tau+ k \cdot x)} \ , \ \omega_n = \frac{2\pi}{\beta}n \ , \ k \in \mathbb R^{d-1}  \ ,
\]
with $k \cdot x$ the Euclidean product in $\mathbb R^{d-1}$. They are a generalised orthonormal system for $L^2(\Mb)$, for the unitary Fourier decomposition $L^2(\Mb) \simeq \ell^2(\mathbb Z) \otimes L^2(\mathbb R^{d-1})$.
These satisfy $C^{-1} e_{n,k}(\tau,x)  = E_{n,k}e_{n,k}(\tau,x)$, with $E_{n,k} = \omega_n^2 + k^2 + m^2$. Thanks to the identification $\Hi_C = C^{1/2}L^2(\Mb)$, the functions $f_{n,k} \equiv C^{1/2}e_{n,k}$, explicitly given by $f_{n,k} \equiv E_{n,k}^{-1/2} e_{n,k}$ are a generalised orthonormal system in $\Hi_C$. Their norm is
\[
\norm{f_{n,k}}_{a,\rho}^2=E_{n,k}^{-1-2a} \norm{ w_\rho e_{n,k}}_{  L^2(\Mb) }^2 \ .
\]
Since $\abs{e_{n,k}(z)}^2=\beta^{-1}(2\pi)^{1-d}$ this simplifies into
\[
\norm{f_{n,k}}_{a,\rho}^2=E_{n,k}^{-1-2a} \kappa_{d-1}(\rho) \ ,
\]
with
\[
\kappa_{d-1}(\rho) \equiv (2\pi)^{1-d}\int_{\mathbb R^{d-1}}(1+\abs{x}^2)^{-\rho} \dd x \ .
\]

Now, define the operator
\[
T_{a,\rho} \equiv C^{1/2}w_\rho C^a : \Hi_C \to \Hi_C \ .
\]
(First defined on the dense domain $C^\infty_c(\Mb)$, and then closed in $\Hi_C$).
Then, $\norm{Tf}_{\Hi_C} = \norm{f}_{a,\rho}$. Indeed,
\[
\norm{Tf}_{\Hi_C} = \norm{C^{-1/2}Tf}_{L^2(\Mb)} = \norm{w_\rho(x) C^a f}_{L^2(\Mb)} = \norm{f}_{a,\rho} \ .
\]
Its Hilbert-Schmidt norm is
\[
\norm{T}_{HS(\Hi_C)}^2 = \sum_n \int_{\mathbb R^{d-1}} \dd k \norm{T f_{n,k}}^2_{\Hi_C}=\sum_n \int_{\mathbb R^{d-1}} \dd k \norm{f_{n,k}}_{a,\rho}^2 \ .
\]
Using the computation of the norm $\norm{f_{n,k}}_{a,\rho}$, this is
\[
\norm{T f_{n,k}}_{HS(\Hi_C)}^2  = \kappa_{d-1}(\rho)\sum_n \int_{\mathbb R^{d-1}}\dd k E_{n,k}^{-1-2a} \ .
\]
Now, $\kappa_{d-1}(\rho)$  is finite, since $\rho>\frac{d-1}{2}$ by assumption. Moreover, if $2(1+2a)>d-1$, or equivalently $a > \frac{d-3}{4}$, the integral in $k$ is convergent  and can be computed explicitly, giving
\[
\norm{T}_{HS(\Hi_C)}^2  = \kappa_{d-1}(\rho) \pi^{\frac{d-1}{2}} \frac{ \Gamma\left(1+2a-\frac{d-1}{2}\right)} {\Gamma(1+2a)} \sum_{n \in \mathbb Z} \left[ \left(\frac{2\pi n}{\beta}\right)^2+m^2 \right]^{\frac{d-1}{2}-(1+2a)} \ .
\]
The series is convergent if $2+4a >d$, which is equivalent to the assumption $a > \frac{d-2}{4}$. Note that this condition dominates over the condition on the convergence of the integral.

Then, $T$ is a Hilbert-Schmidt operator, and by Lemma \ref{lemma:measurable-norm} it follows that $\mnorm{ \ \cdot \ }_{a,\rho}$ is a measurable norm on $\Hi_C$.
\end{proof}

Since $\mathcal L_{a, \rho}$ is the completion of $\Hi_C$ with respect to the measurable norm $\mnorm{ \ \cdot \ }_{a,\rho}$, by Theorem \ref{thm:gross} it carries a unique Gaussian measure $\p$, and the triple $(\Hi_C, \mathcal L_{a,\rho},\p)$ is an abstract Wiener space. 

We can actually prove that $\mathcal L_{a,\rho}$ is a Hilbert space. The additional Hilbert space structure will be useful in deriving the Polchinski equation in the next section.
\begin{lemma}\label{lemma:Hilbert-space}
Let $J_{a,\rho}: \mathcal L_{a,\rho} \to L^2(\Mb)$ be defined by
\[
J_{a,\rho} u \equiv w_\rho C^{a} u \ .
\]
Then, $J_{a,\rho}$ is a linear isometric isomorphism, with inverse
\[
J_{a,\rho}^{-1} g = C^{-a}\left( w_\rho^{-1} g \right ) \ , \quad g \in L^2 \left (\Mb \right) \ .
\]
Therefore, in particular $\mathcal L_{a,\rho}$ is a separable Hilbert space with inner product
\[
(u,v)_{\mathcal L_{a,\rho}} \equiv (u,w_\rho C^{a} v)_{L^2(\Mb)} \ .
\]
\end{lemma}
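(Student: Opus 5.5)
The plan is to prove everything by transport of structure along $J_{a,\rho}$. Define $\mnorm{u}_{a,\rho}=\norm{J_{a,\rho}u}_{L^2(\Mb)}$ as in the text. Once $J_{a,\rho}$ is known to be a linear bijection onto $L^2(\Mb)$, all Hilbert-space properties of $\mathcal L_{a,\rho}$ follow from those of $L^2(\Mb)$: completeness, separability, and the fact that the norm comes from an inner product.

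To make $C^{a}$ and $C^{-a}$ meaningful on distributions, I would work in the tempered setting of the Remark at the end of Section~\ref{sec:Euclidean-free-field}, that is, on $\mathscr S'(\mathbb S_\beta\times\mathbb R^{d-1})$. There $C^{\pm a}$ acts through the Fourier decomposition used in the proof of Prop.~\ref{prop:measurable-norm}: it is the multiplier $E_{n,k}^{\mp a}$, with $E_{n,k}=\omega_n^2+k^2+m^2$. This symbol is smooth in $k$, bounded below by $m^2>0$, and has polynomially bounded derivatives in $(n,k)$. Hence $C^{\pm a}$ are mutually inverse continuous bijections of $\mathscr S$ and, by duality, of $\mathscr S'$.

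The weight $w_\rho$ is smooth, strictly positive, and has polynomially bounded derivatives, as does $w_\rho^{-1}=(1+\abs{x}^2)^{\rho/2}$. So multiplication by $w_\rho^{\pm1}$ is also a bijection of $\mathscr S'$. The key steps then run in this order.
\begin{itemize}
\item[(i)] $J_{a,\rho}$ is linear. It is isometric by the definition of the norm.
\item[(ii)] $J_{a,\rho}$ is injective. If $w_\rho C^a u=0$ in $\mathscr S'$, multiply by $w_\rho^{-1}$ and then apply $C^{-a}$ to get $u=0$. This also shows that $\mnorm{\cdot}_{a,\rho}$ is a genuine norm and not merely a seminorm.
\item[(iii)] $J_{a,\rho}$ is surjective. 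Given $g\in L^2(\Mb)\subset\mathscr S'$, set $u\equiv C^{-a}(w_\rho^{-1}g)\in\mathscr S'$. Then $w_\rho C^a u=g$, so $\mnorm{u}_{a,\rho}=\norm{g}_{L^2}<\infty$ and $u\in\mathcal L_{a,\rho}$. This identifies the inverse as stated.
\item[(iv)] Conclude. An isometric linear bijection onto $L^2(\Mb)$ makes $\mathcal L_{a,\rho}$ complete and separable. The polarised inner product $(J_{a,\rho}u,J_{a,\rho}v)_{L^2}$ induces the norm, and this is the intended reading of the displayed formula $(u,w_\rho C^a v)$.
\end{itemize}

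Finally, I would check consistency with the abstract Wiener construction: $\mathcal L_{a,\rho}$ should coincide with the completion of $\Hi_C$ in $\mnorm{\cdot}_{a,\rho}$. For this it suffices to show that $J_{a,\rho}(\Hi_C)$ is dense in $L^2(\Mb)$. This image contains $w_\rho C^{a}\mathscr D(\Mb)$, and by the invertibility argument of step (iii) applied on $\mathscr S$ it contains $w_\rho\,\mathscr S$, which is dense in $L^2$ because $w_\rho>0$.

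The main obstacle I expect is functional-analytic bookkeeping rather than any estimate. The space $\mathcal L_{a,\rho}$ is defined inside $\mathscr D'(\Mb)$, where $C^{-a}(w_\rho^{-1}g)$ is not obviously defined, since $w_\rho^{-1}g$ grows polynomially. My first attempt would be to pass to $\mathscr S'$ as justified by the Remark, verifying carefully that the multiplier $E_{n,k}^{-a}$ with non-integer $a$ preserves $\mathscr S$. If that proves awkward, I would instead replace $C^{-a}$ by $C^{-\lceil a\rceil}C^{\lceil a\rceil-a}$, a differential operator composed with a bounded smoothing multiplier.
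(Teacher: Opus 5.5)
Your proposal is correct and follows the paper's proof. Both arguments treat $C^{\pm a}$ as Fourier multipliers that are isomorphisms of $\mathscr S$ and, by duality, of $\mathscr S'$; both read isometry off the definition of the norm; and both obtain surjectivity from the explicit preimage $C^{-a}(w_\rho^{-1}g)$. Your additions are sound refinements of points the paper glosses over: the explicit injectivity step (the paper leaves it implicit in the word isometry), the justification of $w_\rho^{-1}g\in\mathscr S'$ via $w_\rho^{\pm1}$ being smooth multipliers with polynomially bounded derivatives, and reading the inner product as $(J_{a,\rho}u,J_{a,\rho}v)_{L^2(\Mb)}$ rather than the literal $(u,w_\rho C^a v)_{L^2(\Mb)}$.
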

\begin{proof}
First, since $C^a$ and all its derivatives have Fourier symbols with at most polynomial growth, it defines a continuous linear isomorphism on $\mathscr S \subset L^2(\Mb)$, with inverse $C^{-a}$. By duality, it extends to a continuous linear isomorphism $C^a : \mathscr S'(\Mb) \to \mathscr S'(\Mb)$, again with inverse $C^{-a}$.

By definition, elements in $\mathcal L_{a,\rho}$ satisfy $J_{a,\rho} u \in L^2(\Mb)$, and the norm is $\norm{u}_{a,\rho} = \norm{J_{a,\rho} u}_{L^2(\Mb)}$, and so $J_{a,\rho}$ is an isometry.

Now we consider its inverse. First, since $w_\rho^{-1}$ has only polynomial growth, it can be regarded as a tempered distribution. Hence, for every $g \in L^2(\Mb)$, $w_\rho^{-1} g \in \mathscr S'(\Mb)$, and we can consider the distribution $u = C^{-a}\left(w_\rho^{-1} g \right ) \in \mathscr S'(\Mb)$. Defining $J_{a,\rho}^{-1} g = C^{-a}\left(w_\rho^{-1} g \right )$, it is obvious that $J_{a,\rho} u = g \in L^2(\Mb)$, and so $J_{a,\rho}$ is surjective. Therefore, we have a continuous linear isomorphism $\mathcal L_{a,\rho} \simeq L^2(\Mb)$, and since $ L^2(\Mb)$ is a separable Hilbert space, also $\mathcal L_{a,\rho}$ is.
\end{proof}

Now, we conclude the argument.
\begin{theorem}
For any $\rho >\frac{d-1}{2}$ and $a >\frac{d-2}{4}$, $\mathcal L_{a,\rho}(\mathbb S_\beta \times \mathbb  R^{d-1})$ is a set of measure one for the Gaussian process over $\mathscr D(\Mb)$ with covariance $\langle f, C g\rangle$.
\end{theorem}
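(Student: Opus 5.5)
The strategy follows Reed and Rosen: show directly that $\E_{\mu_C}\bigl(\mnorm{\varphi}_{a,\rho}^2\bigr)<\infty$. Then $\mnorm{\varphi}_{a,\rho}<\infty$ for $\mu_C$-almost every $\varphi\in\mathscr D'(\Mb)$, that is, $\mu_C(\mathcal L_{a,\rho})=1$.

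The first step is to make $\mnorm{\varphi}_{a,\rho}$ a measurable function on $\mathscr D'(\Mb)$ through a countable approximation. Pick an orthonormal basis $\{g_j\}_{j\ge1}$ of $L^2(\Mb)$ consisting of Schwartz functions; Hermite functions in $x$ tensored with Fourier modes in $\tau$ will do. Set
\[
N(\varphi)^2 \equiv \sum_j \abs{\varphi\bigl(C^a w_\rho g_j\bigr)}^2 \ .
\]
Formally this equals $\sum_j \abs{(w_\rho C^a\varphi,g_j)}^2=\norm{w_\rho C^a\varphi}^2_{L^2}$. Each test function $C^a w_\rho g_j$ lies in $\mathscr S(\Mb)$, because $C^a$ preserves $\mathscr S$ (as in the proof of Lemma~\ref{lemma:Hilbert-space}) and $w_\rho$ is smooth with bounded derivatives. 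By the Remark extending the index space to $\mathscr S$, each term is therefore a well-defined Gaussian random variable. Since $N$ is a countable supremum of finite partial sums, it is measurable with values in $[0,\infty]$. Moreover, $N(\varphi)<\infty$ holds exactly when $w_\rho C^a\varphi\in L^2(\Mb)$: the numbers $\varphi(C^a w_\rho g_j)$ are the coefficients of the tempered distribution $w_\rho C^a\varphi$ against the basis, so finiteness of $N$ means these coefficients are square-summable, and then the distribution coincides with the corresponding $L^2$ function. Hence $\{N<\infty\}=\mathcal L_{a,\rho}$ as sets.

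The second step computes the expectation by monotone convergence:
\[
\E\bigl(N^2\bigr)=\sum_j C\bigl(C^a w_\rho g_j, C^a w_\rho g_j\bigr)=\sum_j \norm{C^{1/2+a}w_\rho g_j}_{L^2}^2=\norm{C^{1/2+a}w_\rho}_{HS(L^2)}^2 \ .
\]
This is the same Hilbert–Schmidt quantity as in Prop.~\ref{prop:measurable-norm}. Indeed, $T_{a,\rho}=C^{1/2}w_\rho C^a$ is unitarily conjugate, via $C^{1/2}:L^2\to\Hi_C$, to $w_\rho C^{a+1/2}$ on $L^2$, which is the adjoint of $C^{1/2+a}w_\rho$. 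One can also compute directly: the kernel of $w_\rho C^{a+1/2}$ is $w_\rho(x)k(z-z')$ with $\hat k(n,k)=E_{n,k}^{-1/2-a}$. Its squared $HS$-norm is therefore
\[
\Bigl(\int w_\rho^2\,\dd x\Bigr)\,\beta^{-1}\sum_n\int \dd k\, E_{n,k}^{-1-2a} \ ,
\]
up to $2\pi$ factors, and this is finite precisely when $\rho>\frac{d-1}{2}$ and $a>\frac{d-2}{4}$, by the computation already carried out in Prop.~\ref{prop:measurable-norm}. Consequently $\E(N^2)<\infty$, so $N<\infty$ almost surely, which gives $\mu_C(\mathcal L_{a,\rho})=1$.

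Finally, one should check consistency with Gross' measure, namely that the restriction of $\mu_C$ to $\mathcal L_{a,\rho}$ coincides with $\p$. For $f\in\mathscr D(\Mb)$, the map $u\mapsto\langle u,f\rangle$ is continuous on $\mathcal L_{a,\rho}$, because $\langle u,f\rangle=(w_\rho C^a u,\,w_\rho^{-1}C^{-a}f)$ and $w_\rho^{-1}C^{-a}f\in L^2$. It thus belongs to $\Ba^*$, and its $\Hi_C$-norm is $\sqrt{C(f,f)}$. The characteristic functions of the two measures then agree on $\mathscr D$, which determines the cylinder $\sigma$-algebra, so the pushforward of $\p$ under the injection into $\mathscr D'$ equals $\mu_C$.

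The main obstacle I expect is the first step. One has to justify rigorously that $\{N<\infty\}$ is exactly $\mathcal L_{a,\rho}$ and that this set lies in the Borel (cylinder) $\sigma$-algebra of $\mathscr D'(\Mb)$, given that $C^a$ and $w_\rho$ act on distributions only after the extension to $\mathscr S'$. I would handle this by working throughout on $\mathscr S'(\Mb)$, which carries full $\mu_C$-measure by the Remark, and by using the duality between $C^a$ and multiplication by $w_\rho$ on $\mathscr S$.
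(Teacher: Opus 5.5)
Your proof is correct, and it reaches the result by a more elementary route than the paper, although both rest on the same Hilbert--Schmidt computation.

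The paper's route has three steps. It shows that $\tilde T_{a,\rho}=C^a w_\rho C^{1/2}$ is Hilbert--Schmidt on $\mathscr H_{-1}$ by conjugating it to $T_{a,\rho}^*$ on $\Hi_C$. It then cites the Minlos support theorem of Reed and Rosen: the dual $\mathscr H_{a,\rho,*}'$ of the completion of $\mathscr D(\Mb)$ in the norm $\norm{w_\rho^{-1}C^{-a}f}_{L^2(\Mb)}$ carries full measure. Finally, it identifies $\mathscr H_{a,\rho,*}'$ with $\mathcal L_{a,\rho}$ through the Riesz representation theorem.

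Your test functions $e_j=C^a w_\rho g_j$ are exactly an orthonormal basis of $\mathscr H_{a,\rho,*}$. So $N$ is the dual norm of $\mathscr H_{a,\rho,*}'$, and your computation $\E(N^2)=\sum_j C(e_j,e_j)=\norm{T_{a,\rho}}^2_{HS(\Hi_C)}$ is the proof of the support theorem in this concrete case, done by hand rather than cited.

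What you gain:
\begin{enumerate}
\item a self-contained argument;
\item the explicit identity $\E\bigl(\mnorm{\varphi}_{a,\rho}^2\bigr)=\norm{T_{a,\rho}}_{HS(\Hi_C)}^2$;
\item an explicit treatment of the measurability of $\{\mnorm{\varphi}_{a,\rho}<\infty\}$ and of the passage from $\mathscr D'(\Mb)$ to $\mathscr S'(\Mb)$, both of which the paper leaves implicit.
\end{enumerate}
What the paper gains is brevity, and its Riesz identification of the dual does not depend on a choice of basis.

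In your first step, state explicitly that finite linear combinations of the Hermite--Fourier functions are dense in $\mathscr S(\Mb)$. That is the fact which forces a tempered distribution $T$ with square-summable coefficients $\langle T,g_j\rangle$ to coincide with the corresponding $L^2$ function.

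Your closing check that the pushforward of $\p$ is $\mu_C$ is not needed for the statement. It does, however, make explicit an identification that the paper announces and only touches on in the Remark after the proof.
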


\begin{proof}
By the Minlos theorem, the Gaussian process can be realised as a probability space on $\mathscr D'(\Mb)$ with Gaussian measure $\mu_C$. 

We apply the proof strategy by Reed and Rosen~\cite{Reed1974} to the present case. 
First, define
\[
\norm{f}_{-1}^2 \equiv ( f, C f)_{L^2(\Mb)} \ , \quad \mathscr H_{-1} \equiv \overline{\mathscr D(\Mb)}^{\norm{ \ \cdot \ }_{-1}} \ .
\]
The covariance map $C$, first defined on the dense subspace $\mathscr D(\Mb)$ and extended by continuity, is an isometry from  $\mathscr H_{-1}$ to the Cameron-Martin space. Indeed, $\norm{f}_{\mathscr H_{-1}}^2 = \norm{C^{1/2} f}_{L^2(\Mb)} = \norm{Cf}_{\Hi_C}^2$, so $C:\mathscr H_{-1} \to \Hi_C$ is a unitary operator.

Secondly, define on $\mathscr D$ the inner product and norm
\[
(f,g)_{a,\rho,*} \equiv (w_\rho^{-1}C^{-a}f,w_\rho^{-1}C^{-a}g)_{L^2(\Mb)} \ , \quad \norm{f}_{a,\rho,*} \equiv \norm{w_\rho^{-1} C^{-a} f}_{L^2(\Mb)} \ ,
\]
and consider the Hilbert space completion
\[
\mathscr H_{a,\rho,*} \equiv \overline{\mathscr D}^{\norm{\ \cdot \ }_{a,\rho,*}} \ .
\]
On $\mathscr H_{-1}$, consider the operator $\tilde T_{a,\rho} \equiv C^a w_\rho C^{1/2}$. Its inverse, initially defined on $\mathscr D(\Mb)$, is
\[
\tilde T_{a,\rho}^{-1} = C^{-1/2}w_\rho^{-1} C^{-a} \ ,
\] 
so that
\[
\norm{\tilde T_{a,\rho}^{-1}f}_{-1}^2 = \norm{w_\rho^{-1} C^a f}_{L^2(\Mb)} = \norm{f}_{a,\rho,*} \ .
\]
We want to show that $\tilde T_{a,\rho}$ is Hilbert-Schmidt on $\mathscr H_{-1}$. Under the unitary map $C:\mathscr H_{-1} \to \Hi_C$, the operator is represented in $\Hi_C$ by
\[
C \tilde T_{a,\rho} C^{-1} =  C^{a+1} w_\rho C^{-1/2} = C( C^a w_\rho C^{1/2}) C^{-1} = T_{a,\rho}^* \ ,
\]
where the adjoint is in the Cameron-Martin space. Since by Prop.~\ref{prop:measurable-norm}, $T_{a,\rho}$ is Hilbert-Schmidt, then, also $\tilde T_{a,\rho}$ is Hilbert-Schmidt.

By the Minlos support theorem~\cite{Reed1974}, this implies that $\mathscr H_{a,\rho,*}^*$ is a subset of $\mathscr D'(\Mb)$ with measure $1$ for the Gaussian process $\varphi(f) \equiv \langle \ \cdot \ , f \rangle$ over $\mathscr D(\Mb)$, with covariance $\E(\varphi(f) \varphi(g))=\langle f,Cg\rangle$.

We now identify $H_{a,\rho,*}'$ with $\mathcal L_{a,\rho}$. If $u\in\mathcal L_{a,\rho}$, then for every $f\in\mathscr D(\Mb)$,
\[
\langle u,f\rangle = \left(w_\rho C^a u,w_\rho^{-1}C^{-a}f\right)_{L^2(\Mb)} \ ,
\]
and, therefore,
\[
\abs{\langle u,f\rangle} \leq \norm{u}_{a,\rho} \norm{f}_{a,\rho,*} \ .
\]
Thus, $u$ defines a continuous linear functional on $H_{a,\rho,*}$.

Conversely, let $u\in\mathscr D'(\Mb)$ define a continuous functional on $H_{a,\rho,*}$. By the Riesz representation theorem there exists $g\in L^2(\Mb)$ such that
\[
\langle u, f \rangle = \left( g , w_\rho^{-1} C^{-a} f \right)_{L^2(\Mb)} \ , \quad f \in \mathscr D(\Mb) \ .
\]
Thus, $u$ is precisely the distribution represented by
\[
u=C^{-a}\left ( w_\rho^{-1} g \right) ,
\]
and, therefore,
\[
w_\rho C^a u = g \in L^2(\Mb) \ .
\]
Hence $u\in\mathcal L_{a,\rho}$, and so
\[
H_{a,\rho,*}'=\mathcal L_{a,\rho} \ ,
\]
as subspaces of $\mathscr D'(\Mb)$. It follows from Minlos' support theorem that
\[
\mu_C(\mathcal L_{a,\rho})=1 \ .
\]
\end{proof}

\begin{remark}
In the abstract Wiener space setting, the covariance $Q_\p$ is a bilinear map $Q_\p: \Ba^* \times \Ba^* \to \mathbb R$ defined by
\[
Q_\p(\ell,\ell') = \int_\Ba \ell(\varphi) \ell'(\varphi) \dd \p(\varphi)  = (\ell, \ell')_{\Hi_C} \ .
\]
Now, we have seen that every $f \in \mathscr S(\Mb)$ determines a unique linear functional $\ell_f( \ \cdot \ )$; restricting to linear functionals of this form, we have
\[
Q_\p(\ell_f,\ell_g) = (\ell_f,\ell_g)_{\Hi_C} = (Cf, C g)_{\Hi_C} = \langle f, C g \rangle \ .
\]
Therefore, the covariance $Q_\p$ associated to the measure $\p$ is the natural generalisation of the covariance $C$, defined on the space of test functions, to the dual $\Ba^*$. By Fernique's theorem, we can associate in a canonical way the operator $Q_\p : \Ba^* \to \Ba$ as $Q_\p(\ell) = \int_\Ba \varphi \ell(\varphi) \dd p(\varphi)$ (see, e.g., Ref. \cite[Corollary 3.15]{Hairer2026}). This implies, in particular, that for any $\ell' \in \Ba^*$ we have
\[
\ell'(Q_\p(\ell)) = \ell'\left[\E\left(\varphi \ell(\varphi)\right ) \right ] = \E\left[\ell(\varphi) \ell'(\varphi)\right ] = Q_\p(\ell,\ell') \ .
\]
\end{remark}

\section{Polchinski equation and the multiscale Bakry-Émery criterion}

In this section we introduce the Polchinski equation as a way to decompose the relative free energy in continuous scales. 

Thanks to the probabilistic formula for the relative entropy in Eq. \eqref{eq:relative-entropy-probabilistic}, we can focus on perturbations of the massive Euclidean free field.

Then, techniques of statistical mechanics and probability allow to decompose in scale the free energy appearing in the probabilistic formula for the relative entropy. This is done by a continuum decomposition of the Gaussian measure. This RG flow decomposition takes the form of the \emph{Polchinski equation} \cite{Polchinski1983}.

Using the Polchinski equation, Bauerschmidt, Bodineau, and Dagallier derived a  multiscale Bakry-Émery criterion to prove log-Sobolev inequalities for classical field theories and spin systems \cite{BauerschmidtBodineau2020,Bauerschmidt2024,Bauerschmidt2023b,Bauerschmidt2023}. In the following, we show that the same technique implies the relative entropy bound in Thm. \ref{thm:entropy-inequality}.

The main difference from Bauerschmidt, Bodineau, and Dagallier's approach is the infinite-dimensional setting of QFT, arising by the fact that the massive Euclidean free field describes (perturbations of) the relativistic boson field. For this reason, we give a new derivation of the Polchinski equation in abstract Wiener spaces, which may be of independent interest.

\subsection{General assumptions and covariance decomposition}\label{sec:assumptions}
From now on, we mainly focus on the probabilistic side, describing affiliated perturbations of the stochastically positive KMS system given by the GNS representation of the Weyl $C^*-$algebra in a $(\alpha,\beta)-$KMS state $(\vNb, \vNb^+,\alpha,\omega_\beta)$.

Let $K$ be a self-adjoint operator affiliated to $\vNb^+$, and, under the identification with a $\Sigma_0-$measurable function, we assume that $K\in L^{2+\epsilon}(\mathscr S'(\Mb),\Sigma_0,\mu_C)$ and $e^{-\beta K} \in L^1(\mathscr D',\Sigma_0,\mu_C)$. Let $V_0 = \int_{-\beta/2}^{\beta/2} U(\tau) K \dd \tau$. $V_0$ is a measurable function, and we assume that $V_0$ is bounded from below, so that $e^{-V_0}$ is a bounded measurable function.

Therefore, we consider the massive Euclidean free field, described by the abstract Wiener space $(\Hi_C,\Ba,\p)$, with $\Hi_C = \overline{C\mathscr D}^{\norm{ \ \cdot \ }_{\Hi_C}}$ and $\Ba = \mathcal L_{a,\rho}$ for some fixed $a > \frac{d-2}{4}$ and $\rho > \frac{d-1}{2}$. The norm on $\Ba$ space is denoted $\norm{ \ \cdot \ } \equiv \norm{ \ \cdot \ }_{a,\rho}$.
The random field will be denoted $\varphi \in \Ba$, by a slight abuse of notation, since the distribution $\varphi \in \mathscr D'(\Mb)$ is $\mu_C-$almost surely in $\Ba$.

Let $[0, \infty) \ni t  \to \dot C_t $ be a one-parameter family of continuous, positive, symmetric bi-distributions $\dot C_t : \mathscr D(\Mb) \times \mathscr D(\Mb) \to \mathbb R$, admitting positive bounded realisations in $L^2(\Mb)$, and differentiable for $t>0$, so that $\ddot C_t$ is bounded. Furthermore, we assume that $C_t = \int_0^t \dot C_s \dd s$ are positive and symmetric bi-distributions with endpoint $C_\infty \equiv C$.

For $t \geq s$, set $C_{t,s} \equiv C_t-C_s$. This is again a linear, continuous and positive map $C_{t,s}:\mathscr D(\Mb) \to \mathscr D'(\Mb)$.
Since
\[
\norm{C^a}_{L^2(\Mb) \to L^2(\Mb)} \leq m^{-2a} \ ,
\]
and $w_\rho(x) \leq 1$, for every $u \in L^2(\Mb)$
\[
\norm{u}_{a,\rho} = \norm{ w_\rho C^a u}_{L^2(\Mb)} \leq m^{-2a} \norm{u}_{L^2(\Mb)} \ .
\]
Then we have a continuous injection $j:L^2(\Mb) \to \Ba$. In order to realise $\dot C_t$ as a Gaussian covariance on $\Ba$, let $j:L^2(\Mb) \to \Ba$ denote the canonical continuous inclusion.

We assume that the corresponding covariance operator on $\Ba$ is
\[
\dot Q_t \equiv j \dot C_t j^* \in \mathfrak S_1^+(\Ba) \ .
\]
The operator $\dot Q_t$ is the trace-class covariance operator on $\Ba$ induced by the infinitesimal scale covariance $\dot C_t$.

A sufficient condition for $\dot Q_t$ to be a trace-class operator is that $j$ is Hilbert-Schmidt, since $\dot C_t$ is bounded in $L^2(\Mb)$. The following Lemma requires the stronger condition $a>d/4$ on the index of $\mathcal L_{a,\rho}$ to satisfy this condition.
\begin{lemma}
If $a>d/4$, then the canonical inclusion $j: L^2(\Mb) \to \Ba$ is Hilbert-Schmidt.
\end{lemma}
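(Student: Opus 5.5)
Since $\Ba=\mathcal L_{a,\rho}$ is a Hilbert space (Lemma~\ref{lemma:Hilbert-space}), I will compute the Hilbert--Schmidt norm of $j$ directly in a generalised orthonormal basis of $L^2(\Mb)$. Using the isometric isomorphism $J_{a,\rho}:\Ba\to L^2(\Mb)$, $j$ is Hilbert--Schmidt exactly when $J_{a,\rho}\, j = w_\rho C^a : L^2(\Mb)\to L^2(\Mb)$ is Hilbert--Schmidt. The Hilbert--Schmidt norm is basis-independent, so the plan is to evaluate $\norm{w_\rho C^a}_{HS}^2$ in the Fourier system $e_{n,k}$ from the proof of Prop.~\ref{prop:measurable-norm}.

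For the computation, I use that $C^a e_{n,k}=E_{n,k}^{-a}e_{n,k}$ with $E_{n,k}=\omega_n^2+k^2+m^2$, and that $\abs{e_{n,k}}^2=\beta^{-1}(2\pi)^{1-d}$ pointwise. This gives
\[
\norm{w_\rho C^a}_{HS}^2=\sum_{n\in\mathbb Z}\int_{\mathbb R^{d-1}}\dd k\, E_{n,k}^{-2a}\norm{w_\rho e_{n,k}}_{L^2(\Mb)}^2=\kappa_{d-1}(\rho)\sum_{n}\int \dd k\,(\omega_n^2+k^2+m^2)^{-2a}.
\]
Rigorously, the same value arises as the integral of the squared kernel: the product $w_\rho(x)\,C^a(\tau-\tau',x-x')$ has $L^2(\Mb\times\Mb)$ norm equal to $\norm{w_\rho}_{L^2}^2$ times $\norm{C^a(\cdot)}_{L^2(\Mb)}^2$, and the latter is evaluated by Plancherel. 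The factor $\kappa_{d-1}(\rho)$ is finite because $\rho>\frac{d-1}{2}$.

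The remaining step, and the only real content, is the convergence of $\sum_n\int \dd k\,(\omega_n^2+k^2+m^2)^{-2a}$. The $k$-integral converges iff $4a>d-1$, and then equals $c_{d,a}\,(\omega_n^2+m^2)^{\frac{d-1}{2}-2a}$. The sum over $n$ then converges iff $4a-(d-1)>1$, that is $a>d/4$, which is exactly the hypothesis and dominates the first condition. Equivalently, this is the statement that $(-\Delta+m^2)^{-2a}$ is locally integrable on the $d$-dimensional space $\Mb$ near the diagonal, since $\mathbb S_\beta$ is compact.

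The main obstacle is justifying the passage from the generalised (continuous-index) basis to a genuine Hilbert--Schmidt computation. I will handle it through the kernel formulation: writing $w_\rho C^a$ as an integral operator with kernel $w_\rho(x)G_a(\tau-\tau',x-x')$, the Hilbert--Schmidt property is equivalent to square integrability of this kernel. The translation invariance of $G_a$ factorises the integral into $\norm{w_\rho}_{L^2}^2\,\beta\,\norm{G_a}_{L^2(\Mb)}^2$, up to normalisation, and Plancherel on $\mathbb Z\times\mathbb R^{d-1}$ reduces $\norm{G_a}_{L^2(\Mb)}^2$ to the sum–integral above.
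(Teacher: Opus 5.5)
Your proposal is correct and follows the paper's proof: you reduce via the isometric isomorphism $J_{a,\rho}$ to the Hilbert--Schmidt norm of $w_\rho C^a$ on $L^2(\Mb)$, and evaluate it in Fourier variables as $\kappa_{d-1}(\rho)\sum_n\int \dd k\, E_{n,k}^{-2a}$, which is finite exactly when $\rho>\frac{d-1}{2}$ and $a>d/4$. Your justification through the square-integrable kernel and Plancherel adds rigour that the paper leaves implicit, since the paper evaluates the norm directly over the continuous-index basis $e_{n,k}$.
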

\begin{proof}
Consider the isometric embedding $J_{a,\rho}:\Ba \to L^2(\Mb)$ in Lemma~\ref{lemma:Hilbert-space}, defined by $J_{a,\rho}u = w_\rho C^a u$. Then, $J_{a,\rho} j : L^2(\Mb) \to L^2(\Mb)$ with $J_{a,\rho} j = w_\rho C^a$. Since $J_{a,\rho}$ is an isometry, then for an orthonormal basis $\{ e_i\}_i$ of $L^2(\Mb)$ we have
\begin{align*}
\norm{j}_{HS}^2 
&= \sum_i \norm{j e_i}_\Ba^2 \\
&= \sum_i \norm{J_{a,\rho} j e_i}_{L^2(\Mb)}^2 \\
&= \sum_i \norm{w_\rho C^a e_i}_{L^2(\Mb)}^2 \\
&= \norm{w_\rho C^a}_{HS}^2 \ .
\end{align*}
The last norm is given by
\[
\norm{w_\rho C^a}_{HS}^2 = \int_{\mathbb R^{d-1}} w_\rho(x)^2 \dd x \sum_{n \in \mathbb Z} \int_{\mathbb R^{d-1}} \dd k E_{n,k}^{-2a} \ ,
\]
which is finite if $\rho >\frac{d-1}{2}$ and $a >d/4$.
\end{proof}
From now on, we will assume $a>d/4$.

Using $J_{a,\rho}$ as a unitary identification of $\Ba$ with its $L^2$ realisation, $\dot Q_t$ is unitarily equivalent to
\[
J_{a,\rho} \dot Q_t J_{a,\rho}^{-1} =  w_\rho C^a \dot C_t C^a w_\rho \ ,
\]
and
\[
\Tr_\Ba\dot Q_t = \norm{j\dot C_t^{1/2}}_{\mathrm{HS}(L^2,\Ba)}^2 \ .
\]

Set
\[
Q_t \equiv \int_0^t \dot Q_r \dd r  \ , \quad Q_{s,t} \equiv Q_t-Q_s \ ,
\]
where the integrals are Bochner integrals in $\mathfrak S_1(\Ba)$, we have
\[
Q_{s,t} = j C_{s,t} j^* \ .
\]
$Q_{s,t} \in \mathfrak S_1(\Ba)$ is the covariance operator for a Gaussian measure $\p_{s,t}$ on $\Ba$. Furthermore, we denote $\E_{s,t}$ the expectation value with respect to the measure $\p_{s,t}$.

We assume that $t \mapsto \dot Q_t$ is continuous in trace-class norm. Then, 
\[
s^{-1} Q_{t,t+s} \xrightarrow[s \downarrow 0]{\mathfrak S_1(\Ba)} \dot Q_t \ .
\]
This will be the regularity assumption needed to identify the generators of the Gaussian semigroup.

It is important that $\dot C_t$ does not need to be trace-class, since a typical realisation is as a translation invariance heat kernel decomposition, which is not trace-class. Since we assume $j$ to be Hilbert-Schmidt, we can safely assume that the integral kernel of $\dot C_t$ is invariant under Euclidean time translations, as it is natural (and verified in the concrete example of the Sine-Gordon model) for a decomposition of the covariance of the Euclidean massive free field. This assumption simplifies the bound in Thm.~\ref{thm:entropy-inequality}; it is possible to drop this assumption to get a slightly more general bound. 

Since the Polchinski equation is a nonlinear PDE on $\Ba$, we give a suitable notion of differentiability for functions on $\Ba$.

To this end, consider a real-valued function $F: V \subset \Ba \to \mathbb R$. If we consider $h \in \Hi_C$, then the function $G_\varphi(h) \equiv F(\varphi+h)$ is defined in a neighbourhood of the origin in $\Hi_C$. The Fréchet derivative of $G$ at $0$ is an element $G'(0) \in \Hi_C^*$. Then, $DF(\varphi) \equiv G'_\varphi(0)$ is the \emph{derivative of $F$ at $\varphi$ in $\Hi_C$ directions}; of course $F$ is \emph{$\Hi_C-$differentiable} if $G'_\varphi(0)$ exists. The second $\Hi_C-$derivative of $F$ at $\varphi$ is denoted $D^2 F(\varphi)$, and when it exists it is a symmetric bounded operator from $\Hi_C$ into $\Hi_C$. 

Since the $\Hi_C$ norm is stronger than the $\Ba$ norm, the derivative in $\Hi_C$ directions exists if the derivative in $\Ba$ directions exists. 
The derivatives in $\Ba-$directions are denoted $F', \ F''$, etc. The restriction of $F''(x)$ to $\Hi_C$ is a bounded linear operator from $\Hi_C$ into $\Hi_C$ and coincides with $D^2 F(x)$. Since $D^2 F(\varphi)$ is symmetric, by Corollary 5 in Ref. \cite{Gross1967}, it actually is a trace class operator on $\Hi_C$.

Denote by $L(\Ba,\Ba^*)$ the space of all bounded linear operators from $\Ba$ to $\Ba^*$. Define the weak operator topology on $L(\Ba,\Ba^*)$ to be the weakest topology which for every $\varphi, \ \psi \in \Ba$ makes the function $A \mapsto \langle A \varphi,\psi\rangle$ continuous. Using the inclusion $j: L^2(\Mb) \to \Ba$, we can define the restriction of derivatives of $F$ to $L^2(\Mb)$ directions, and we set, for every $l \in L^2(\Mb)$,
\[
\nabla F(\varphi)[l] \equiv ( \nabla F(\varphi), l)_{L^2(\Mb)} \equiv ( j^* F'(\varphi),l)_{L^2(\Mb)} = \langle F'(\varphi), jl\rangle = F'(\varphi)[jl] \ .
\]
Similarly, we denote with $\Hess F(\varphi)$ the restriction of $F''(\varphi)$ to $L^2(\Mb)\otimes L^2(\Mb)$.

The class of functions for which the Polchinski equation is well-defined is the set $\mathcal F$ of bounded, measurable, $C^3(\Ba)-$functions $F : \Ba \to \mathbb C$, with $F''$ uniformly $\Ba$ continuous in the weak operator topology and uniformly bounded, $\sup_{\varphi \in \Ba}\abs{F''(\varphi)} < \infty$.

We then assume that $V_0 \in \mathcal F$ and bounded from below.

\subsection{Polchinski equation}
Thanks to the probabilistic formula for Araki entropy, the relevant object to study entropy bounds is the partition function
\[
Z_0 = \E\left (e^{-V_0} \right ) \ .
\]

\begin{remark}
This is closely connected to the moment \emph{generating function} of the interacting measure $\mu_V$, given by the map
\[
\mathscr D \ni j \mapsto Z_0(j) \equiv \frac{1}{Z_0}\E\left(e^{\varphi(j)} e^{-V_0(\varphi)} \right ) \ .
\]
Indeed, by the Cameron-Martin theorem, this is equivalent to
\[
Z_0(j) = \frac{e^{\frac{1}{2}\langle j,Cj\rangle}}{Z_0}\E\left( e^{-V_0(\varphi+\phi_j)} \right ) \ ,
\]
where $\phi_j = Cj \in \Hi_C$. It is natural to isolate the last term from the normalisation factors and define an effective interaction
\[
\Ba \ni \phi \mapsto V_\infty(\phi) \equiv -\log \E\left(e^{-V_0(\varphi+\phi)} \right ) \ .
\]
In particular, $V_\infty(0)$ is the free energy.
\end{remark}

Following the RG philosophy, we can introduce a scale decomposition of the partition function by decomposing the Gaussian covariance.

\begin{definition}
The \emph{Gaussian semigroup} $P_{s,t} : \mathcal F \to \mathcal \mathcal F$, for $t \geq s$ is defined by
\[
P_{s,t} : F \mapsto P_{s,t}F(\phi) \equiv \E_{s,t}(F(\varphi+\phi)) \ .
\]
\end{definition}

\begin{definition}\label{def:effective interaction}
The \emph{effective interaction} is a one-parameter family of functionals 
\[
t \times \phi \in \mathbb R^+ \times \Ba \mapsto V_t(\phi) \equiv -\log \left[P_{0,t}e^{-V_0}(\phi) \right ] = - \log \E_{0,t} \left[e^{- V_0(\varphi+\phi)} \right ] \ .
\]
\end{definition}
Since $V_0 \in \mathcal F$ by the assumptions in Sec. \ref{sec:assumptions}, the function $F \equiv e^{-V_0}$ is a bounded, measurable function, and thus $V_t, \ e^{-V_t} \in \mathcal F$ for every $t$.

\begin{lemma}\label{lemma:markov-semigroup}
The operators $(P_{s,t})_{s\leq t}$ define a time-dependent Markov semigroup on $\mathcal F$, satisfying
\[
P_{t,t} F = F(\phi) \ , \quad P_{s,t}1 = 1 \ , \quad P_{r,t}P_{s,r} = P_{s,t} \ , \quad s \leq r \leq t
\ .
\]
\end{lemma}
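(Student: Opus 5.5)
The plan is to verify the three identities directly from the definition $P_{s,t}F(\phi)=\E_{s,t}(F(\varphi+\phi))$ and properties of the Gaussian measures $\p_{s,t}$. The first two are immediate. For $P_{t,t}$, the covariance $Q_{t,t}=jC_{t,t}j^*=0$, so $\p_{t,t}$ is the Dirac mass at $0$, and $P_{t,t}F(\phi)=F(\phi)$. For $P_{s,t}1=1$, note that $\p_{s,t}$ is a probability measure. Markovianity, i.e. positivity preservation, also follows at once, since $P_{s,t}$ is integration against a probability measure.

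The substantive identity is $P_{r,t}P_{s,r}=P_{s,t}$. I would write
\[
P_{r,t}(P_{s,r}F)(\phi)=\int_\Ba\int_\Ba F(\varphi_1+\varphi_2+\phi)\,\dd\p_{s,r}(\varphi_1)\,\dd\p_{r,t}(\varphi_2)=\int_\Ba F(\psi+\phi)\,\dd(\p_{s,r}*\p_{r,t})(\psi),
\]
using Fubini, which is justified because $F$ is bounded and measurable. The claim then reduces to the convolution identity $\p_{s,r}*\p_{r,t}=\p_{s,t}$. I would prove this through characteristic functionals on $\Ba^*$. For $\ell\in\Ba^*$, the measure $\p_{u,v}$ is centred Gaussian with $\int e^{i\ell(\varphi)}\dd\p_{u,v}=e^{-\frac12\ell(Q_{u,v}\ell)}$. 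By additivity of the Bochner integral, $Q_{s,r}+Q_{r,t}=Q_{s,t}$, and since the Fourier transform of a convolution is the product of the Fourier transforms, the two sides have the same characteristic functional. A Borel probability measure on the separable Hilbert (hence Polish) space $\Ba$ is determined by its characteristic functional, which gives equality of the measures.

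Next I would check that $P_{s,t}$ maps $\mathcal F$ into $\mathcal F$. Boundedness and measurability are clear. For differentiability, I would differentiate under the integral: $(P_{s,t}F)^{(k)}(\phi)=\E_{s,t}\bigl(F^{(k)}(\varphi+\phi)\bigr)$ for $k\le 3$. The translated derivatives are controlled because $F''$ is uniformly bounded and uniformly weakly continuous, and this uniformity is preserved under averaging. This step needs dominated convergence, with the uniform bounds coming from the definition of $\mathcal F$.

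The main obstacle is the existence of $\p_{s,t}$ with covariance exactly $Q_{s,t}$, together with the identification of its characteristic functional. Existence requires $Q_{s,t}$ to be positive, symmetric and trace class on the Hilbert space $\Ba$. This is assumed in Sec.~\ref{sec:assumptions} and holds because $j$ is Hilbert–Schmidt for $a>d/4$ and $C_{s,t}$ is bounded and positive. With these properties, the standard Hilbert-space Gaussian measure theorem supplies $\p_{s,t}$, and the rest of the argument is formal.
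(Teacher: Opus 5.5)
Your argument is the paper's proof. The paper also gets $\p_{t,t}=\delta_0$ from $Q_{t,t}=0$ and $P_{s,t}1=1$ from $\p_{s,t}$ being a probability measure. It also proves $P_{r,t}P_{s,r}=P_{s,t}$ by Fubini for bounded $F$ together with $\p_{s,r}*\p_{r,t}=\p_{s,t}$, obtained by comparing the characteristic functionals $e^{-\frac12\ell(Q_{s,t}\ell)}$ via $Q_{s,r}+Q_{r,t}=Q_{s,t}$ and using uniqueness for Borel probability measures on the separable space $\Ba$.

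Your extra check that $P_{s,t}$ maps $\mathcal F$ into $\mathcal F$ goes beyond the paper, which only shows $\|P_{s,t}F\|_\infty\leq\|F\|_\infty$. Differentiating under $\E_{s,t}$ is justified for $k\leq 2$, since bounded $F$ and $F''$ force $\|F'\|\leq 2\sqrt{\|F\|_\infty\sup\|F''\|}$. It is not justified for $k=3$. The definition of $\mathcal F$ gives no bound on $F'''$, and averaging against $\p_{s,t}$ gives no extra regularity in $\Ba$-directions outside the Cameron--Martin space of $Q_{s,t}$. That part needs an additional hypothesis, such as a uniform bound on $F'''$.
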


\begin{proof} 
First, from the definition it follows that $\norm{P_{s,t} F}_\infty \leq \norm{F}_\infty$, where as usual $\norm{ F }_\infty \equiv \sup_{\varphi \in \Ba}\abs{F(\varphi)}$. This implies that $P_{s,t}$ maps bounded functions into bounded functions.

Since $Q_{t,t}= 0$ we have $\lim_{s \downarrow t} \p_{t,s} = \delta_0$, so that $P_{t,t}F(\phi) = F(\phi)$. From the definition, $P_{s,t}1=1$ is obvious. For each $\p_{s,t}$, its characteristic function is
\[
\hat \p_{s,t}(\ell) = \int_\Ba e^{i\ell(\varphi)} \dd \p_{s,t}(\varphi) = e^{-\frac{1}{2}\ell(Q_{s,t} \ell)} \ , \ \forall \ell \in \Ba^* \ .
\]
From
\[
\widehat{ \p_{r,t} \ast \p_{s,r}}(\ell) = \hat \p_{r,t}(\ell)\hat \p_{s,r}(\ell) \ ,
\]
we get
\[
\widehat{ \p_{r,t}\ast \p_{s,r}}(\ell) = \exp{-\frac{1}{2}\left(\ell(Q_{r,t} \ell) + \ell(Q_{s,r}(\ell)) \right)} = \exp{-\frac{1}{2}\left(\ell(Q_{s,t} \ell) \right)} = \hat \p_{s,t}(\ell) \ ,
\]
since $Q_{r,t} + Q_{s,r} = Q_{s,t}$. By uniqueness of the characteristic function for Radon probability measures on a separable Banach space (see, e.g., Ref. \cite[Proposition 3.12]{Hairer2026}) we have
\[
\p_{r,t}\ast \p_{s,r} = \p_{s,t} \ ,
\]
from which, using Fubini's theorem since $F$ is bounded, we get
\[
P_{r,t} P_{s,r} F(\phi) = P_{s,t}F(\phi) \ .
\]
\end{proof}

To each covariance $Q_{s,t}$ we can associate its Cameron-Martin space $\Hi_{s,t} \equiv \Hi_{Q_t-Q_s}$. Since $Q_{t_1} \leq Q_{t_2}$ if $t_1 \leq t_2$, we can use Proposition \ref{prop:Cameron-Martin-bound} to show that, for every fixed $s$ and a collection $s \leq t_0 \leq t_1 \leq \ldots t_n$ we have
\[
\Hi_{s,t_0} \subseteq \Hi_{s,t_1} \subseteq \ldots \subseteq \Hi_{s,t_n} \ ,
\]
and similarly, for fixed $t$ and a family $s_1 \leq s_2 \leq \ldots \leq s_n \leq t$ we have
\[
\Hi_{s_n,t} \subseteq \Hi_{s_{n-1},t} \subseteq \ldots \subseteq \Hi_{s_1,t} \ . 
\]
The full Cameron-Martin space is $\Hi_C = \Hi_{0,\infty}$; therefore, for every $t \geq s \geq 0$ we have the inclusion $\Hi_{s,t} \subseteq \Hi_C$. Furthermore, we can consider the Cameron-Martin space $\Hi_{\dot Q_t}$ associated with $\dot Q_t$.

\begin{proposition}\label{prop:generators}
Assume $F \in \mathcal F$. Then, $F$ is in the domain of the generator of $P_{s,t}$, satisfying
\begin{equation}\label{eq:derivative-Polchinski-semigroup}
\frac{\partial}{\partial t}P_{s,t}F(\phi) = \frac{1}{2}\Delta_{\dot C_t} P_{s,t}F(\phi) \ , \quad \frac{\partial}{\partial s}P_{s,t} F(\phi) = -\frac{1}{2}P_{s,t}\Delta_{\dot C_s} F(\phi) \ ,
\end{equation}
where $\Delta_{\dot C_t} F \equiv \Tr_{L^2}\dot C_t \Hess F(\varphi)$ is the \emph{functional Laplacian}. 
\end{proposition}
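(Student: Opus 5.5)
The plan is to reduce both identities to a one-step second-order Taylor expansion, using the semigroup property of Lemma~\ref{lemma:markov-semigroup}. For the forward derivative, fix $s\le t$ and $h>0$. The semigroup property gives $P_{s,t+h}F = P_{t,t+h}P_{s,t}F$, so
\[
\frac{P_{s,t+h}F(\phi)-P_{s,t}F(\phi)}{h} = \frac1h\,\E_{t,t+h}\bigl[G(\phi+\varphi)-G(\phi)\bigr], \qquad G\equiv P_{s,t}F .
\]
The first step is to check that $G\in\mathcal F$. This follows by differentiating under the integral: $F$ is bounded with uniformly bounded derivatives, and $\p_{s,t}$ is a probability measure, so dominated convergence gives $G''(\phi)=\E_{s,t}[F''(\phi+\varphi)]$. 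That expression inherits the uniform bound and the uniform weak-operator continuity.

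Next, I would Taylor expand $G$ to second order in $\Ba$:
\[
G(\phi+\varphi)-G(\phi) = G'(\phi)[\varphi] + \tfrac12 G''(\phi)[\varphi,\varphi] + R(\phi,\varphi),
\]
with remainder
\[
R = \int_0^1(1-\theta)\bigl(G''(\phi+\theta\varphi)-G''(\phi)\bigr)[\varphi,\varphi]\,\dd\theta .
\]
The three terms are handled as follows.
\begin{itemize}
\item The linear term has zero mean under the centred Gaussian $\p_{t,t+h}$.
\item For the quadratic term, the Fernique/covariance-operator identity in the Remark closing Section~\ref{sec:AWS} gives $\E_{t,t+h}\,G''(\phi)[\varphi,\varphi]=\Tr_\Ba\bigl(G''(\phi)Q_{t,t+h}\bigr)$. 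This is well defined because $G''(\phi)\in L(\Ba,\Ba^*)$ and $Q_{t,t+h}$ is trace class. Dividing by $h$ and using $h^{-1}Q_{t,t+h}\to\dot Q_t$ in $\mathfrak S_1(\Ba)$ gives the limit $\Tr_\Ba(G''(\phi)\dot Q_t)$.
\item Since $\dot Q_t=j\dot C_t j^*$, cyclicity of the trace rewrites this limit as $\Tr_{L^2}(\dot C_t\, j^*G''(\phi)j)=\Tr_{L^2}\dot C_t\Hess G(\phi)$. That is exactly $\Delta_{\dot C_t}G(\phi)$.
\end{itemize}

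The remainder is the main obstacle: I must show $h^{-1}\E_{t,t+h}|R|\to0$. First I would use the uniform bound on $G''$ to get $|R|\le 2\sup\|G''\|\,\|\varphi\|^2$. I would then combine this with $h^{-1}\E_{t,t+h}\|\varphi\|^2=h^{-1}\Tr_\Ba Q_{t,t+h}\to\Tr\dot Q_t<\infty$, so the family of measures $h^{-1}\|\varphi\|^2\dd\p_{t,t+h}$ has bounded mass. Split the expectation into $\{\|\varphi\|\le\delta\}$ and its complement. On the complement, Gaussian tail bounds (Fernique) for covariances of trace norm $O(h)$ show that the contribution vanishes as $h\to0$. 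On the small ball, uniform continuity of $G''$ should make the difference small. The catch is that continuity is only in the weak operator topology, which does not directly control $[\varphi,\varphi]$ uniformly. Here I would follow Gross's proof of his Proposition~8. I would expand $\varphi$ along an eigenbasis of the trace-class $Q_{t,t+h}$, or more concretely along a fixed basis in which $h^{-1}Q_{t,t+h}$ is uniformly close to $\dot Q_t$. Truncating to finitely many directions, where weak and norm continuity coincide, leaves a tail controlled by the trace-norm convergence. Uniformity in $h$ of the truncated tail comes from $\mathfrak S_1$ convergence, which makes $\{h^{-1}Q_{t,t+h}\}$ uniformly trace-tight.

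For the backward equation I would write $P_{s-h,t}F=P_{s,t}P_{s-h,s}F$ and apply the same expansion to $F$ itself; it yields $-\tfrac12P_{s,t}\Delta_{\dot C_s}F$ after passing $P_{s,t}$ through the limit by dominated convergence, since the difference quotients are uniformly bounded by the estimates above. The right derivative in $s$ is handled analogously, and continuity of $t\mapsto\dot Q_t$ in trace norm upgrades one-sided derivatives to two-sided ones.
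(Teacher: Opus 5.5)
Your proposal is correct, and its overall architecture matches the paper's. Both arguments do the following: a second-order Taylor expansion; the linear term vanishes by symmetry of $\p_{t,t+h}$; the quadratic term is evaluated by the trace identity together with $h^{-1}Q_{t,t+h}\to\dot Q_t$ in $\mathfrak S_1(\Ba)$; cyclicity converts this to $\Tr_{L^2}\dot C_t\Hess$; and the factorisations $P_{s,t+h}=P_{t,t+h}P_{s,t}$, $P_{s-h,t}=P_{s,t}P_{s-h,s}$ give the two derivatives.

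Where you genuinely differ is the remainder estimate, which the paper singles out as its new step.

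\textbf{The paper's route.} It writes $\psi=\sqrt h X_h$, where $X_h$ has covariance $A_h=h^{-1}Q_{t,t+h}$. It couples $X_h$ with a variable $X$ of covariance $\dot Q_t$ through the same i.i.d.\ Gaussian coefficients. The Powers--St{\o}rmer inequality then gives $X_h\to X$ in $L^2$. Weak-operator continuity of $F''$ is applied pointwise along these random directions, and uniform integrability from the fourth-moment bound closes the argument.

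\textbf{Your route.} You argue deterministically, in three pieces.
\begin{enumerate}
\item You truncate to the span of finitely many vectors of a \emph{fixed} orthonormal basis. This is the right choice, not the $h$-dependent eigenbasis of $Q_{t,t+h}$ you first mention. The tail and cross terms are then controlled uniformly in small $h$ by $\Tr((1-P_N)A_h)\le\Tr((1-P_N)\dot Q_t)+\norm{A_h-\dot Q_t}_1$.
\item On a small ball you use weak-operator continuity at $\phi$ of the finitely many matrix elements $G''[e_i,e_k]$.
\item On the complement of the ball, a Chebyshev-type bound suffices: $h^{-1}\E_{t,t+h}\bigl[\norm{\varphi}^2;\norm{\varphi}>\delta\bigr]\le 3h(\Tr A_h)^2/\delta^2\to0$. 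So Fernique is not actually needed.
\end{enumerate}
Choosing $N$ first, then the continuity tolerance, then $\delta$, then $h$ makes the argument close.

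\textbf{Comparison.} Both routes use exactly the same hypotheses. Yours avoids the coupling and Powers--St{\o}rmer and is quantitative. The paper's is shorter once the coupling is in place.

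You also check two points the paper only asserts. First, you verify $P_{s,t}F\in\mathcal F$ by differentiating under the integral; the paper's appeal to Lemma~\ref{lemma:markov-semigroup} gives only boundedness. Second, you address the passage from one-sided to two-sided derivatives.
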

\begin{proof}
The proof follows the same steps as Gross' proof of Proposition 8 in Ref. \cite{Gross1967potential}: his construction corresponds to the special case $C_t = t C$. The main difference is the proof of Lemma \ref{lemma:remainder}, which in his case is immediate by Gaussian scaling and dominated convergence.

The function $F(\varphi+x\psi)$ is twice differentiable in $x \in [0,1]$,
with
\[
\frac{d}{dx}F(\varphi+x\psi) = \langle F'(\varphi+x\psi),\psi\rangle \ , \quad \frac{d^2}{d x^2}F(\varphi+x\psi) = \langle F''(\varphi+x\psi)\psi,\psi\rangle \ .
\]
Since the second derivative is bounded, the first derivative is absolutely continuous and two integrations by parts yield
\[
F(\varphi+\psi) = F(\varphi)+\langle F'(\varphi),\psi\rangle + \int_0^1 d x (1-x)\langle F''(\varphi+x\psi)\psi,\psi\rangle \ ,
\]
Since $F$ is bounded, say $\abs{F(\varphi)}\leq M$, then
\[
\begin{aligned}
\frac1s (P_{t,t+s}F)(\varphi)-F(\varphi)
&= \frac1s \int_\Ba \left\{ F(\varphi+\psi)-F(\varphi)\right \}\dd \p_{t,t+s}( \psi) \\ 
&= \frac1s \int_\Ba\left\{ \langle F'(\varphi),\psi\rangle + \int_0^1 \dd x (1-x) \langle F''(\varphi+x\psi)\psi,\psi\rangle\right \}\dd \p_{t,t+s}(\psi) \ .
\end{aligned}
\]
Since the measure is even and $\langle F'(\varphi),\psi\rangle$ is an odd function in $\psi$, the first term vanishes identically. 

Then, it remains to evaluate the second derivative,
\[
\lim_{s \downarrow 0} \frac1s (P_{t,t+s}F)(\varphi)-F(\varphi) = \lim_{s \downarrow 0}\frac1s \int_0^1 \dd x (1-x)\int_\Ba\langle F''(\varphi+x\psi)\psi,\psi\rangle \dd \p_{t,t+s}(\psi) \ ,
\]
which becomes
\[
\begin{aligned}
 &\lim_{s \downarrow 0}\frac1s \int_0^1 \dd x(1-x)\int_\Ba\langle F''(\varphi+x \psi)\psi,\psi\rangle \dd \p_{t,t+s}(\psi) \\
 &= \lim_{s \downarrow 0} \frac12 \frac1s \int_\Ba\langle F''(\varphi)\psi,\psi\rangle \dd \p_{t,t+s}(\psi) \\
 &+ 
 \lim_{s \downarrow 0}\frac1s \int_0^1 \dd x (1-x)\int_\Ba\langle (F''(\varphi+x \psi)-F''(\varphi))\psi,\psi\rangle \dd \p_{t,t+s}(\psi) \ .
 \end{aligned}
\]

If the last line vanishes (which we prove in the next Lemma), we are left with
\[
\lim_{s \downarrow 0} \frac{(P_{t,t+s}F)(\varphi)-F(\varphi)}{s} = \lim_{s \downarrow 0} \frac12 \frac1s \int_\Ba\langle F''(\varphi)\psi,\psi\rangle \dd \p_{t,t+s}(\psi) \ .
\]

Since $s^{-1} Q_{t,t+s}$ converges in trace-norm to $\dot Q_t$, $F''(\varphi)$ is bounded, and since $\Ba$ is a Hilbert space, we can use the trace covariance identity (see e.g. Ref.~\cite[Proposition 3.18]{Hairer2026}) to get
\[
\lim_{s \downarrow 0} \frac{(P_{t,t+s}F)(\varphi)-F(\varphi)}{s} =  \frac12 \Tr_\Ba \dot Q_t F''(\varphi) \ .
\]

Using the Hilbert-Schmidt inclusion $j:L^2(\Mb) \to \Ba$, the boundedness of $\dot C_t$, by the cyclicity of the trace we get
\[
\Tr_\Ba \dot Q_t F''(\varphi) = \Tr_\Ba j \dot C_t j^* F''(\varphi) = \Tr_{L^2(\Mb)} \left( \dot C_t \Hess F \right ) \ ,
\]
which completes the computation of the limit:
\begin{equation}\label{eq:trace-generator}
\lim_{s \downarrow 0}  \frac{1}{s}(P_{t,t+s}F)(\phi)-F(\phi) =\frac12 \Tr_{L^2(\Mb)}\left[\dot C_t \Hess F(\varphi) \right]  \ .
\end{equation}
Now, we use the semigroup algebra to note that, for $s \leq t$ and positive $h$,
\[
\begin{gathered}
P_{s,t+r} = P_{t,t+r} P_{s,t} \\
P_{s-r,t} = P_{s,t} P_{s-r,s} \ .
\end{gathered}
\]
From this, we can decompose $(P_{s,t+r}-P_{s,t})F(\phi)$ and $(P_{s-r,t} - P_{s,t} )F(\phi)$ and use Eq. \eqref{eq:trace-generator}, applied to $P_{s,t}F$ (which is in $\mathcal F$ by Lemma.~\ref{lemma:markov-semigroup}) to get the result.
\end{proof}

It remains to prove that the remainder term goes to $0$ in the limit $s \to 0$.
\begin{lemma}\label{lemma:remainder}
Under the same assumptions of Prop. \ref{prop:generators}, we have
\[
R_s(\varphi) \equiv \frac1s\int_0^1 \dd x (1-x)\int_\Ba\langle (F''(\varphi+x \psi)-F''(\varphi))\psi,\psi\rangle \dd \p_{t,t+s}(\psi) \xrightarrow{s \to 0} 0 \ .
\]
\end{lemma}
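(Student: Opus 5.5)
The plan is to split the Gaussian integral into a region where $\psi$ is small in $\Ba$-norm and its complement, using two facts: the uniform continuity of $F''$ near $\varphi$, and the concentration of $\p_{t,t+s}$ around $0$ as $s\downarrow 0$.

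Fix $\epsilon>0$. By the assumption that $F''$ is uniformly $\Ba$-continuous, there is $\delta>0$ such that $\norm{\psi}<\delta$ implies that $F''(\varphi+x\psi)-F''(\varphi)$ is small for all $x\in[0,1]$. I will take this in operator norm, $\norm{F''(\varphi+x\psi)-F''(\varphi)}_{L(\Ba,\Ba^*)}<\epsilon$. If only weak-operator continuity is available, this step needs care (see below). On the set $\{\norm{\psi}<\delta\}$ the integrand is then bounded by $\epsilon\norm{\psi}^2$, so that part of $R_s$ is at most
\[
\frac{\epsilon}{2}\,\frac1s\int_\Ba \norm{\psi}^2 \dd \p_{t,t+s}(\psi) = \frac{\epsilon}{2}\,\frac1s \Tr_\Ba Q_{t,t+s}.
\]
Because $s^{-1}Q_{t,t+s}\to \dot Q_t$ in $\mathfrak S_1(\Ba)$, this quantity stays bounded by $\epsilon\,(\Tr \dot Q_t+1)$ for small $s$.

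On the complement $\{\norm{\psi}\geq\delta\}$ I use the uniform bound $\sup\abs{F''}\leq M$. That contribution is bounded by
\[
\frac{2M}{s}\int_{\norm{\psi}\ge\delta}\norm{\psi}^2\dd\p_{t,t+s} \leq \frac{2M}{s\delta^2}\int\norm{\psi}^4 \dd\p_{t,t+s}.
\]
Gaussian hypercontractivity, or simply the Gaussian moment identity on the Hilbert space $\Ba$, gives $\E\norm{\psi}^4\leq 3(\Tr Q_{t,t+s})^2$. The complement term is therefore $O(s^{-1}(\Tr Q_{t,t+s})^2)=O(s)$, since $\Tr Q_{t,t+s}=O(s)$, and it tends to $0$. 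Combining the two regions, letting $s\to0$ and then $\epsilon\to0$ gives $R_s(\varphi)\to 0$.

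The main obstacle is the first step. In infinite dimensions, weak-operator continuity of $F''$ does not directly bound $\langle (F''(\varphi+x\psi)-F''(\varphi))\psi,\psi\rangle$ by $\epsilon\norm{\psi}^2$ uniformly in $\psi$. If operator-norm continuity is not available, I would instead imitate Gross: expand $\psi$ in the eigenbasis $\{e_k\}$ of $Q_{t,t+s}$ (equivalently, work with the rescaled variable $\psi=\sqrt{s}\,\eta$ with $\eta$ approximately $\p_{\dot Q_t}$-distributed). The difference quotient then becomes $\int_0^1 (1-x)\,\E\langle (F''(\varphi+x\sqrt{s}\eta)-F''(\varphi))\eta,\eta\rangle$. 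I would truncate $\eta$ to a finite-dimensional projection, where weak-operator continuity suffices via dominated convergence, and control the tail of the projection by the uniform bound on $F''$ together with the trace-norm convergence $s^{-1}Q_{t,t+s}\to\dot Q_t$, which makes the tail trace uniformly small in $s$.
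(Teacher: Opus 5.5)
Your argument is correct, and its main line takes a different and more elementary route than the paper's.

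\textbf{Your main route.} The caveat you raise does not apply under the paper's hypotheses. Elements of $\mathcal F$ are $C^3(\Ba)$, so $F''$ is Fréchet differentiable and hence continuous at $\varphi$ in the operator norm of $L(\Ba,\Ba^*)$. Continuity at the single point $\varphi$ is all your $\epsilon$--$\delta$ split needs, since $\norm{x\psi}\leq\norm{\psi}$ for $x\in[0,1]$. The small-$\psi$ region then contributes at most $\frac{\epsilon}{2}\Tr(s^{-1}Q_{t,t+s})$. The large-$\psi$ region is $O(s)$ by Chebyshev and $\E\norm{\psi}^4\leq 3(\Tr Q_{t,t+s})^2$.

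\textbf{The paper's route.} It rescales $\psi=\sqrt{s}X_s$ and couples $X_s$ to a Gaussian $X$ with covariance $\dot Q_t$ through common standard Gaussians. Powers--St{\o}rmer then gives $X_s\to X$ and $\sqrt{s}X_s\to 0$ in probability. It invokes only weak-operator continuity of $F''$ to get convergence in probability of the integrand, and upgrades this to $L^1$ by uniform integrability from the same fourth-moment bound.

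\textbf{Comparison.} Your route needs only $\sup_s \Tr(s^{-1}Q_{t,t+s})<\infty$, but it uses norm continuity of $F''$. The paper's route works with the weaker weak-operator continuity, the hypothesis it singles out following Gross. The price is that it needs the full trace-norm convergence $s^{-1}Q_{t,t+s}\to\dot Q_t$.

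\textbf{Your fallback.} Your fallback for the weak-operator case is also a valid substitute for the coupling, with a few corrections.
\begin{enumerate}
\item The finite-rank projection $P_N$ must be fixed independently of $s$, for example a spectral projection of $\dot Q_t$. The $s$-dependent eigenbasis of $Q_{t,t+s}$ you first mention would not do, because its vectors need not converge.
\item With $P_N$ fixed, the finitely many bounded matrix elements $\langle (F''(\varphi+x\sqrt{s}\eta)-F''(\varphi))e_k,e_l\rangle$ tend to $0$ in probability, because $\sqrt{s}\eta\to0$ in probability.
\item Passing to expectations goes through Cauchy--Schwarz, not literal dominated convergence, since the law of $\eta$ varies with $s$.
\item The tail is controlled by $\Tr((1-P_N)A_s)\leq\Tr((1-P_N)\dot Q_t)+\norm{A_s-\dot Q_t}_1$, where $A_s\equiv s^{-1}Q_{t,t+s}$.
\end{enumerate}
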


\begin{proof}

Set
\[
A_s\equiv \frac1s Q_{t,t+s} \ .
\]
By assumption,$ A_s$ tends to $\dot Q_t$ in trace norm. By Gaussian scaling, if $\psi$ has Gaussian measure $\p_{t,t+s}$, then $\psi=\sqrt{s}X_s$, where $X_s$ is a centred Gaussian variable with covariance $A_s$. Hence
\[
R_s(\varphi) = \int_0^1\dd x(1-x)\E\left[ \left\langle \left( F''(\varphi+x\sqrt{s}\,X_s)-F''(\varphi) \right)X_s,X_s \right\rangle \right] \ .
\]
Let $\{e_j\}_j$ be an orthonormal basis of $\Ba$ and $\{\xi_j\}_j$ a sequence of independent standard Gaussian variables, and write
\[
X_s=\sum_j\xi_j A_s^{1/2}e_j \ , \quad X=\sum_j\xi_j\dot Q_t^{1/2}e_j \ .
\]
$X_s$ and $X$ have respectively covariance $A_s$ and $\dot Q_t$. Moreover, by the Powers-St{\o}rmer inequality,
\[
\E\norm{X_s-X}_\Ba^2 =
\norm{A_s^{1/2}-\dot Q_t^{1/2}}_{\mathrm{HS}}^2
\leq
\norm{A_s-\dot Q_t}_1
\xrightarrow{s\to0}0 \ .
\]
Thus, $X_s\to X$ and $\sqrt{s}X_s\to0$ in probability. For every fixed $x\in[0,1]$, the weak operator continuity of $F''$, together with the boundedness $\norm{F''}\leq N$, gives
\[
\left\langle
\left(
F''(\varphi+x\sqrt{s}\,X_s)-F''(\varphi)
\right)X_s,X_s
\right\rangle
\xrightarrow{s\to 0} 0
\]
in probability. Indeed, if $y_n\to0$ and $z_n\to z$ in $\Ba$, then
\[
\begin{aligned}
&
\abs{
\left\langle
\left(F''(\varphi+y_n)-F''(\varphi)\right)z_n,z_n
\right\rangle
}
\\
&\leq
\abs{
\left\langle
\left(F''(\varphi+y_n)-F''(\varphi)\right)z,z
\right\rangle
}
+
2N\norm{z_n-z}\left(\norm{z_n}+\norm {z}\right),
\end{aligned}
\]
and both terms tend to $0$.

It remains to pass the limit through the expectation. By boundedness of $F''$,
\[
\abs{ \left\langle \left( F''(\varphi+x\sqrt{s}\,X_s)-F''(\varphi) \right)X_s,X_s \right\rangle}
\leq
2N\norm{X_s}_\Ba^2 \ .
\]
Since $X_s$ is Gaussian with covariance $A_s$, the Wick-Isserlis theorem gives
\[
\E\norm{X_s}_\Ba^4 = (\Tr A_s)^2+2\Tr(A_s^2) \leq 3(\Tr A_s)^2 \ .
\]
Since $A_s\to\dot Q_t$ in trace norm,
\[
\sup_s\Tr A_s<\infty \ ,
\]
and therefore $\{\norm{X_s}^2\}_s$ is uniformly integrable. Hence the preceding convergence in probability implies
\[
\E \abs{ \left\langle \left( F''(\varphi+x\sqrt{s}\,X_s)-F''(\varphi) \right)X_s,X_s \right\rangle} \xrightarrow{s\to0}0 \ .
\]
Finally,
\[
\E\abs{ \left\langle \left( F''(\varphi+x\sqrt{s}\,X_s)-F''(\varphi) \right)X_s,X_s \right\rangle} \leq 2N\Tr A_s \ ,
\]
which is uniformly bounded in $s$ and $x$. Dominated convergence in $x$ therefore gives
\[
R_s(\varphi)\xrightarrow{s\to0} 0 \ .
\]
\end{proof}

\begin{proposition}[Polchinski equation]\label{prop:polchinski}
Let $C_t$ be as in Section \ref{sec:assumptions}, and let $V_0 \in \mathcal F$. Then, the effective interaction satisfies the \emph{Polchinski equation}
\begin{equation}\label{eq:polchinski}
\frac{\partial}{\partial t} V_t = \frac{1}{2}\Delta_{\dot C_t} V_t - \frac{1}{2}\dot C_t(\nabla V_t, \nabla V_t) \ .
\end{equation}
\begin{proof}
The proof follows by direct computation from the definition. Indeed, consider $Z_t(\phi) \equiv P_{0,t} e^{-V_0(\varphi)}$;  by Eq. \eqref{eq:derivative-Polchinski-semigroup} it satisfies
\[
\frac{\partial}{\partial t} Z_t(\varphi) = \frac{1}{2}\Delta_{\dot C_t} Z_t(\varphi) \ .
\]
Then, 
\[
\frac{\partial}{\partial t} V_t 
= -Z_t^{-1} \frac{\partial}{\partial t}Z_t 
= -e^{V_t} \frac{1}{2}\Delta_{\dot C_t} e^{-V_t} 
=  \frac{1}{2}\Delta_{\dot C_t} V_t - \frac{1}{2}\dot C_t(\nabla V_t, \nabla V_t) \ .
\]
The second term, in particular, arises from the restriction of $V_t'$ to $L^2(\Mb)$ directions.
\end{proof}
\end{proposition}

\begin{remark}
The Polchinski equation is equivalent, via the identification $Z_t = e^{-V_t}$ to the non-autonomous second order PDE 
\begin{equation}\label{eq:Z-t-equation}
\frac{\partial}{\partial t}Z_t(\varphi) = \frac12 \Delta_{\dot C_t} Z_t(\varphi) \ .
\end{equation}
Second order PDEs in infinite dimensional Hilbert spaces are well studied (see e.g. Ref.~\cite{Cerrai2001}). In the non-autonomous case, under mild conditions the solution to Eq.~\eqref{eq:Z-t-equation} is unique \cite{Cerrai2025}.
\end{remark}

\subsection{RG decomposition of the relative entropy and proof of Theorem \ref{thm:entropy-inequality}}

We have now all the necessary ingredients to prove Theorem \ref{thm:entropy-inequality}. The proof has two steps. First, we decompose the relative entropy along the RG flow; secondly, we estimate the relative entropy adapting the Bauerschmidt, Bodineau, and Dagallier's multiscale Bakry-Émery criterion to the infinite-dimensional setting. 

The next Proposition proves the decomposition of the relative entropy along the RG flow.
\begin{proposition}\label{prop:entropy-decomposition}
Assume the conditions stated in Section \ref{sec:assumptions}. Then, the relative entropy between a $(\beta, \alpha)-$KMS state and a $(\beta, \alpha^{V_0})-$KMS state can be expressed as
\begin{equation}
S \left (\omega \mid \omega^{V_0} \right ) = \frac{1}{2} \int_0^\infty \dd t P_{t,\infty}[\dot C_t(\nabla V_t, \nabla V_t) ](0) \ .
\end{equation}

\end{proposition}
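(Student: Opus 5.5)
The plan is to start from Prop.~\ref{prop:relative-entropy-probabilistic}, which gives $S(\omega|\omega^{V_0}) = \E(V_0) + \log \E(e^{-V_0})$. By Definition~\ref{def:effective interaction} we have $\log \E(e^{-V_0}) = -V_\infty(0)$, and $\E(V_0) = P_{0,\infty}V_0(0)$. The entropy is therefore
\[
S(\omega|\omega^{V_0}) = P_{0,\infty}V_0(0) - P_{\infty,\infty}V_\infty(0) \ .
\]
This should be read as the boundary values of the function $t\mapsto g(t) \equiv P_{t,\infty}V_t(0)$, since $g(0)=P_{0,\infty}V_0(0)$ and, formally, $g(\infty)=V_\infty(0)$. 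Hence $S = -\int_0^\infty g'(t)\dd t$.

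The key computation is the derivative of $g$. By the product rule and the backward equation in Prop.~\ref{prop:generators},
\[
g'(t) = -\tfrac12 P_{t,\infty}\Delta_{\dot C_t}V_t(0) + P_{t,\infty}\partial_t V_t(0) \ .
\]
This step requires $V_t\in\mathcal F$, which follows from the remark after Definition~\ref{def:effective interaction}. Inserting the Polchinski equation, Prop.~\ref{prop:polchinski}, the Laplacian terms cancel and we get
\[
g'(t) = -\tfrac12 P_{t,\infty}[\dot C_t(\nabla V_t,\nabla V_t)](0) \ .
\]
Integrating over $t$ gives the claim.

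To make this rigorous I would work on a finite interval $[0,T]$ and write $g_T(t)=P_{t,T}V_t(0)$. Then $g_T(0)-g_T(T) = P_{0,T}V_0(0)-V_T(0)$, and the integral identity holds on $[0,T]$. The derivative of the composition $P_{t,T}V_t$ needs a joint differentiation. I would obtain it by splitting increments as $P_{t+h,T}V_{t+h}-P_{t,T}V_t = P_{t+h,T}(V_{t+h}-V_t) + (P_{t+h,T}-P_{t,T})V_t$. The first piece is handled using the boundedness of $P$, with $\|P F\|_\infty\le\|F\|_\infty$ from Lemma~\ref{lemma:markov-semigroup}, together with $\partial_t V_t$ bounded uniformly in $\varphi$ through the $\mathcal F$ bounds on $V_t$ and its derivatives. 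The second piece is handled by Prop.~\ref{prop:generators}.

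Finally I take $T\to\infty$. Since $Q_T\to Q$ in the appropriate sense and $\p_{0,T}\to\p$ weakly, $P_{0,T}V_0(0)\to\E(V_0)$ and $V_T(0)\to V_\infty(0)=-\log\E(e^{-V_0})$. The right-hand side converges by monotone convergence, because the integrand $\dot C_t(\nabla V_t,\nabla V_t)\ge0$ by positivity of $\dot C_t$; the nonnegativity also shows that the limit is consistent with $S\ge0$.

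I expect the main obstacle to be this $T\to\infty$ limit. The difficulty is uniform integrability of $V_0$ under $\p_{0,T}$ as $T\to\infty$, and the convergence of $P_{t,T}[\cdots](0)$ to $P_{t,\infty}[\cdots](0)$ inside the integral. I would try first to use $V_0$ bounded below together with the $L^{2+\epsilon}$ integrability of $K$, transported to $V_0$, to get uniform integrability via the Gaussian convolution structure $\p_{0,\infty}=\p_{T,\infty}\ast\p_{0,T}$.
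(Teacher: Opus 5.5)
Your argument is the paper's own: its $S_t(\phi)=\E(V_0(\varphi+\phi))-P_{t,\infty}V_t(\phi)$ is exactly your $g(0)-g(t)$ at $\phi=0$. The paper differentiates it with the backward equation of Prop.~\ref{prop:generators} and the Polchinski equation, so that the Laplacian terms cancel, and then integrates in $t$ without addressing the limit $t\to\infty$. For your rigorisation it is simpler to keep $P_{t,\infty}$ rather than $P_{t,T}$ on $[0,T]$: the nonnegative integrand is then independent of $T$, so monotone convergence applies directly, and you are left only with showing $P_{T,\infty}V_T(0)\to V_\infty(0)$; moreover $V_0\in\mathcal F$ is bounded, so the uniform-integrability issue you anticipate does not arise.
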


\begin{proof}
Let $t \mapsto S_t$ be the function
\[
S_t(\phi) \equiv \E(V_0(\varphi+\phi)) - P_{t,\infty} V_t(\phi) \ .
\]

By definition, recalling that $P_{0,\infty} F(\phi) = \E(F(\varphi+\phi))$, we immediately have $S_0(\phi)=0$.

The second endpoint follows recalling that $P_{0,0}=1$ and the definition of the effective interaction, so that
\[
S_\infty(\phi) = \E(V_0(\varphi+\phi)) - V_\infty(\phi) = \E(V_0(\varphi+\phi)) + \log \E\left [ e^{-V_0(\varphi+\phi)} \right ] \ ,
\]
and $S_\infty(0)$ coincides with Eq. \eqref{eq:relative-entropy-probabilistic} for the relative entropy.

The Polchinski equation, together with Eq. \eqref{eq:derivative-Polchinski-semigroup} gives
\[
\frac{\partial}{\partial t}S_t(\phi) = - P_{t,\infty}\frac{\partial}{\partial t} V_t(\phi) + \frac{1}{2}\Delta_{\dot C_t} P_{t,\infty} V_t(\phi) = \frac{1}{2}P_{t,\infty} \left[ \dot C_t(\nabla V_t, \nabla V_t ) \right ](\phi) \ ,
\]
and so,
\[
S \left (\omega \mid \omega^{V_0} \right ) = \left[\int_0^\infty \frac{\partial}{\partial t}S_t(\phi) \right ]_{\phi=0} = \frac{1}{2} \int_0^\infty P_{t,\infty} \left[ \dot C_t(\nabla V_t, \nabla V_t ) \right ](0) \ .
\]
\end{proof}

As second step, we prove the analogue of a \emph{Bochner's formula} for the Polchinski flow.

\begin{proposition}[Bochner's formula]\label{prop:Bochner}
Let the operator $L_t: \mathcal F \to \mathcal F$ be defined by $L_t \equiv\frac{1}{2}\Delta_{\dot C_t}$, and denote $\dot C_t ( \nabla V_t, \nabla V_t) = (\nabla V_t)^2_{\dot C_t}$. Then,
\begin{equation}\label{eq:Bochner}
(\partial_t - L_t)(\nabla V_t)^2_{\dot C_t} = (\nabla V_t)^2_{\ddot C_t} - \norm{{\dot C_t}^{\frac{1}{2}} \Hess V_t {\dot C_t}^{\frac{1}{2}} }^2_2 - 2 \dot C_t\left ( \nabla V_t,  \Hess V_t \dot C_t \nabla V_t\right) \ ,
\end{equation}
where $\norm{ \ \cdot \ }_2$ denotes the $L^2(\Mb)$ Hilbert-Schmidt norm.
\end{proposition}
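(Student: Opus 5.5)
This is a direct computation. Write $G_t \equiv (\nabla V_t)^2_{\dot C_t} = \dot C_t(\nabla V_t,\nabla V_t)$ and expand $\partial_t G_t$ and $L_t G_t$ separately, using the Polchinski equation (Prop.~\ref{prop:polchinski}) to eliminate $\partial_t V_t$.

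\textbf{Time derivative.} Differentiating in $t$ gives
\[
\partial_t G_t = (\nabla V_t)^2_{\ddot C_t} + 2\,\dot C_t(\nabla V_t, \nabla \partial_t V_t) \ .
\]
By Eq.~\eqref{eq:polchinski}, $\nabla\partial_t V_t = \tfrac12 \nabla \Delta_{\dot C_t} V_t - \Hess V_t\, \dot C_t \nabla V_t$. The last term uses the symmetry of $\dot C_t$ and of $\Hess V_t$.

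\textbf{Laplacian term.} Take an orthonormal basis $\{e_k\}$ of $L^2(\Mb)$ and write $L_t = \tfrac12 \sum_k \partial_{\dot C_t^{1/2}e_k}^2$. The Leibniz rule applied twice gives
\[
L_t G_t = \dot C_t(\nabla V_t, \nabla L_t V_t) + \norm{\dot C_t^{1/2}\Hess V_t \dot C_t^{1/2}}_2^2 \ .
\]
This holds provided the third derivatives commute with the trace, so that $\nabla \Tr(\dot C_t \Hess V_t) = \Tr(\dot C_t \nabla\Hess V_t)$.

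\textbf{Combining.} Subtracting the two expressions, the terms $\dot C_t(\nabla V_t,\nabla L_tV_t)$ cancel. Since $2\cdot\tfrac12\nabla\Delta_{\dot C_t}V_t = 2\nabla L_tV_t$, the coefficients match exactly. What remains is
\[
(\nabla V_t)^2_{\ddot C_t} - 2\dot C_t(\nabla V_t,\Hess V_t \dot C_t\nabla V_t) - \norm{\dot C_t^{1/2}\Hess V_t\dot C_t^{1/2}}_2^2 \ ,
\]
which is Eq.~\eqref{eq:Bochner}.

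\textbf{Main obstacle.} The work is in justifying these manipulations in infinite dimensions. Three points need checking.

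\begin{itemize}
\item[(i)] $V_t$ must be $C^3$ in $L^2(\Mb)$ directions, with $\Hess V_t$ regular enough that $\dot C_t^{1/2}\Hess V_t \dot C_t^{1/2}$ is Hilbert--Schmidt. I would get this from $V_t = -\log P_{0,t}e^{-V_0}$ with $V_0\in\mathcal F$. Derivatives pass under the Gaussian integral by dominated convergence, since $F''$ is bounded and $e^{-V_t}$ is bounded below by $P_{0,t}e^{-V_0}>0$, which holds because $V_0$ is bounded above on the relevant support.
\item[(ii)] Mixed derivatives in $t$ and $\phi$ must be interchangeable. I would justify this through the semigroup identity of Prop.~\ref{prop:generators} and continuity of $t\mapsto\dot C_t$.
\item[(iii)] The trace defining $\Delta_{\dot C_t}$ must commute with the gradient. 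I would first prove the identity for finite-rank truncations $\sum_{k\le N}$, where everything is finite-dimensional calculus, and then pass to $N\to\infty$. The limit uses trace-class convergence of $j\dot C_t j^*$ on $\Ba$, together with the Hilbert--Schmidt inclusion $j$, to control the series uniformly.
\end{itemize}
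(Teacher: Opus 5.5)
Your route is the paper's own: differentiate $(\nabla V_t)^2_{\dot C_t}$ in $t$, expand $L_t$ of the square by the Leibniz rule, and let the Polchinski equation cancel the third-derivative terms. However, your displayed identity for $L_tG_t$ is off by a factor of two, and as written the cancellation you claim does not happen. Write $\partial_k$ for the derivative in the direction $\dot C_t^{1/2}e_k$, so that $G_t=\sum_m(\partial_m V_t)^2$. Then
\[
L_tG_t=\tfrac12\sum_k\partial_k^2\sum_m(\partial_mV_t)^2=\sum_{k,m}(\partial_k\partial_mV_t)^2+\sum_m\partial_mV_t\,\partial_m\Bigl(\sum_k\partial_k^2V_t\Bigr) \ .
\]
Since $\sum_k\partial_k^2V_t=\Delta_{\dot C_t}V_t=2L_tV_t$, the correct identity, which is the one the paper uses, is
\[
L_tG_t=\norm{\dot C_t^{1/2}\Hess V_t\,\dot C_t^{1/2}}_2^2+2\,\dot C_t(\nabla V_t,\nabla L_tV_t) \ .
\]
Your $\partial_tG_t$ contains $2\dot C_t(\nabla V_t,\nabla L_tV_t)$, as you correctly note. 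With your coefficient $1$ in $L_tG_t$, the subtraction would therefore leave a spurious $+\dot C_t(\nabla V_t,\nabla L_tV_t)$, so the sentence ``the coefficients match exactly'' contradicts your own display. With the factor $2$ the terms do cancel, and the rest of your computation gives Eq.~\eqref{eq:Bochner}.

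Your points (i)--(iii) go beyond the paper, whose proof is only the formal computation. If you pursue them, two corrections apply to (i). First, $e^{-V_t}=P_{0,t}e^{-V_0}$ by definition, so ``bounded below by $P_{0,t}e^{-V_0}$'' is circular. Positivity follows because $V_0$ is bounded, as part of $V_0\in\mathcal F$, which gives $e^{-V_t}\geq e^{-\sup V_0}$. Second, a bounded $F''$ only lets you pass two derivatives under $\E_{0,t}$ by dominated convergence. The third derivatives needed for $\nabla L_tV_t$ require an integrable bound on $F'''$ or a smoothing argument, and the class $\mathcal F$ does not provide this; the paper simply asserts $V_t\in\mathcal F$. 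Finally, the Hilbert--Schmidt property you require in (i) is automatic: $\Hess V_t=j^*V_t''j$ with $j$ Hilbert--Schmidt, so $\Hess V_t$ is trace class on $L^2(\Mb)$.
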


\begin{proof}
By direct computation,
\[
\partial_t (\nabla V_t)^2_{\dot C_t} = 2 (\nabla V_t, \nabla \partial_t V_t)_{\dot C_t} + (\nabla V_t)^2_{\ddot C_t}  \ .
\]
Moreover, denoting $\nabla_i = \frac{\delta}{\delta \varphi(\varphi_i)}$,
\[
\nabla_i (\nabla V_t)^2_{\dot C_t} = 2 ( \nabla V_t, \nabla_i \nabla V_t)_{\dot C_t} \ , 
\]
and
\[
\nabla_j \nabla_i (\nabla V_t)^2_{\dot C_t} =
2 \left \{  ( \nabla_i \nabla V_t, \nabla_j \nabla V_t)_{\dot C_t} +( \nabla V_t, \nabla_i \nabla_j \nabla V_t)_{\dot C_t} \right \} \ ,
\]
and so
\[
L_t ( \nabla V_t, \nabla V_t)_{\dot C_t} =
\Tr_{L^2(\Mb)}\left \{ \dot C_t \Hess V_t \dot C_t \Hess V_t\right \} + 2(\nabla V_t, \nabla L_t V_t)_{\dot C_t} \ .
\]
Rewriting
\[
\Tr_{L^2(\Mb)}\left \{ \dot C_t \Hess V_t \dot C_t \Hess V_t\right \} = \norm{{\dot C_t}^{\frac{1}{2}} \Hess V_t {\dot C_t}^{\frac{1}{2}} }^2_2 \ ,
\]
and putting the two computations together gives
\[
(\partial_t - L_t)(\nabla V_t)^2_{\dot C_t}
= (\nabla V_t)^2_{\ddot C_t} 
-  \norm{{\dot C_t}^{\frac{1}{2}} \Hess V_t {\dot C_t}^{\frac{1}{2}} }^2_2
+ 2 (\nabla V_t, \nabla \partial_t V_t)_{\dot C_t} - 2(\nabla V_t, \nabla L_t V_t)_{\dot C_t} \ .
\]
The Polchinski equation $\partial_t V_t = L_t V_t - \frac{1}{2}(\nabla V_t)^2_{\dot C_t}$ simplifies the expression into
\[
(\partial_t - L_t)(\nabla V_t)^2_{\dot C_t}
= (\nabla V_t)^2_{\ddot C_t} 
-  \norm{{\dot C_t}^{\frac{1}{2}} \Hess V_t {\dot C_t}^{\frac{1}{2}} }^2_2  
- (\nabla V_t, \nabla (\nabla V_t)^2_{\dot C_t})_{\dot C_t} \ .
\]
It remains to compute the last term. This is
\[
(\nabla V_t, \nabla (\nabla V_t)^2_{\dot C_t})_{\dot C_t} 
= 2 \left ( \nabla V_t,  \Hess V_t \dot C_t \nabla V_t\right)_{\dot C_t} \ ,
\]
so the statement follows.
\end{proof}

Bochner's formula \eqref{eq:Bochner} gives the key tool to find an upper bound on the RG decomposition of the relative entropy, leading to the proof of Theorem \ref{thm:entropy-inequality}.
\begin{corollary}\label{cor:key-estimate}
Assume that 
\[
\dot C_t \Hess V_t \dot C_t- \frac{1}{2}\ddot C_t \geq \dot \lambda_t \dot C_t \ ,
\]
and define $\lambda_t \equiv \int_0^t \dot \lambda_s d s$.
Then,
\[
(\nabla V_t)^2_{\dot C_t} \leq e^{-2\lambda_t}P_{0,t} ((\nabla V_0)^2_{\dot C_{0}}) \ .
\]
\end{corollary}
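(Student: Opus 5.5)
We plan to derive the bound from Bochner's formula, Prop.~\ref{prop:Bochner}, combined with a Duhamel-type comparison along the backward semigroup $P_{s,t}$. Set $g_t \equiv (\nabla V_t)^2_{\dot C_t}$. The first step is to turn Eq.~\eqref{eq:Bochner} into a differential inequality. The Hilbert--Schmidt term $-\norm{\dot C_t^{1/2}\Hess V_t \dot C_t^{1/2}}_2^2$ is nonpositive and can be dropped. The remaining terms combine as
\[
(\nabla V_t)^2_{\ddot C_t} - 2\,\dot C_t(\nabla V_t,\Hess V_t\dot C_t\nabla V_t) = -2\left(\nabla V_t,\left[\dot C_t\Hess V_t\dot C_t-\tfrac12\ddot C_t\right]\nabla V_t\right)_{L^2(\Mb)} ,
\]
and the assumed criterion \eqref{eq:BE-criterion}, applied to the vector $\nabla V_t(\varphi)\in L^2(\Mb)$, bounds this by $-2\dot\lambda_t g_t$. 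Hence
\[
(\partial_t - L_t)\, g_t \leq -2\dot\lambda_t\, g_t \ , \qquad\text{equivalently}\qquad (\partial_t-L_t)\left(e^{2\lambda_t}g_t\right)\leq 0 \ .
\]

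The second step is a parabolic comparison. For fixed $t$ and $0\le s\le t$, consider $h_s \equiv P_{s,t}\left(e^{2\lambda_s}g_s\right)$. By Prop.~\ref{prop:generators}, the $s$-derivative of $P_{s,t}F$ is $-\frac12 P_{s,t}\Delta_{\dot C_s}F = -P_{s,t}L_sF$. Differentiating with the product rule therefore gives
\[
\partial_s h_s = P_{s,t}\left[(\partial_s - L_s)\left(e^{2\lambda_s}g_s\right)\right]\le 0 ,
\]
where the inequality uses that $P_{s,t}$ is positivity preserving, being integration against the Gaussian measure $\p_{s,t}$. Integrating from $0$ to $t$ yields $e^{2\lambda_t}g_t = h_t \le h_0 = P_{0,t}(g_0)$, since $\lambda_0=0$. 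This is precisely the claimed estimate.

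The main obstacle is justifying the differentiation in $s$, since Prop.~\ref{prop:generators} was proved for fixed $F\in\mathcal F$, whereas here $F=F_s$ also depends on $s$. We would first check that $g_s\in\mathcal F$, which requires $V_s$ to be $C^5$ with bounded derivatives. This should follow from $V_0\in\mathcal F$ with enough extra regularity, using that $P_{0,s}$ preserves the regularity of $e^{-V_0}$ and that $e^{-V_s}$ is bounded below on the relevant set. Second, we would split the difference quotient of $h_s$ into $P_{s+r,t}(F_{s+r}-F_s)/r$ plus $(P_{s+r,t}-P_{s,t})F_s/r$. The first piece needs $\partial_sF_s$ to exist uniformly in $\varphi$, and dominated convergence, using the boundedness of derivatives, lets us pass the limit under the Gaussian integral. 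The second piece is handled directly by Prop.~\ref{prop:generators}. If uniform bounds are too delicate, a fallback is to prove the inequality first for smooth cylindrical $V_0$ depending on finitely many $L^2$-directions, where everything reduces to a finite-dimensional heat equation, and then pass to the limit.
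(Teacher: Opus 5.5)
Your proposal is correct and follows the paper's argument. Your $h_s = P_{s,t}(e^{2\lambda_s}g_s)$ is exactly $e^{2\lambda_t}\psi(s)$ for the paper's interpolating function $\psi$, and the paper likewise differentiates it using $\partial_s P_{s,t} = -P_{s,t}L_s$ together with the inequality $(L_t-\partial_t)g_t \geq 2\dot\lambda_t g_t$ derived from Bochner's formula. Your version is more explicit than the paper's on several points it passes over: how the Bakry--\'Emery condition and the dropped Hilbert--Schmidt term give that inequality, where positivity preservation of $P_{s,t}$ is used, and the regularity needed to differentiate $P_{s,t}F_s$ when the integrand depends on $s$.
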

\begin{proof}
 It is a direct consequence of Bochner's formula and the assumption that
\[
(L_t-\partial_t)(\nabla V_t)^2_{\dot C_t} \geq 2 \dot \lambda_t (\nabla V_t)^2_{\dot C_t} \ .
\]
Then, consider the function
\[
s \in [0,\infty) \mapsto \psi(s) = e^{-2\lambda_t + 2\lambda_s} P_{s,t}\left[ (\nabla V_s)^2_{\dot C_{s}}\right] \ .
\]
This satisfies
\[
\psi'(s)=2 \dot \lambda_s \psi(s) + e^{-2\lambda_t + 2\lambda_s}P_{s,t}(\partial_s - L_s)(\nabla V_s)^2_{\dot C_{s}} \ , 
\]
since 
\[
\partial_s P_{s,t} = - P_{s,t} L_s \ .
\]
Then,
\[
\psi'(s) \leq 2 \dot \lambda_s \psi(s) - 2\dot \lambda_s \psi(s) =0\ .
\]
This implies $\psi(t) \leq \psi(0)$, which is the statement.

\end{proof}

Corollary \ref{cor:key-estimate} is now the key ingredient to prove the multiscale Bakry-Émery criterion for the quantum relative entropy.

\begin{proposition}[Probabilistic bound on the relative entropy]
\begin{equation}\label{eq:probabilistic-bound}
S \left (\omega \mid \omega^{V_0} \right ) \leq \frac{1}{2\gamma} \E(\nabla V_0)^2_{\dot C_{0}} \ .
\end{equation}
\end{proposition}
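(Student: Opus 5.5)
The plan is to combine the RG decomposition of Prop.~\ref{prop:entropy-decomposition} with the pointwise gradient bound of Corollary~\ref{cor:key-estimate}, using the semigroup property of Lemma~\ref{lemma:markov-semigroup}. By Prop.~\ref{prop:entropy-decomposition},
\[
S\left(\omega \mid \omega^{V_0}\right) = \frac12\int_0^\infty P_{t,\infty}\left[(\nabla V_t)^2_{\dot C_t}\right](0)\, \dd t \ .
\]
Corollary~\ref{cor:key-estimate} gives, for every $t$ and every $\phi\in\Ba$,
\[
(\nabla V_t)^2_{\dot C_t}(\phi) \leq e^{-2\lambda_t} P_{0,t}\left[(\nabla V_0)^2_{\dot C_0}\right](\phi) \ .
\]
Since $P_{t,\infty}$ is a Markov operator, integration against the Gaussian measure $\p_{t,\infty}$, it preserves pointwise inequalities between nonnegative functions. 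Applying $P_{t,\infty}$ to both sides and then the semigroup identity $P_{t,\infty}P_{0,t}=P_{0,\infty}$ from Lemma~\ref{lemma:markov-semigroup}, we get
\[
P_{t,\infty}\left[(\nabla V_t)^2_{\dot C_t}\right](0) \leq e^{-2\lambda_t}\, P_{0,\infty}\left[(\nabla V_0)^2_{\dot C_0}\right](0) = e^{-2\lambda_t}\,\E\left[(\nabla V_0)^2_{\dot C_0}\right] \ .
\]
The right-hand side factorises into a $t$-dependent weight and a $t$-independent expectation. Integrating in $t$ and using the definition $\gamma^{-1}=\int_0^\infty e^{-2\lambda_t}\,\dd t$ then yields \eqref{eq:probabilistic-bound}.

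Two technical points need care. First, Lemma~\ref{lemma:markov-semigroup} is stated for bounded functions, whereas $(\nabla V_0)^2_{\dot C_0}$ need not be bounded a priori. Since it is nonnegative and measurable, I would extend the semigroup identity to such functions by monotone convergence, truncating at level $M$ and letting $M\to\infty$, with Tonelli replacing Fubini in the convolution argument. The identity and the inequality then hold in $[0,\infty]$, and if $\E(\nabla V_0)^2_{\dot C_0}=\infty$ the bound is trivial. Second, we need $P_{0,\infty}F(0)=\E(F)$ with $\p_{0,\infty}=\p$, which holds because $Q_{0,\infty}=jCj^*$ is the covariance of $\p$.

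The main obstacle is justifying the comparison argument in Corollary~\ref{cor:key-estimate} uniformly up to $t=\infty$. Differentiating $\psi(s)$ requires $(\nabla V_s)^2_{\dot C_s}\in\mathcal F$, so that Prop.~\ref{prop:generators} applies to it. I would secure this from the regularity of $V_t$ inherited from $V_0\in\mathcal F$, together with boundedness of $\ddot C_t$. I would also handle the endpoint $t\to\infty$ in Prop.~\ref{prop:entropy-decomposition} by working on $[0,T]$ and letting $T\to\infty$ with monotone convergence; this is legitimate because the integrand is nonnegative.
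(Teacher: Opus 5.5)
Your proposal is correct and follows the paper's proof essentially verbatim. The route is the same: insert the pointwise bound of Corollary~\ref{cor:key-estimate} into the RG decomposition of Prop.~\ref{prop:entropy-decomposition}, use positivity of $P_{t,\infty}$ and the identity $P_{t,\infty}P_{0,t}=P_{0,\infty}$, and integrate $e^{-2\lambda_t}$ to obtain $1/\gamma$. Your additional remarks (monotone convergence, cutting off at $t\le T$) are extra care the paper omits. The truncation of the integrand is in fact unnecessary: for $V_0\in\mathcal F$, boundedness of $V_0$ and $V_0''$ forces $V_0'$ to be bounded by a Landau-type inequality along each direction, so $(\nabla V_0)^2_{\dot C_0}$ is bounded.
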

\begin{proof}
From Prop. \ref{prop:entropy-decomposition}, the decomposition of the relative entropy along the RG flow gives
\[
S \left (\omega \mid \omega^{V_0} \right ) = \frac{1}{2} \int_0^\infty \dd t P_{t,\infty}[\dot C_t(\nabla V_t, \nabla V_t) ](0) \ .
\]
Corollary \ref{cor:key-estimate} implies
\[
S \left (\omega \mid \omega^{V_0} \right ) \leq \frac{1}{2} \int_0^\infty \dd t e^{-2\lambda_t}P_{t,\infty}P_{0,t} (\nabla V_0)^2_{\dot C_{0}}(0) \ ,
\]
and recalling $P_{t,\infty}P_{0,t} = P_{0,\infty}$, and $P_{0,\infty} F(0) = \E(F)$, we arrive at
\[
S \left (\omega \mid \omega^{V_0} \right ) \leq \frac{1}{2} \int_0^\infty e^{-2\lambda_t} \dd t \E(\nabla V_0)^2_{\dot C_{0}} = \frac{1}{2\gamma} \E(\nabla V_0)^2_{\dot C_{0}} \ .
\]
\end{proof}

\begin{proof}[Proof of Theorem \ref{thm:entropy-inequality}]
The last step in the proof consists of the identification of the RHS in Eq.~\eqref{eq:probabilistic-bound} with its quantum counterpart. Let $K_0 \in L^\infty(Q,\Sigma_0,\mu_C)$ the sharp-time perturbation, which, under the Klein-Landau isomorphism, can be identified with an element affiliated to $\AvN$. For $p, \ q \in C^\infty_c(\mathbb R^{d-1})$, let $f_p = \K(0,p)$ and $g_q = \K(q,0)$ be the two corresponding elements in $\mathfrak h=L^2(\mathbb R^{d-1})$. Then, the symplectic form reduces to
\[
\sigma(f_p,g_q) = -(p,q)_{L^2(\mathbb R^{d-1})} \ .
\]
The generators of the Abelian algebra $\AvN$ are the Weyl operators smeared with functions in $\mathfrak h$ with vanishing initial momentum, so that they are in the form $W_\omega(g_q)$. The automorphism of translations $\beta^s_f$ of Section~\ref{sec:automorphisms-translations} acts non-trivially only for functions $f_p$, giving
\[
\beta_{f_p}^s(W_\omega(g_q)) = e^{is(p,q)_{L^2(\mathbb R^{d-1})} }W_\omega(g_q) \ .
\]
Under the Klein-Landau correspondence, the Weyl operator $W_\omega(g_q)$ is identified with $e^{i\varphi(0,q_q)}$, and $\beta_{f_p}^s$ corresponds to the classical automorphism $T_{f_p}^s$ acting by
\[
T_{f_p}^s e^{i\varphi(0,g_q)} \equiv e^{i\varphi(0,g_q)+i(p,q)_{L^2(\mathbb R^{d-1})}} \ .
\]
Therefore, denoting with $\nabla K$ the $L^2-$derivative of the sharp-time function, we get
\begin{equation}\label{eq:quantum-classical-gradient}
D_{f_p} K = \iota\left( (\nabla K, p)_{L^2(\mathbb R^{d-1})}\right ) \ .
\end{equation}

Now, recalling that $\dot C_0$ is translation invariant in Euclidean time, $V_0 = \int_{-\beta/2}^{\beta/2} U(\tau) K \dd \tau$, differentiation under integral sign yields $\nabla V_0 = \int_{-\beta/2}^{\beta/2} U(\tau) \nabla K \dd \tau$, and so
\[
\E\left [ \dot C_0(\nabla V_0,\nabla V_0) \right ] = \int_{-\beta/2}^{\beta/2}\dd \tau \int_{-\beta/2}^{\beta/2} \dd \rho \E \left [ \left( U(\tau) \nabla K, \dot C_0(\tau-\rho) U(\rho) \nabla K\right)_{L^2(\mathbb R^{d-1})} \right ] \ .
\]
By invariance of the measure under $U(\tau)$, the RHS only depends on $\tau-\rho$, and by periodicity on the thermal circle we get
\[
\E\left [ \dot C_0(\nabla V_0,\nabla V_0) \right ] = \beta \int_{-\beta/2}^{\beta/2}\dd \tau \E \left [ \left(\nabla K, \dot C_0(\tau) U(\tau) \nabla K\right) \right ] \ .
\]
Since $\nabla V_0 \in L^2(\Mb)$, also its restriction to a Cauchy surface $\nabla K_0 \in L^2(\mathbb R^{d-1})$; the contraction on the RHS of the above equation, therefore, is well-defined. To expand the contraction with the scale covariance, let $\{p_j\}_{j\geq 1}$ be an orthonormal basis of $L^2(\mathbb R^{d-1})$, set $f_j \equiv \K(0,p_j)$. From Eq.~\eqref{eq:quantum-classical-gradient} and the correspondence between probability expectations and Green's functions in Eq.~\eqref{eq:n-point-functions-probabilistic-Weyl} gives
\[
\E\left [ \dot C_0(\nabla V_0,\nabla V_0) \right ] = \beta \sum_{i,j}\int_{-\beta/2}^{\beta/2}\dd \tau (p_i ,\dot C_0(\tau) p_j)_{L^2(\mathbb R^{d-1})} G^E_2 \left (D_{f_i}K,D_{f_j}K,\tau,0 \right ) \ .
\]
\end{proof}

\section{Application: Sine-Gordon model up to $\alpha^2 < 4\pi$}

We can now focus our attention on the massive Sine-Gordon model in Lorentzian signature. We show that the model satisfies the multiscale Bakry-Émery criterion, and so the entropy inequality holds in this case. We limit our computations to the UV finite regime $\alpha^2 < 4\pi$, since thermal states for this model in the Lorentzian case have been constructed only in this case. The extension to $\alpha^2 < 6\pi$ can be performed using estimates similar to the ones by Bauerschmidt and Bodineau, and it is left for future work. The proof follows the strategy of Bauerschmidt and Bodineau \cite{BauerschmidtBodineau2020}, with minor modifications to adapt it to our continuum formulation.

We summarise a few known results on the thermal propagator for the free, massive scalar field on $1+1$ dimensional Minkowski spacetime. 

The integral kernel of the thermal covariance in Eq. \eqref{eq:thermal-covariance} can be written in Fourier transform as
\begin{equation}
C_\beta(\tau,x) \equiv \frac{1}{2\pi}\int_0^\infty \dd p \frac{1}{\omega_p}\frac{\cosh\left[ \left ( \frac{\beta}{2}-\tau\right ) \omega_p\right ]}{\sinh \left ( \frac{\beta}{2}\omega_p \right )} \cos(px) \ .
\end{equation}
Here, thanks to translation invariance the thermal propagator $C_\beta(\tau_x-\tau_y,x-y)$ is written as a function of the imaginary time $\tau$ and the position $x$ only. Moreover, $\omega_p \equiv \sqrt{p^2 + m^2}$. It is a symmetric function of $x$ and periodic on $\mathbb S_\beta$, in the sense that $C_\beta(\tau,x) = C_\beta(\beta-\tau, x)$. Moreover, it is positive on $(0,\beta) \times \mathbb R$ \cite[Lemma 3.1]{Bahns2021}.

The thermal propagator admits a representation in terms of the heat kernel on the cylinder $\mathbb S_\beta \times \mathbb R$, which allows to find a natural scale decomposition. Consider the heat kernel on the real line
\[
K_{\mathbb R}(x,t) \equiv \frac{1}{\sqrt{2\pi t}}e^{-\frac{x^2}{2t}} \ ,
\]
and the heat kernel on the cylinder $\mathbb S_\beta = (0,\beta]$,
\[
K_{\mathbb S_\beta}(\tau,t) \equiv \frac{1}{\beta}\sum_{n \in \mathbb Z} e^{i \frac{2\pi}{\beta}n \tau}e^{-\left ( \frac{2\pi}{\beta}n\right )^2 \frac{t}{2}} \ .
\]
Then, the thermal propagator is
\begin{equation}
C_\beta(\tau,x) = \frac{1}{2} \int_0^\infty \dd t e^{-\frac{m^2}{2}t} K_{\mathbb S_\beta}(\tau,t)K_{\mathbb R}(x,t)   \ .
\end{equation}

The natural scale decomposition of the covariance, then, is
\[
C_t(\tau,x) \equiv \frac{1}{2} \int_0^t \dd s e^{-\frac{m^2}{2}s} K_{\mathbb S_\beta}(\tau,s)K_{\mathbb R}(x,s)   \ , \quad \dot C_t(\tau,x) \equiv \frac{1}{2} e^{-\frac{m^2}{2}t} K_{\mathbb S_\beta}(\tau,t)K_{\mathbb R}(x,t) \ .
\]

Consider the Cauchy surface $\Cauchy = \{ (\tau,x) \in \mathbb S_\beta \times \mathbb R \ | \ \tau = 0 \}$, and the family of functions $f_{\epsilon,x} \in C^\infty_c(\Cauchy)$ approximating the Dirac delta on $\Cauchy$, such that $\lim_{\epsilon \to 0} f_{\epsilon,x} = \delta_x$.
We denote $\x = (\tau,x)$, and
\begin{equation}\label{eq:C-dot-epsilon-t}
\Dcet(\x_i,\x_j) \equiv \langle \delta(\cdot - \tau_i) \otimes f_{\epsilon,x_i}, \dot C_t \delta(\cdot - \tau_j) \otimes f_{\epsilon,x_j}  \rangle \ ,
\end{equation}
and
\begin{equation}\label{eq:C-epsilon-t}
\Cet(\x_i,\x_j) \equiv \int_0^t \dd s \dot C_{\epsilon,s}(\x_i,\x_j) \ .
\end{equation}

From Section \ref{sec:Euclidean-free-field}, we recall that this sharp-time covariance is well-defined.
The proof of the multiscale Bakry-Émery criterion for the Sine-Gordon model are based on fundamental estimates on sharp-time $C_t$ and $\dot C_t$, which are easily proven from properties of the heat kernel. We collect them in the next Lemma.

\begin{lemma}\label{lemma:key-estimates}
Consider $\Cet$ and $\dot C_{\epsilon,t}$ as in Eqs. \eqref{eq:C-dot-epsilon-t},\eqref{eq:C-epsilon-t}. Then,
\begin{equation}\label{eq:key-estimates}
\int_s^t \dot C_{\epsilon,r}(0,0) \dd r \leq \frac{1}{4\pi}\log \frac{t}{s}+ c_{\beta,m} \ , \quad \int \dd x \dd u \dot C_{\epsilon,t}(u,x) = \frac{1}{2} e^{-\frac{m^2}{2}t}  \ ,
\end{equation}
with $c_{\beta,m} \equiv \frac{1}{\pi}\sum_{k \geq 1} K_0(m\beta k)$, with $K_0$ the modified Bessel function.
\end{lemma}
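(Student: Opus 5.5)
The plan is to reduce both estimates in Eq.~\eqref{eq:key-estimates} to explicit properties of the heat kernels $K_{\mathbb R}$ and $K_{\mathbb S_\beta}$, using the product form $\dot C_t(\tau,x) = \frac12 e^{-\frac{m^2}{2}t}K_{\mathbb S_\beta}(\tau,t)K_{\mathbb R}(x,t)$. I will assume, as is implicit in the choice of approximants of $\delta_x$, that $f_{\epsilon,x}(y)=f_\epsilon(y-x)$ with $f_\epsilon\geq 0$ and $\int f_\epsilon=1$. Then, by Eq.~\eqref{eq:C-dot-epsilon-t},
\[
\dot C_{\epsilon,t}(\x_i,\x_j)=\int\!\!\int \dd y\,\dd y'\, f_\epsilon(y-x_i)f_\epsilon(y'-x_j)\,\dot C_t(\tau_i-\tau_j,y-y') \ .
\]

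\emph{Second identity.} I integrate over $x$ first. By Fubini, the smearing functions each contribute a factor $\int f_\epsilon=1$, and the spatial heat kernel gives $\int_{\mathbb R}K_{\mathbb R}(x,t)\,\dd x=1$. Integrating over $u\in\mathbb S_\beta$ keeps only the $n=0$ Fourier mode of $K_{\mathbb S_\beta}$, so $\int_0^\beta K_{\mathbb S_\beta}(u,t)\,\dd u=1$. This leaves exactly $\frac12 e^{-\frac{m^2}{2}t}$, with no dependence on $\epsilon$.

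\emph{First estimate.} The steps are as follows.
\begin{enumerate}
\item Since $K_{\mathbb R}(\cdot,t)$ is positive and maximal at $0$, and $f_\epsilon$ is a probability density, I get $\dot C_{\epsilon,r}(0,0)\leq \dot C_r(0,0)$ uniformly in $\epsilon$. Here I also use $K_{\mathbb S_\beta}(0,r)>0$, which follows from the Poisson form in the next step.
\item Poisson summation gives
\[
K_{\mathbb S_\beta}(0,r)=(2\pi r)^{-1/2}\sum_{k\in\mathbb Z}e^{-\beta^2k^2/(2r)} \ ,
\]
and hence
\[
\dot C_r(0,0)=\frac{1}{4\pi r}e^{-\frac{m^2}{2}r}\sum_{k\in\mathbb Z}e^{-\frac{\beta^2k^2}{2r}} \ .
\]
\item For the $k=0$ term, I bound $e^{-m^2r/2}\leq1$, so its integral over $[s,t]$ is at most $\frac1{4\pi}\log\frac ts$.
\item For the terms with $k\neq0$, everything is positive, so I enlarge the range of integration to $(0,\infty)$. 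The standard integral representation
\[
\int_0^\infty \frac{\dd r}{r}\,e^{-ar-b/r}=2K_0\bigl(2\sqrt{ab}\bigr)
\]
with $a=m^2/2$ and $b=\beta^2k^2/2$ gives $2\sqrt{ab}=m\beta k$. Each such term therefore contributes $\frac{1}{2\pi}K_0(m\beta|k|)$. Summing over $\pm k$ gives $\frac1\pi\sum_{k\geq1}K_0(m\beta k)=c_{\beta,m}$, and this series converges because $K_0$ decays exponentially.
\end{enumerate}

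The main obstacle is justifying the manipulations rather than the computation itself. The sharp-time pairing in Eq.~\eqref{eq:C-dot-epsilon-t} must be legitimate, which I take from the sharp-time covariance of Section~\ref{sec:Euclidean-free-field}. The Poisson resummation and the exchange of the sum over $k$ with the $r$-integral are both justified by positivity and monotone convergence.
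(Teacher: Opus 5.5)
Your proposal is correct and follows the paper's proof essentially step for step. You bound the spatial smearing by $\sup_x K_{\mathbb R}(x,r)=(2\pi r)^{-1/2}$, Poisson-resum $K_{\mathbb S_\beta}(0,r)$, drop $e^{-m^2 r/2}$ in the $k=0$ term, and extend the $k\neq 0$ integrals to $(0,\infty)$ to get $\frac{1}{2\pi}K_0(m\beta|k|)$; the $L^1$ identity is the same Fubini argument using that $f_\epsilon$ and both heat kernels are probability densities, and your prefactor $\frac{1}{4\pi r}$ is correct (the paper's intermediate display drops a factor $\frac12$, but its final bound agrees with yours).
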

\begin{proof}

Consider a positive, smooth, and compactly supported function $\rho \in C^\infty_c(\Cauchy)$ with $\supp \rho \subset [-1,1]$ and $\int_\Cauchy \rho =1$, and consider the standard family of Dirac approximants
\[
f_{\epsilon,x}(y)\equiv f_{\epsilon}(y-x) \equiv \frac{1}{\epsilon}\rho\left (\frac{y-x}{\epsilon} \right ) \ .
\]
Naturally, $\supp f_{\epsilon} \subset [-\epsilon,\epsilon]$ and $\int_\Sigma f_{\epsilon}(x) = 1$.
Then, by definition
\[
\Dcet(0,0)=\frac{1}{2} e^{-\frac{m^2}{2}t}K_{\mathbb S_{\beta}}(0,t) \int_{\mathbb R^{\otimes 2}}  f_{\epsilon}(x)f_{\epsilon}(x')K_{\mathbb R}(x-x',t) \ .
\]
Since $f_\epsilon \geq 0$ and has total mass one, we have the elementary bound
\[
\Dcet(0,0) \leq \frac{1}{2} e^{-\frac{m^2}{2}t}K_{\mathbb S_{\beta}}(0,t) \sup_{x \in \mathbb R} K_{\mathbb R}(x,t) = \frac{e^{-\frac{m^2}{2}t}}{\sqrt{2\pi t}}K_{\mathbb S_{\beta}}(0,t) \ .
\]

To prove the inequality for the scale integral of $\Dcet$, consider the heat kernel on the thermal circle; 
by Poisson summation formula this is
\[
K_{\mathbb S_\beta}(\tau,t) = \frac{1}{\sqrt{2\pi t}}\sum_{k \in \mathbb Z} e^{-\frac{(\tau-\beta k)^2}{2t}} \ .
\]
Then, the regularised covariance in the coincidence limit becomes bounded by
\[
\int_s^t \dot C_{\epsilon,r}(0,0) \dd r \leq \frac{1}{4\pi}\int_s^t \dd r  \frac{e^{-m^2 \frac{r}{2}}}{r} \left [ 1 + \sum_{k \in \mathbb Z \setminus \{ 0 \}} e^{-\frac{\beta^2 k^2}{2r}} \right ] \ .
\]
Consider the first term: this gives
\[
I_{s,t} \equiv   \frac{1}{4\pi}\int_s^t \dd r  \frac{e^{-m^2 \frac{r}{2}}}{ r}
 \leq \frac{1}{4\pi}\log \frac{t}{s} \ , 
\]
using $e^{-x}\leq 1$ for positive $x$. The remainder terms can be estimated using 
\[
\frac{1}{4\pi}\int_0^\infty \frac{\dd r}{r} e^{-\frac{m^2}{2}r-\frac{\beta^2k^2}{2r}} = \frac{1}{2\pi}K_0(m \beta k) \ ,
\]
where $K_0$ is the modified Bessel function. Since $K_0(x) \sim \sqrt{\frac{\pi}{2x}}e^{-x}$ for $x \to \infty$, the remainder term is bounded by the convergent series
\[
R_{s,t} \equiv \frac{1}{4\pi}\sum_{k \neq 0} \int_s^t \frac{\dd r}{r} e^{-m^2 \frac{r}{2}} e^{-\frac{\beta^2 k^2}{2r}} 
\leq \frac{1}{\pi} \sum_{k \geq 1} K_0(m \beta k)  \ ,
\]
From which the first inequality follows.

The $L^1-$estimate on $\Dcet$ follows directly from the definitions. Indeed, since the heat kernels define probability densities, by translation invariance
\[
\int \dd \tau \int \dd x \Dcet(u,x) = \frac{1}{2} e^{-\frac{m^2}{2}t}\left (\int_{\mathbb R} f_{\epsilon}(x)\right)^2 \int_{\mathbb R} K_{\mathbb R}(x,t) \int_{\mathbb S_{\beta}}K_{\mathbb S_\beta}(u,t) = \frac{1}{2} e^{-\frac{m^2}{2}t} \ ,
\]
and so we get the second equality.

\end{proof}

\subsection{Sine-Gordon Hamiltonian}\label{sec:sine-gordon}

The Sine-Gordon Hamiltonian is formally written in terms of $\cos(\varphi(t,x))$, with the pointwise evaluation of the random field, formally smeared with the Dirac delta. Of course, this is not a measurable function in $\Ba$ due to the well-known short-distance divergences. Therefore, we introduce a regularised Sine-Gordon Hamiltonian depending on a regularisation parameter $\epsilon$ and a normal-ordering prescription; we will then prove the Bakry-\Emery criterion uniformly in $\epsilon$. Similarly, the Sine-Gordon Hamiltonian is initially defined with support on a compact region in $\Sigma$, and we will prove our estimates uniformly in the volume of the support.

Consider again the family of Dirac approximants $f_{\epsilon, x} \in C^\infty_c(\Sigma)$; we denote $\varphi_{\epsilon}(0,x) \equiv \varphi(0,f_{\epsilon,x}) = \int \dd \Sigma \varphi(0,y) f_{\epsilon,x}(y)$. 

\begin{definition}[Regularised vertex operators]
Fix $T>0$. The \emph{regularised vertex operators} are a family of functionals $\varphi \mapsto V_{\epsilon,\alpha}(\varphi)$, depending on a parameter $\alpha \in \mathbb R$ and on a regularised parameter $\epsilon >0$, defined in terms of a \emph{normal-ordering} prescription by
\[
V_{\epsilon,\alpha}(\varphi) \equiv : e^{i\alpha \varphi_\epsilon(0,x)}: \equiv e^{\frac{\alpha^2}{2}C_{\epsilon,T}(0,0)} e^{i\alpha\varphi_\epsilon(0,x)} \ ,
\]

\end{definition}

In turn, the regularised vertex operators define the Sine-Gordon Hamiltonian.
\begin{definition}[Sine-Gordon Hamiltonian]\label{def:sine-gordon-hamiltonian}
Let $\alpha \in \mathbb R$ and $z \in \mathbb R$ be two real parameters defining the theory and let $\epsilon >0$ be a UV regularisation parameter. Let $g \in C^\infty_c(\Sigma)$, $0\leq g(x) \leq 1$ for every $x\in \mathbb R$. The \emph{Sine-Gordon Hamiltonian} is
\[
K_{0}(\varphi)  \equiv z \int_{\mathbb R} \dd x g(x) :\cos(\alpha \varphi_\epsilon(0,x)) : =  z e^{\frac{\alpha^2}{2}C_{\epsilon,T}(0,0)}\int_{\mathbb R} \dd x g(x) \cos(\alpha \varphi_\epsilon(0,x))   \ ,
\]
and the \emph{Sine-Gordon interaction} is
\[
V_{0}(\varphi) \equiv \int_0^\beta \dd \tau U(\tau)(K_0) = z e^{\frac{\alpha^2}{2}C_{\epsilon,T}(0,0)}  \int_0^\beta \dd \tau \int_{\mathbb R} \dd x g(x) \cos\left ( \alpha \varphi_\epsilon(\tau,x) \right  ) \ .
\]
\end{definition}
$K_0$ by definition is an interaction in $\AvN$, since it is bounded and constructed from field operators restricted to a Cauchy surface, which are commutative due to causality.

The following elementary lemma shows that the Sine-Gordon Hamiltonian satisfies the conditions in Prop. \ref{prop:FKN-kernels}, and thus defines a Feynman-Kac-Nelson kernel.
\begin{lemma}
Consider the Sine-Gordon Hamiltonian $K_0$ in Def. \ref{def:sine-gordon-hamiltonian}. Then, $K_0 \in L^\infty(\Ba, \mu_C)$ and $e^{-\frac{\beta}{2}K_0} \in L^\infty(\Ba,\mu_C)$.
\end{lemma}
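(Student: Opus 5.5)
The plan is to bound $K_0$ pointwise on $\Ba$ by a deterministic constant. Both claims then follow at once: $\mu_C$ is a probability measure, so bounded measurable functions are in $L^\infty$, and the exponential of a bounded function is bounded.

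First I will check that $K_0$ is a well-defined measurable function on $\Ba$. For each $x$, the sharp-time smeared field $\varphi_\epsilon(0,x)=\varphi(0,f_{\epsilon,x})$ is defined $\mu_C$-almost surely as an $L^2$-limit of the $\varphi(\delta_k\otimes f_{\epsilon,x})$. This follows from the construction of sharp-time fields in Section~\ref{sec:Euclidean-free-field}, using that the sharp-time covariance $C_0(0,0,f_{\epsilon,x},f_{\epsilon,x})$ is finite. Next, the map $x\mapsto\varphi_\epsilon(0,x)$ is jointly measurable in $(x,\varphi)$; I would obtain this from continuity in $x$ of $f_{\epsilon,x}$ in $L^2(\mathbb R)$, together with the $L^2(\mu_C)$ continuity of $x\mapsto \varphi_\epsilon(0,x)$, choosing a measurable version. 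Hence $x\mapsto g(x)\cos(\alpha\varphi_\epsilon(0,x))$ is jointly measurable, and by Fubini the integral defining $K_0$ is a $\Sigma_0$-measurable function.

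The bound itself is elementary. Since $\abs{\cos}\le 1$ and $0\le g\le 1$ with $g$ compactly supported,
\[
\abs{K_0(\varphi)}\le \abs z\, e^{\frac{\alpha^2}{2}C_{\epsilon,T}(0,0)}\int_{\mathbb R} g(x)\dd x \le \abs z\, e^{\frac{\alpha^2}{2}C_{\epsilon,T}(0,0)}\abs{\supp g}\equiv M<\infty .
\]
The constant $C_{\epsilon,T}(0,0)$ is finite for fixed $\epsilon>0$ and $T$. I would justify this with Lemma~\ref{lemma:key-estimates}: writing $C_{\epsilon,T}(0,0)=\int_0^T \dot C_{\epsilon,r}(0,0)\dd r$, the pointwise bound $\dot C_{\epsilon,r}(0,0)\le \frac12 e^{-m^2r/2}K_{\mathbb S_\beta}(0,r)\int f_\epsilon f_\epsilon K_{\mathbb R}$ stays integrable as $r\to0$. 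The reason is that for $\epsilon>0$ the smearing gives $\int f_\epsilon(x)f_\epsilon(x')K_{\mathbb R}(x-x',r)\le \norm{f_\epsilon}_\infty$, while $K_{\mathbb S_\beta}(0,r)\sim (2\pi r)^{-1/2}$ is integrable at $0$. Alternatively, $C_{\epsilon,T}(0,0)\le C_\epsilon(0,0)=C_0(0,0,f_\epsilon,f_\epsilon)<\infty$, since $\dot C_r\ge0$ and $C_\infty=C$. So $K_0\in L^\infty(\Ba,\mu_C)$ with $\norm{K_0}_\infty\le M$. Then $0<e^{-\frac\beta2K_0}\le e^{\beta M/2}$ everywhere, which gives $e^{-\frac\beta2 K_0}\in L^\infty$, and likewise $e^{-\beta K_0}$. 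Consequently all integrability hypotheses of Prop.~\ref{prop:FKN-kernels} hold, including the $L^p$–$L^q$ condition with $p=q=\infty$ allowed.

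The only step needing care is the measurability and almost-sure definition of the sharp-time smeared field as a function on $\Ba$, as opposed to on $\mathscr D'(\Mb)$. The boundedness estimate itself is immediate, and I expect the paper treats it as such, noting the bound depends on $\epsilon$ and $\supp g$.
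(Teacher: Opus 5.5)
Your proof is correct and follows the paper's argument: a pointwise bound from $\abs{\cos}\le 1$ and $0\le g\le 1$, together with finiteness of $C_{\epsilon,T}(0,0)$ for fixed $\epsilon>0$. You add a welcome point of care that the paper omits when it asserts the bound for all $\varphi\in\Ba$, namely that $\varphi_\epsilon(0,x)$ is only defined $\mu_C$-almost surely; one small slip in your closing remark is that $p=q=\infty$ does not satisfy $\frac1p+\frac1q=\frac12$, but since $\mu_C$ is a probability measure you can simply take $p=q=4$.
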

\begin{proof}
Since the cosine is pointwise bounded and thanks to the $\epsilon-$dependent smearing, we immediately have that $C_{\epsilon,T}(0,0)$ is finite, and so
\[
\abs{K_0} \leq \abs{z e^{\frac{\alpha^2}{2}C_{\epsilon,T}(0,0)}} \norm{g}_{L^1(\mathbb R)} \equiv M_\epsilon \ \forall \varphi \in \Ba \ ,
\]
and so $K_0 \in L^\infty(\Ba, \mu_C)$. Similarly,
\[
\norm{e^{-\frac{\beta}{2}K_0}}_{L^\infty(\Ba)} \leq e^{\frac{\beta}{2}M_\epsilon} \ ,
\]
and so $e^{-\frac{\beta}{2}K_0} \in L^\infty(\Ba,\mu_C)$.
\end{proof}

\subsection{Polchinski equation for the Sine-Gordon model}
We are now ready to formulate the Polchinski equation for the Sine-Gordon model. This gives the main ingredient in the proof of the Theorem \ref{thm:entropy-inequality-sine-gordon}.

We follow Brydges and Kennedy's approach \cite{Brydges1987}, as developed by Bauerschmidt and Bodineau \cite{BauerschmidtBodineau2020}. This consists of introducing an integral equation, and expanding its solution in series to derive useful bounds. Proving that the series converges uniformly implies that solutions of the integral equation are also solutions of the Polchinski equation.

Introduce the charges $\sigma_i \in \{-1,+1\}$, denote $\x \equiv (\tau,x) \in \mathbb S_\beta \times \mathbb R$ and $\xi \equiv (\tau,x,\sigma) \in \mathbb M_{\beta,c} \equiv \mathbb S_\beta \times \mathbb R \times \{-1,1\}$, and further denote 
\[
\dd \mu_{\beta,c}(\xi) \equiv g(x) \dd \tau \dd x \ \times \ \text{the counting measure on $\{-1,1\}$} \ .
\]
Now, consider a functional $V : \varphi \in \Ba \mapsto V(\varphi)$; its Fourier representation is
\begin{gather}
\label{eq:Fourier-expansion}
V(\varphi) \equiv \sum_{n=0}^\infty \Vn(\varphi) \ , \\
\Vn(\varphi) \equiv \frac{1}{n!}\int_{\mathbb M_{\beta,c}^{\otimes n}} \dd \mu_{\beta,c}(\xi_1) \ldots \dd \mu_{\beta,c}(\xi_n) \FVn(\xi_1,\ldots,\xi_n) e^{i\alpha \sum_{k=1}^n\sigma_k \varphi_\epsilon(\x_k)} \ ,
\end{gather}
with $V^{(0)} \in \mathbb R$ a constant term,
and $\FVn : \mathbb M_{\beta,c}^n \to \mathbb R$. Then, the Sine-Gordon interaction corresponds to
\begin{equation}\label{eq:SG-initial-conditions}
V_0^{(0)}=0 \ , \quad \hat V_0^{(1)}(\xi_1) = \frac{z}{2}e^{\frac{\alpha^2}{2}C_{\epsilon,T}(0,0)} \ , \quad \hat V_0^{(n\geq 2)} = 0 \ .
\end{equation}

Furthermore, we define, for all $f^{(n)} : \mathbb M_{\beta,c}^n \to \mathbb R$ and $[n] = \{ 1, \ldots, n \}$,
\begin{gather}
\dot u_{\epsilon,t}(\xi_i,\xi_j) \equiv \alpha^2 \sigma_i \sigma_j \Dcet(\x_i,\x_j) \ ,  \\
\dot w_{\epsilon,t}(\xi_1, \ldots, \xi_n) \equiv \frac{1}{2}\sum_{k, l \in [n]} \dot u_{\epsilon,t}(\xi_k,\xi_l) \ , \quad w_{\epsilon,t} \equiv \int_0^t \dot w_{\epsilon,s} \ , \\
\norm{f^{(1)}} \equiv \sup_{\xi_1}\abs{f(\xi_1)} \ , \\
\norm{f^{(n)}} \equiv \sup_{\xi_1} \int \dd \mu_{\beta,c}(\xi_2)\ldots \dd \mu_{\beta,c}(\xi_n) \abs{f^{(n)}(\xi_1,\ldots,\xi_n)} \ .
\end{gather}

\begin{lemma} \emph{\cite[Lemma 3.4]{BauerschmidtBodineau2020}}
Given two non-empty subsets $I_1, \ I_2 \subset [n]$, we denote $I_1 \dot \cup I_2 = [n]$ the disjoint union of $I_1$ and $I_2$, with $I_1 \cup I_2 = [n]$. Moreover, if $I = \{ i_1,\ldots,i_k\} \subset [n]$, then we denote $\xi_I = (\xi_{i_1},\ldots, \xi_{i_k})$.

Then, the integral equation
\begin{multline}\label{eq:polchinski-integral}
\FVn_t(\xi_1,\ldots, \xi_n) = e^{-w_{\epsilon,t}(\xi_1,\ldots,\xi_n)} \FVn_0(\xi_1,\ldots,\xi_n) \\
+ \frac{1}{2}\int_0^t \dd s \sum_{I_1 \dot \cup I_2 = [n] } \sum_{i\in I_1,j \in I_2}  \dot u_{\epsilon,s}(\x_i,\x_j) \hat V_s^{(\abs{I_1})}(\xi_{I_1})\hat V_s^{(\abs{I_2})}(\xi_{I_2}) e^{-(w_{\epsilon,t} - w_{\epsilon,s})(\xi_1,\ldots,\xi_n)} 
\end{multline}
has a unique solution $\FVn_t$ for all $n$ and $t$. Moreover, if $V_t$, defined in terms of $\hat V_t$ in Eq. \eqref{eq:Fourier-expansion}, converges absolutely, locally uniformly in $t>0$, then $V_t$ is given by the convolution solution of the Polchinski equation in Def.~\ref{def:effective interaction}.
\end{lemma}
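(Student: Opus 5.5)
The plan is to treat the statement in two parts, following Brydges--Kennedy and \cite[Lemma 3.4]{BauerschmidtBodineau2020}. The first part, existence and uniqueness of $\FVn_t$, is an induction on $n$. The second part identifies the resulting series with the convolution solution of Definition~\ref{def:effective interaction}, and it goes through the Polchinski equation \eqref{eq:polchinski}.

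For existence and uniqueness, the key structural observation is that in \eqref{eq:polchinski-integral} the right-hand side for index $n$ involves only $\hat V^{(k)}_s$ with $k=\abs{I_1},\abs{I_2}<n$, since both $I_1$ and $I_2$ are non-empty. For $n=1$ the quadratic sum is empty, so $\hat V^{(1)}_t=e^{-w_{\epsilon,t}}\hat V^{(1)}_0$ is explicit. Assuming all $\hat V^{(k)}_s$ with $k<n$ are uniquely determined and measurable and locally bounded in $s$, the right-hand side for $n$ is an explicit integral. It is well defined because, at fixed $\epsilon>0$, $\Dcet$ is bounded and continuous in $t$ on $(0,\infty)$. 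Moreover $\dot w_{\epsilon,s}\geq 0$ is a positive-definite quadratic form in the charges, so $0<e^{-(w_{\epsilon,t}-w_{\epsilon,s})}\leq1$. This defines $\FVn_t$ uniquely; no fixed-point argument is needed.

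For the identification, I would first show that each term of \eqref{eq:Fourier-expansion} solves the Polchinski equation mode by mode. On a single exponential $e^{i\alpha\sum_k\sigma_k\varphi_\epsilon(\x_k)}$, the functional Laplacian $\Delta_{\dot C_t}$ acts as multiplication by $-2\dot w_{\epsilon,t}(\xi_1,\dots,\xi_n)$. The gradient $\nabla$ produces $i\alpha\sum_k\sigma_k\, f_{\epsilon,x_k}\otimes\delta_{\tau_k}$. Consequently $\dot C_t(\nabla V^{(I_1)},\nabla V^{(I_2)})$ is the product of the two exponentials times $-\sum_{i\in I_1,j\in I_2}\dot u_{\epsilon,t}(\xi_i,\xi_j)$. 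Differentiating \eqref{eq:polchinski-integral} in $t$ gives $\partial_t\FVn_t=-\dot w_{\epsilon,t}\FVn_t+\frac12\sum\dot u\,\hat V\hat V$. After symmetrising the labels, this is exactly the coefficient of the $n$-th order term in $\frac12\Delta_{\dot C_t}V_t-\frac12\dot C_t(\nabla V_t,\nabla V_t)$.

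Absolute, locally uniform convergence, together with the analogous convergence of the termwise first and second derivatives, allows differentiation under the sum. For the derivatives I would use that $\abs{\Dcet}$ is bounded at fixed $\epsilon$, so the extra factors are at most polynomial in $n$; I would build this into the convergence hypothesis if necessary. It follows that $V_t$ is a classical solution with $V_0$ given by \eqref{eq:SG-initial-conditions}.

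Next, I would set $Z_t=e^{-V_t}$ and show it satisfies \eqref{eq:Z-t-equation}. To conclude $Z_t=P_{0,t}e^{-V_0}$, I would compare with $s\mapsto P_{s,t}Z_s$. By Proposition~\ref{prop:generators}, its $s$-derivative is $-P_{s,t}\frac12\Delta_{\dot C_s}Z_s+P_{s,t}\partial_sZ_s=0$, so $Z_t=P_{t,t}Z_t=P_{0,t}Z_0$. This requires $Z_s\in\mathcal F$. The functions involved are bounded trigonometric series in the finitely many smeared fields $\varphi_\epsilon$, which are smooth cylinder-type functions on $\Ba$, so membership in $\mathcal F$ reduces to the convergence bounds.

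The main obstacle is this last regularity and justification step: upgrading the hypothesis ``$V_t$ converges absolutely, locally uniformly'' to enough control on derivatives and on $e^{-V_t}$ to apply Proposition~\ref{prop:generators} and exchange $\partial_s$ with $P_{s,t}$. A further delicate point is the behaviour as $s\downarrow 0$, where continuity at $t=0$ must be checked separately because $\Dcet$ degenerates. There I would use that $w_{\epsilon,t}\to0$ and that the Gaussian measures $\p_{0,t}$ tend to $\delta_0$, as in Lemma~\ref{lemma:markov-semigroup}.
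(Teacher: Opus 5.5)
Your proposal follows the paper's proof. Existence and uniqueness come by induction on $n$, because only orders $\abs{I_1},\abs{I_2}<n$ enter \eqref{eq:polchinski-integral}; termwise differentiation of the Fourier series then gives the Polchinski equation, assuming convergence of the series and of its derivatives, which the paper also tacitly adds to the hypothesis; and $V_t$ is identified with $-\log P_{0,t}e^{-V_0}$ by uniqueness. The one difference is in that last step: the paper just invokes uniqueness of bounded solutions (cf.\ the remark citing Cerrai--Lunardi), while you prove it via $\partial_s(P_{s,t}Z_s)=0$ and Proposition~\ref{prop:generators}. That is more self-contained, but it rests on the same unverified regularity $Z_s\in\mathcal F$ that the paper leaves implicit. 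This regularity is not a cylinder-function fact as you suggest: $V_t$ integrates over a continuum of sharp-time fields $\varphi(\delta_\tau\otimes f_{\epsilon,x})$, and these are not continuous functionals on $\Ba$.
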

\begin{proof}
The details of the proof can be found in Ref.~ \cite{BauerschmidtBodineau2020},  where the different underlying manifold plays no role. Here we summarise the proof. First observe that, for $n\leq 1$, the unique solution to Eq. \eqref{eq:polchinski-integral} is
\begin{equation}\label{eq:fourier-n-1}
\hat V_t^{(0)} = \hat V_0^{(0)} \ , \quad \hat V_t^{(1)}(\xi_1) = e^{-\frac{\alpha^2}{2} \Cet(0,0) }\hat V_0^{(1)}(\xi_1) \ .
\end{equation}
For $n>1$, $V_t^{(n)}$ is determined by the integral equation in terms of $\hat V_t^{(k)}$ with $k<n$, so that, by induction, Eq. \eqref{eq:polchinski-integral} has a unique solution for all $n$ and $t$. If $\hat V_t$ are such that the series for $V_t$ in Eq. \eqref{eq:Fourier-expansion} and its derivatives converge absolutely, then $V_t$ is a smooth function, and the integral equation \eqref{eq:polchinski-integral} implies the Polchinski equation \eqref{eq:polchinski}. By uniqueness of the bounded solution of the Polchinski equation we get that $V_t$ coincides with the convolution solution in Def. \ref{def:effective interaction}.
\end{proof}

The last result needed to prove that the Sine-Gordon model satisfies the Bakry-\Emery criterion is an estimate on the norm of $\hat V_n$, originally due to Brydges and Kennedy \cite{Brydges1987} and adapted to the context of LSI by Bauerschmidt and Bodineau \cite{BauerschmidtBodineau2020}.

\begin{proposition}\label{prop:fourier-estimates}\emph{\cite[Proposition 3.5]{BauerschmidtBodineau2020}}
Let
\[
z_{\epsilon,t} \equiv e^{-\frac{\alpha^2}{2}\left(C_{\epsilon,t}-C_{\epsilon,T}\right)(0,0)}z \ .
\]
For $n\geq 1$, the solution to Eq. \eqref{eq:polchinski-integral} with the Sine-Gordon initial conditions \eqref{eq:SG-initial-conditions} satisfies
\begin{equation}\label{eq:BK-bound}
\norm{\FVn_t} \leq n^{n-2} \abs{z_{\epsilon,t}}^n M_{\epsilon,t}^{n-1} \ , \quad M_{\epsilon,t} \equiv \int_0^t \dd s \norm{\dot u_{\epsilon,s}}e^{\alpha^2(C_{\epsilon,t}-C_{\epsilon,s})(0,0)} \ .
\end{equation}
In particular, if $\abs{z_{\epsilon,t} }M_{\epsilon,t} < 1/e$, the Fourier series for $V_t$ converges to the convolution solution of the Polchinski equation.
\end{proposition}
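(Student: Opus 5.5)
We will prove the Brydges--Kennedy bound \eqref{eq:BK-bound} by induction on $n$, using the integral equation \eqref{eq:polchinski-integral} together with the norm $\norm{\cdot}$ defined above (sup over the first variable, integral over the others against $\dd\mu_{\beta,c}$). The base case $n=1$ is explicit from Eq.~\eqref{eq:fourier-n-1}. Indeed, $\abs{\hat V_t^{(1)}} = e^{-\frac{\alpha^2}{2}C_{\epsilon,t}(0,0)}\frac{\abs z}{2}e^{\frac{\alpha^2}{2}C_{\epsilon,T}(0,0)} \le \abs{z_{\epsilon,t}}$, which is \eqref{eq:BK-bound} with $n^{n-2}M^{0}=1$.

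For $n\ge2$, the first term in \eqref{eq:polchinski-integral} vanishes, since $\hat V_0^{(n)}=0$. For the second term, the key positivity input is that $w_{\epsilon,t}-w_{\epsilon,s}$ is a quadratic form of the positive-definite kernel $\alpha^2(C_{\epsilon,t}-C_{\epsilon,s})$ in the charges. This gives the lower bound
\[
(w_{\epsilon,t}-w_{\epsilon,s})(\xi_1,\ldots,\xi_n) \ge (w_{\epsilon,t}-w_{\epsilon,s})(\xi_{I_1}) + (w_{\epsilon,t}-w_{\epsilon,s})(\xi_{I_2}) - \alpha^2\abs{I_1}\abs{I_2}\,(C_{\epsilon,t}-C_{\epsilon,s})(0,0)\,\text{(crudely)}.
\]
A cruder Brydges--Kennedy trick gives more directly $w\ge -\frac{\alpha^2}{2}n(C_{\epsilon,t}-C_{\epsilon,s})(0,0)$. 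This follows from the diagonal term $\frac{\alpha^2}{2}\sum_k C(\x_k,\x_k)$ and from $C(\x,\x)=C(0,0)$, which holds by translation invariance. Following \cite{BauerschmidtBodineau2020}, I would use the positive-definiteness bound
\[
e^{-(w_{\epsilon,t}-w_{\epsilon,s})(\xi_{[n]})} \le e^{\frac{\alpha^2}{2} n (C_{\epsilon,t}-C_{\epsilon,s})(0,0)}\cdot e^{-\ldots}
\]
in the precise form that converts $z_{\epsilon,s}^{\abs{I_1}}z_{\epsilon,s}^{\abs{I_2}}$ into $z_{\epsilon,t}^{n}$ at the cost of a factor $e^{\alpha^2(C_{\epsilon,t}-C_{\epsilon,s})(0,0)}$. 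Here I use that $z_{\epsilon,t}=z_{\epsilon,s}e^{-\frac{\alpha^2}{2}(C_{\epsilon,t}-C_{\epsilon,s})(0,0)}$. This extra factor is exactly the weight in $M_{\epsilon,t}$.

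Next I estimate the norm. Fix $\xi_1$, which lies in $I_1$ say. I integrate over the remaining variables and use $\abs{\dot u_{\epsilon,s}(\xi_i,\xi_j)}\le \norm{\dot u_{\epsilon,s}}$, where the latter is the sup over $\xi_i$ of the integral in $\xi_j$. By Lemma~\ref{lemma:key-estimates} and $0\le g\le1$, this is bounded by $\alpha^2\cdot 2\cdot\frac12 e^{-m^2 t/2}$, uniformly in $\epsilon$ and $\supp g$. The integral then factorizes: $\norm{\hat V_s^{(\abs{I_1})}}$ is evaluated at the pinned $\xi_1$, the link $\dot u$ carries the integral to the pinned $\xi_j\in I_2$, and $\norm{\hat V_s^{(\abs{I_2})}}$ is evaluated at $\xi_j$. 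Summing over $i\in I_1$, $j\in I_2$ gives a factor $\abs{I_1}\abs{I_2}$. Inserting the inductive hypothesis, I need the combinatorial identity
\[
\frac12\sum_{I_1\dot\cup I_2=[n]}\abs{I_1}\abs{I_2}\,\abs{I_1}^{\abs{I_1}-2}\abs{I_2}^{\abs{I_2}-2} = (n-1)n^{n-2}\le \ldots,
\]
which is Cayley's tree-counting recursion. I also need $\int_0^t \dd s\,\norm{\dot u_{\epsilon,s}}e^{\alpha^2(C_{\epsilon,t}-C_{\epsilon,s})(0,0)}M_{\epsilon,s}^{n-2}\le M_{\epsilon,t}^{n-1}/(n-1)$. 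This follows because $M_{\epsilon,s}e^{\alpha^2(C_{\epsilon,t}-C_{\epsilon,s})(0,0)}\le M_{\epsilon,t}$ and because the integrand is essentially $\partial_s$ of $M^{n-1}/(n-1)$ after absorbing weights. Together these close the induction.

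The main obstacle is the exponential bookkeeping in the second step, namely showing that $e^{-(w_t-w_s)}$ on $n$ points, combined with $z_{\epsilon,s}^{n}$, yields at most $z_{\epsilon,t}^n e^{\alpha^2(C_t-C_s)(0,0)}$. I would handle it as Brydges--Kennedy do, using $w(\xi_{[n]})-w(\xi_{I_1})-w(\xi_{I_2})=\sum_{i\in I_1,j\in I_2}u$ together with positivity of the sharp-time covariance on the thermal cylinder (\cite[Lemma 3.1]{Bahns2021}), and arguing via an interpolation that the charge-neutral worst case costs a single factor $e^{\alpha^2(C_t-C_s)(0,0)}$. Finally, I sum the bound over $n$. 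Since $n^{n-2}x^{n-1}/n!$ is summable for $x<1/e$ by Stirling, the series for $V_t$ converges absolutely, and its derivatives converge as well, each derivative contributing a factor $\alpha n$. By the preceding Lemma, it is then the convolution solution.
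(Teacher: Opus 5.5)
Your overall architecture is the one the paper uses: induction on $n$, the first term vanishing for $n\ge2$, the tree-type factorisation of $\norm{\cdot}$ through $\norm{\dot u_{\epsilon,s}}$, and $\sum_{k=1}^{n-1}\binom nk k^{k-1}(n-k)^{n-k-1}=2(n-1)n^{n-2}$. However, the step you flag as the main obstacle rests on a false claim. You want, pointwise,
\[
e^{-(w_{\epsilon,t}-w_{\epsilon,s})(\xi_1,\ldots,\xi_n)}\abs{z_{\epsilon,s}}^n\le\abs{z_{\epsilon,t}}^n e^{\alpha^2 D(0,0)} \ , \qquad D\equiv C_{\epsilon,t}-C_{\epsilon,s} \ .
\]
That is, you want the interaction factor to cancel all but one of the $n/2$ factors $e^{\alpha^2D(0,0)}$ produced by $\abs{z_{\epsilon,s}}^n=\abs{z_{\epsilon,t}}^ne^{\frac n2\alpha^2D(0,0)}$. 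But $(w_{\epsilon,t}-w_{\epsilon,s})(\xi_1,\ldots,\xi_n)=\frac{\alpha^2}{2}\sum_{k,l}\sigma_k\sigma_lD(\x_k,\x_l)$. Take a neutral configuration, e.g.\ $n=4$ with charges $+,+,-,-$, and put all points close together. Then this exponent is $\approx\frac{\alpha^2}{2}(\sum_k\sigma_k)^2D(0,0)=0$, so the left side is $\approx\abs{z_{\epsilon,t}}^4e^{2\alpha^2D(0,0)}$, which exceeds your right side on an open set of configurations. The neutral worst case therefore costs $n/2$ factors, not one, and no interpolation can beat the plain positivity bound $w_{\epsilon,t}-w_{\epsilon,s}\ge0$. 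Separately, your ``diagonal term'' justification of $w\ge-\frac{\alpha^2}{2}nD(0,0)$ does not work as written. Positive-definiteness gives $w_{\epsilon,t}-w_{\epsilon,s}\ge0$ directly, and that is all that is needed.

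The paper does not try to cancel the surplus against $e^{-(w_{\epsilon,t}-w_{\epsilon,s})}$. It bounds that exponential by $1$ and absorbs the whole factor $e^{\frac n2\alpha^2D(0,0)}$ into the scale integral. Since $D(0,0)\ge0$ and $n/2\le n-1$ for $n\ge2$, this factor may be replaced by $e^{(n-1)\alpha^2D(0,0)}$. With $g(s)\equiv\norm{\dot u_{\epsilon,s}}e^{-\alpha^2C_{\epsilon,s}(0,0)}$ one has exactly $M_{\epsilon,s}e^{\alpha^2D(0,0)}=e^{\alpha^2C_{\epsilon,t}(0,0)}\int_0^sg$, hence
\[
\norm{\dot u_{\epsilon,s}}e^{(n-1)\alpha^2D(0,0)}M_{\epsilon,s}^{n-2}=e^{(n-1)\alpha^2C_{\epsilon,t}(0,0)}\,g(s)\Bigl(\int_0^sg\Bigr)^{n-2} \ ,
\]
whose integral over $s\in[0,t]$ is exactly $M_{\epsilon,t}^{n-1}/(n-1)$. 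So the surplus $e^{(\frac n2-1)\alpha^2D(0,0)}$ that you tried to kill with the interaction factor is instead carried by $M_{\epsilon,s}^{n-2}$, which can absorb up to $n-2$ such factors. This uses the same identity behind your remark $M_{\epsilon,s}e^{\alpha^2D(0,0)}\le M_{\epsilon,t}$. You must, however, keep the $s$-dependence $\int_0^sg$ rather than bounding each $M_{\epsilon,s}e^{\alpha^2D(0,0)}$ by $M_{\epsilon,t}$, or you lose the $1/(n-1)$ that closes the induction. Your integral inequality with a single factor $e^{\alpha^2D(0,0)}$ is true but too weak for what actually remains after $e^{-(w_{\epsilon,t}-w_{\epsilon,s})}\le1$.
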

\begin{proof}
From Eq. \eqref{eq:fourier-n-1}, the bound holds for $n=1$. For $n>1$, from the initial conditions \eqref{eq:SG-initial-conditions} and by the integral equation Eq. \eqref{eq:polchinski-integral}, we see that $\FVn_0(\xi_1,\ldots,\xi_n) = 0$; then, the first term of Eq. \eqref{eq:polchinski-integral} does not contribute. Secondly, since $t\geq s$ then $w_t - w_s \geq 0$, and the exponential in the second term of Eq. \eqref{eq:polchinski-integral} can be bounded by $1$. Therefore, we obtain
\[
\abs{\FVn_t(\xi_1,\ldots,\xi_n)} \leq \frac{1}{2}\int_0^t \dd s \sum_{I_1 \dot \cup I_2 = [n]} \sum_{i \in I_1, j \in I_2} \abs{\dot u_{\epsilon,s}(\x_i,\x_j) \hat V_s^{\abs{I_1}}(\xi_{I_1})\hat V_s^{\abs{I_2}}(\xi_{I_2})} \ .
\]
If $\abs{I_1} = n-k$ and $\abs{I_2} = k$ this implies
\[
\sup_{\xi_1} \int_{\mathbb M_{\beta,c}^n} \dd \mu_{\beta,c}(\xi_2)\ldots \dd \mu_{\beta,c}(\xi_n) \abs{\dot u_{\epsilon,s}(\x_i,\x_j) \hat V_s^{\abs{I_1}}(\xi_{I_1})\hat V_s^{\abs{I_2}}(\xi_{I_2})} 
\leq \norm{\dot u_{\epsilon,s}} \norm{\hat V^{(n-k)}_s}\norm{\hat V^{(k)}_s} \ .
\]
Then, assume by induction that the bound \eqref{eq:BK-bound} holds for all $k<n$, so that
\begin{align*}
\norm{\FVn} 
&\leq \frac{1}{2}\int_0^t \dd s\norm{\dot u_{\epsilon,s}} \sum_{k=1}^{n-1}\binom{n}{k} k(n-k) \norm{\hat V^{(n-k)}_s}\norm{\hat V^{(k)}_s} \\
&\leq \frac{1}{2}\int_0^t \dd s\norm{\dot u_{\epsilon,s}} \sum_{k=1}^{n-1}\binom{n}{k}  (n-k)^{n-k-1}k^{k-1} \abs{z_{\epsilon,t}}^n M_{\epsilon,s}^{n-2}  \ .
\end{align*}
Using the fact that \cite[Lemma 4.2]{Brydges1987}
\[
\sum_{k=1}^{n-1}\binom{n}{k}  (n-k)^{n-k-1}k^{k-1}=2(n-1)n^{n-2}  \ ,
\]
and $n\leq 2(n-1)$ for $n \geq 2$, we get (substituting $z_{\epsilon,s} = z_{\epsilon,t} e^{-\frac{\alpha^2}{2}(C_{\epsilon,s}-C_{\epsilon,t})(0,0)}$)
\begin{align*}
\norm{\FVn} &\leq (n-1)n^{n-2} \abs{z_{\epsilon,t}}^n \int_0^t \dd s \norm{\dot u_{\epsilon,s}} e^{-\frac{n}{2}\alpha^2\left ( C_{\epsilon,s}-C_{\epsilon,t}\right)(0,0)}M_{\epsilon,s}^{n-2} \\
&\leq (n-1)n^{n-2} \abs{z_{\epsilon,t}}^n \int_0^t \dd s \norm{\dot u_{\epsilon,s}} e^{-(n-1)\alpha^2\left ( C_{\epsilon,s}-C_{\epsilon,t}\right)(0,0)}M_{\epsilon,s}^{n-2} \\
&= n^{n-2} \abs{z_{\epsilon,t}}^n M_{\epsilon,t}^{n-1} \ .
\end{align*}
For $n>2$, by a change of variables we have
\[
(n-1) \int_0^t \dd s g(s) \left (\int_0^s \dd s' g(s')\right)^{n-2} = \left( \int_0^t \dd s g(s) \right)^{n-1} \ .
\]
Choosing $g(s) = \norm{\dot u_{\epsilon,s}} e^{-\alpha^2 C_{\epsilon,s}(0,0)}$ we get
\begin{multline*}
 (n-1) \int_0^t \dd s \norm{\dot u_{\epsilon,s}} e^{-(n-1)\alpha^2\left ( C_{\epsilon,s}-C_{\epsilon,t}\right)(0,0)}M_{\epsilon,s}^{n-2} \\
 = (n-1) e^{(n-1)\alpha^2 C_{\epsilon,t}(0,0)}\int_0^t \dd s \norm{\dot u_{\epsilon,s} } e^{-\alpha^2 C_{\epsilon,s}(0,0)} \left( \int_0^s \dd s' \norm{\dot u_{\epsilon,s'}} e^{-\alpha^2 C_{\epsilon,s'}(0,0)}\right)^{n-2} = M_{\epsilon,t}^{n-1} \ .
\end{multline*}
This concludes the inductive step.

Finally, assuming $\abs{z_{\epsilon,t}} M_{\epsilon,t} < 1/e$ and using $n^n/n! \leq e^n$, the series \eqref{eq:Fourier-expansion} converges absolutely, since
\begin{align*}
\abs{V_t(\varphi)} &\leq \sum_{n=1}^\infty \frac{1}{n!}\int_{\mathbb M_{\beta,c}^{\otimes n}} \dd \mu_{\beta,c}(\xi_1) \ldots \dd \mu_{\beta,c}(\xi_n)\abs{ \FVn(\xi_1,\ldots,\xi_n) } \\
&\leq \sum_{n=1}^\infty \frac{n^{n-2}}{n!} \abs{z_{\epsilon,t}}^n M_{\epsilon,t}^{n-1}\beta \norm{g}_{L^1(\mathbb R)} \\
&\leq \beta  \norm{g}_{L^1(\mathbb R)} \frac{e\abs{z_{\epsilon,t}}}{1-e\abs{z_{\epsilon,t}}M_{\epsilon,t}} < \infty \ .
\end{align*}
Similar bounds hold for its derivatives, so that $V_t$ satisfies the Polchinski equation with Sine-Gordon initial conditions.
\end{proof}

Using Prop. \ref{prop:fourier-estimates} it is now simple to conclude the proof of Thm. \ref{thm:entropy-inequality-sine-gordon}.

\begin{proof}[Proof of Theorem \ref{thm:entropy-inequality-sine-gordon}]
The proof relies on an estimate of the Hessian based on Prop. \ref{prop:fourier-estimates}, essentially proving that it is possible to choose $z$ so that $\abs{z_{\epsilon,t}}M_{\epsilon,t} < 1/e$ for $\alpha^2 < 4\pi$.

First, consider the case in which $t \leq T$. From Eq. \eqref{eq:key-estimates},
\[
\norm{\dot u_{\epsilon,s}} \leq 2\alpha^2 \sup_{\tau_x,x}\int \dd \tau_y \dd y \dd  \abs{\dot C_{\epsilon,s}(\tau_x-\tau_y,x-y)} \leq \alpha^2 e^{-\frac{m^2}{2}s} \ .
\]
Moreover, again from Lemma \ref{lemma:key-estimates} we have
\[
\exp{\alpha^2\left( C_{\epsilon,t}-C_{\epsilon,s}\right)(0,0)} \leq e^{\alpha^2 c_{\beta,m}} \left(\frac{t}{s}\right )^{\frac{\alpha^2}{4\pi}} \ ,
\]
and
\[
\abs{z_{\epsilon,t}} = \abs{z} e^{\frac{\alpha^2}{2}\left( C_{\epsilon,T}-C_{\epsilon,t}\right)(0,0)}\leq e^{\frac{\alpha^2}{2} c_{\beta,m}}  \abs{z} \left(\frac{T}{t}\right )^{\frac{\alpha^2}{8\pi}} \ .
\]
Then, if $\alpha^2 < 4\pi$,
\[
M_{\epsilon,t} =  \int_0^t \dd s \norm{\dot u_{\epsilon,s}}e^{\alpha^2(C_{\epsilon,t}-C_{\epsilon,s})(0,0)} 
\leq e^{\alpha^2 c_{\beta,m}} \alpha^2  t^{\frac{\alpha^2}{4\pi}} \int_0^t \dd s s^{-\frac{\alpha^2}{4\pi}} 
= e^{\alpha^2 c_{\beta,m}} \frac{4\pi\alpha^2}{4\pi - \alpha^2} t \ ,
\]
implying
\begin{equation}
\abs{z_{\epsilon,t}} M_{\epsilon,t} \leq \abs{z} e^{\frac{3}{2}\alpha^2 c_{\beta,m}} \frac{4\pi \alpha^2}{4\pi - \alpha^2} T^{\frac{\alpha^2}{8\pi}}t^{1-\frac{\alpha^2}{8\pi}} \ .
\end{equation}
Since $\alpha^2/8\pi < 1/2$ and $t \leq T$ we arrive at the bound
\[
\abs{z_{\epsilon,t}} M_{\epsilon,t} \leq \abs{z} e^{\frac{3}{2}\alpha^2 c_{\beta,m}} \frac{4\pi \alpha^2}{4\pi - \alpha^2} T \ .
\]

Now, consider the case in which $t \geq T$. In this case, $\abs{z_{\epsilon,t}} \leq \abs{z}$. Moreover,
\[
M_{\epsilon,t} 
\leq \int_0^\infty
\norm{\dot u_{\epsilon,s}}
e^{\alpha^2(C_{\epsilon,\infty}-C_{\epsilon,s})(0,0)} \dd s \ .
\]
If $\alpha^2 < 4 \pi$, the integral is finite, uniformly in $\epsilon$: near $s=0$, by Lemma~\ref{lemma:key-estimates}
\[
e^{\alpha^2(C_\infty-C_{\epsilon,s})(0,0)} = \mathcal O(s^{-\frac{\alpha^2}{4\pi}}) \ ,
\]
while for $s\to\infty$,
\[
\norm{\dot u_{\epsilon,s}}
= \alpha^2 e^{-m^2s/2} \ 
\]
provides a sufficient exponential damping. Then,
\[
\sup_{\epsilon>0,t\geq T}M_{\epsilon,t}
\le C_{\alpha,\beta,m,T}<\infty \ .
\]

Combining the two regimes for $t$, we conclude that there exists a constant $C_{\alpha,\beta,m,T}$ such that
\begin{equation}\label{eq:last-estimate}
\sup_{\epsilon>0,t>0}
\abs{z_{\epsilon,t}}M_{\epsilon,t}
\le
C_{\alpha,\beta,m,T}\abs{z} \ .
\end{equation}
Therefore, we can choose $\abs{z} < 1/(C_{\alpha,\beta,m,T} e)$ to conclude that the series \eqref{eq:Fourier-expansion} with Sine-Gordon initial conditions converges absolutely to the convolution solution of the Polchinski equation.

Now, to conclude the proof, we consider the norm $\norm {V_t^{(2)}(\varphi)}$, which from Eq. \eqref{eq:Fourier-expansion}, Prop. \ref{prop:fourier-estimates}, the last estimate \eqref{eq:last-estimate} and $n^n/n! \leq e^n$ is bounded by
\[
\norm{ V_t^{(2)}(\varphi)} \leq \alpha^2 \sum_{n=1}^\infty e^n \abs{z_{\epsilon,t}}^n M_{\epsilon,t}^{n-1} \leq \frac{\alpha^2 e \abs{z_{\epsilon,t}}}{1-e\abs{z_{\epsilon,t}}M_{\epsilon,t}}\ .
\]
Since $e\abs{z_{\epsilon,t}}M_{\epsilon,t}<1$, we have a uniform bound on $V_t^{2)}(\varphi)$.

Since $\dot C_{\epsilon,s}$ is given by the heat kernel of the massive Laplacian on $\mathbb S_\beta \times \mathbb R$, we can simplify the multiscale Bakry-\Emery criterion \eqref{eq:BE-criterion} setting $\dot C_{s} = \frac12 e^{-sP/2}\equiv G_s^2$ with $P= -\partial^2_\tau-\Delta_x + m^2$. Since
\[
\ddot C_{t} = - \frac{1}{2} P \dot C_{t} \ ,
\]
the multiscale Bakry-\Emery criterion in Eq.~\eqref{eq:BE-criterion} becomes
\[
G_t \left [ G_t \Hess V_t G_t + \frac{1}{4}P\right ] G_t \geq \dot \lambda_t G_t^2 \ .
\]
Therefore, since $P \geq m^2$, it is sufficient to prove
\[
G_t \Hess V_t G_t \geq \dot \mu_t  I \ ,
\]
and the multiscale Bakry-\Emery criterion \eqref{eq:BE-criterion} follows with $\dot \lambda_t = \dot \mu_t + \frac{1}{4}m^2$.

Since
\[
\abs{(f,\Hess V_t(\varphi ) f)} \leq \norm{\Hess V_t(\varphi)} \abs{f}^2_2 \ ,
\]
and $\abs{G_t f}_2 \leq e^{-\frac{m^2}{4}t}\abs{f}_2$, we get
\[
\abs{(G_t f,\Hess V_t(\varphi )G_t f)} 
\leq \norm{\Hess V_t(\varphi)} \abs{G_t f}^2_2
\leq \mathcal O(\abs{z_{\epsilon,t}} e^{-\frac{m^2}{2}t}) \abs{f}_2^2 \ .
\]
Therefore, we can choose
\[
\dot \mu_t = - \mathcal O(\abs{z_{\epsilon,t}} e^{-\frac{m^2}{2}t})\ .
\]
Then,
\[
\mu_t \equiv \int_0^t\dot\mu_s \dd s
\geq
-\mathcal O\left( \int_0^t \abs{z_{\epsilon,s}}e^{-m^2s/2} \dd s \right)  \ .
\]
Using the bound
$
\abs{z_{\epsilon,s}}
\leq A(\alpha,\beta,m) \abs{z}\,s^{-\alpha^2/(8\pi)}
$
for some $A(\alpha,\beta,m)$, and the assumption $\alpha^2<4\pi$, we find
\begin{align*}
\mu_t &\geq -A(\alpha,\beta,m) \abs{z} \int_0^\infty s^{-\alpha^2/(8\pi)} e^{-m^2s/2} \dd s \\
&= - A(\alpha,\beta,m) \abs{z} \left(\frac{m^2}{2}\right)^{\alpha^2/(8\pi)-1}
\Gamma\left(1-\frac{\alpha^2}{8\pi}\right) \ .
\end{align*}
Therefore,
\[
\mu_t\geq -\mathcal O_\alpha\left(\abs{z}m^{-2+\alpha^2/(4\pi)} \right) =: -\mu^* \ , \qquad t\geq0 \ .
\]
Since $\mu_t$ is bounded from below, $\gamma$ is finite.
\end{proof}

\subsection*{Acknowledgments}
I dedicate this work to Elena and Caterina, who gifted me the days and nights needed to complete it. I acknowledge the use of AI tools for correction of proofs, literature search, and proofreading. I was partially supported by the ERC Starting Grant FermiMath, grant agreement nr. 101040991. Views and opinions expressed are mine and do not necessarily reflect those of the European Union or the European Research Council Executive Agency. Neither the European Union nor the granting authority can be held responsible for them.
\printbibliography

@Article{	  albeverio1977,
  Author	= {Albeverio, Sergio and Høegh-Krohn, Raphael},
  Date		= {1977},
  JournalTitle	= {Communications in Mathematical Physics},
  Volume	= {56},
  Number	= {2},
  Pages		= {173--187},
  Publisher	= {Springer},
  Title		= {Dirichlet Forms and {{Markov}} Semigroups on
		  {$C^*-$}algebras}
}

@Article{	  araki1968,
  Title		= {Multiple {{Time Analyticity}} of a {{Quantum Statistical
		  State Satisfying}} the {{KMS Boundary Condition}}},
  Author	= {Araki, Huzihiro},
  Date		= {1968-08-31},
  JournalTitle	= {Publications of the Research Institute for Mathematical
		  Sciences},
  Volume	= {4},
  Number	= {2},
  Pages		= {361--371},
  ISSN		= {0034-5318},
  DOI		= {10.2977/prims/1195194880},
  URL		= {https://ems.press/journals/prims/articles/2450},
  URLDate	= {2026-09-04},
  langid	= {english}
}

@Article{	  araki1973,
  Title		= {Relative {{Hamiltonian}} for {{Faithful Normal States}} of
		  a von {{Neumann Algebra}}},
  Author	= {Araki, Huzihiro},
  Date		= {1973-04-30},
  JournalTitle	= {Publications of the Research Institute for Mathematical
		  Sciences},
  Volume	= {9},
  Number	= {1},
  Pages		= {165--209},
  ISSN		= {0034-5318},
  DOI		= {10.2977/prims/1195192744},
  URL		= {https://ems.press/journals/prims/articles/2638},
  URLDate	= {2026-09-03},
  langid	= {english}
}

@Article{	  araki1975,
  Title		= {Relative {{Entropy}} of {{States}} of von {{Neumann
		  Algebras}}},
  Author	= {Araki, Huzihiro},
  Date		= {1975-12-31},
  JournalTitle	= {Publications of the Research Institute for Mathematical
		  Sciences},
  Volume	= {11},
  Number	= {3},
  Pages		= {809--833},
  ISSN		= {0034-5318},
  DOI		= {10.2977/prims/1195191148},
  URL		= {https://ems.press/journals/prims/articles/2800},
  URLDate	= {2026-09-06},
  langid	= {english}
}

@Article{	  arakiwoods1963,
  Title		= {Representations of the Canonical Commutation Relations
		  Describing a Nonrelativistic Infinite Free Bose Gas},
  Author	= {Araki, H. and Woods, E. J.},
  Date		= {1963-05},
  JournalTitle	= {Journal of Mathematical Physics},
  Volume	= {4},
  Number	= {5},
  EPrint	= {https://pubs.aip.org/aip/jmp/article-pdf/4/5/637/19220915/637_1_online.pdf},
  Pages		= {637--662},
  ISSN		= {0022-2488},
  DOI		= {10.1063/1.1704002},
  URL		= {https://doi.org/10.1063/1.1704002}
}

@Article{	  bahns2021,
  Title		= {Equilibrium States for the Massive {{Sine-Gordon}} Theory
		  in the {{Lorentzian}} Signature},
  Author	= {Bahns, Dorothea and Pinamonti, Nicola and Rejzner, Kasia},
  Date		= {2023},
  JournalTitle	= {Journal of Mathematical Analysis and Applications},
  ShortJournal	= {J. Math. Anal. Appl.},
  Volume	= {526},
  EPrint	= {2103.09328},
  EPrintType	= {arXiv},
  EPrintClass	= {math-ph},
  Pages		= {127249},
  DOI		= {10.1016/j.jmaa.2023.127249}
}

@Book{		  bakry2013,
  Title		= {Analysis and Geometry of Markov Diffusion Operators},
  Author	= {Bakry, Dominique and Gentil, Ivan and Ledoux, Michel},
  Date		= {2013},
  Publisher	= {Springer Cham},
  URL		= {https://doi.org/10.1007/978-3-319-00227-9}
}

@Article{	  barashkov2020,
  Author	= {Barashkov, N. and Gubinelli, M.},
  Date		= {2020-11},
  JournalTitle	= {Duke Mathematical Journal},
  Volume	= {169},
  Number	= {17},
  Publisher	= {Duke University Press},
  ISSN		= {0012-7094},
  DOI		= {10.1215/00127094-2020-0029},
  URL		= {http://dx.doi.org/10.1215/00127094-2020-0029},
  Title		= {A Variational Method for {$\phi^4_3$}}
}

@Article{	  bauerschmidt2023,
  Author	= {Bauerschmidt, Roland and Dagallier, Benoit},
  Date		= {2023-10},
  JournalTitle	= {Communications on Pure and Applied Mathematics},
  Volume	= {77},
  Number	= {5},
  Pages		= {2579--2612},
  Publisher	= {Wiley},
  ISSN		= {1097-0312},
  DOI		= {10.1002/cpa.22173},
  URL		= {http://dx.doi.org/10.1002/cpa.22173},
  Title		= {Log‐{{Sobolev}} Inequality for the {$\varphi^4_2$} and
		  {$\varphi^4_3$} Measures}
}

@Article{	  bauerschmidt2023b,
  Title		= {Log‐{{Sobolev}} Inequality for near Critical {{Ising}}
		  Models},
  Author	= {Bauerschmidt, Roland and Dagallier, Benoit},
  Date		= {2023-10},
  JournalTitle	= {Communications on Pure and Applied Mathematics},
  Volume	= {77},
  Number	= {4},
  Pages		= {2568--2576},
  Publisher	= {Wiley},
  ISSN		= {1097-0312},
  DOI		= {10.1002/cpa.22172},
  URL		= {http://dx.doi.org/10.1002/cpa.22172}
}

@Article{	  bauerschmidt2024,
  Title		= {Stochastic Dynamics and the {{Polchinski}} Equation:
		  {{An}} Introduction},
  Author	= {Bauerschmidt, Roland and Bodineau, Thierry and Dagallier,
		  Benoit},
  Date		= {2024-01},
  JournalTitle	= {Probability Surveys},
  Volume	= {21},
  Publisher	= {Institute of Mathematical Statistics},
  ISSN		= {1549-5787},
  DOI		= {10.1214/24-ps27},
  URL		= {http://dx.doi.org/10.1214/24-PS27},
  Issue		= {none}
}

@Article{	  bauerschmidtbodineau2020,
  Title		= {Log‐sobolev Inequality for the Continuum Sine‐gordon
		  Model},
  Author	= {Bauerschmidt, Roland and Bodineau, Thierry},
  Date		= {2020-07},
  JournalTitle	= {Communications on Pure and Applied Mathematics},
  Volume	= {74},
  Number	= {10},
  Pages		= {2064--2113},
  Publisher	= {Wiley},
  ISSN		= {1097-0312},
  DOI		= {10.1002/cpa.21926},
  URL		= {http://dx.doi.org/10.1002/cpa.21926}
}

@Article{	  bekenstein1980,
  Title		= {A Universal Upper Bound on the Entropy to Energy Ratio for
		  Bounded Systems},
  Author	= {Bekenstein, Jacob D.},
  Date		= {1981},
  JournalTitle	= {Physical Review D: Particles and Fields},
  ShortJournal	= {Phys. Rev. D},
  Volume	= {23},
  Number	= {NSF-ITP-80-38},
  Pages		= {287},
  DOI		= {10.1103/PhysRevD.23.287}
}

@Article{	  benoist2025,
  Title		= {Entropic Fluctuations in Statistical Mechanics {{II}}.
		  {{Quantum}} Dynamical Systems},
  Author	= {Benoist, T. and Bruneau, L. and Jakšić, V. and Panati,
		  A. and Pillet, C.-A.},
  Date		= {2025-08},
  JournalTitle	= {Communications in Mathematical Physics},
  Volume	= {406},
  Number	= {9},
  Publisher	= {{Springer Science and Business Media LLC}},
  ISSN		= {1432-0916},
  DOI		= {10.1007/s00220-025-05360-z},
  URL		= {http://dx.doi.org/10.1007/s00220-025-05360-z}
}

@Article{	  birke2002,
  Title		= {Kms, Etc.},
  Author	= {Birke, Lothat and Frölich, Jürg},
  Date		= {2002-07},
  JournalTitle	= {Reviews in Mathematical Physics},
  Volume	= {14},
  Pages		= {829--871},
  Publisher	= {World Scientific Pub Co Pte Lt},
  ISSN		= {1793-6659},
  DOI		= {10.1142/s0129055x02001442},
  URL		= {http://dx.doi.org/10.1142/S0129055X02001442},
  Issue		= {07n08}
}

@Book{		  brattelirobinson-1,
  Title		= {Operator Algebras and Quantum Statistical Mechanics. 1.
		  {{C}}* and {{W}}* Algebras, Symmetry Groups, Decomposition
		  of States},
  Author	= {Bratteli, O. and Robinson, D. W.},
  Date		= {1987},
  Series	= {Theoretical and Mathematical Physics},
  Publisher	= {Springer Berlin, Heidelberg},
  DOI		= {10.1007/978-3-662-02520-8}
}

@Book{		  brattelirobinson-2,
  Title		= {Operator Algebras and Quantum Statistical Mechanics.
		  {{Vol}}. 2: {{Equilibrium}} States. {{Models}} in Quantum
		  Statistical Mechanics},
  Author	= {Bratteli, O. and Robinson, D. W.},
  Date		= {1997},
  Series	= {Theoretical and Mathematical Physics},
  Publisher	= {Springer Berlin, Heidelberg},
  DOI		= {10.1007/978-3-662-02520-8}
}

@Article{	  brydges1987,
  Title		= {Mayer Expansions and the {{Hamilton-Jacobi}} Equation},
  Author	= {Brydges, D. C. and Kennedy, T.},
  Year		= {1987/07/01, 1987},
  JournalTitle	= {Journal of Statistical Physics},
  Volume	= {48},
  Number	= {1},
  Pages		= {19--49},
  DOI		= {10.1007/BF01010398},
  URL		= {https://doi.org/10.1007/BF01010398},
  id		= {Brydges1987},
  ISBN		= {1572-9613}
}

@Misc{		  carbone2014,
  Title		= {Logarithmic {{Sobolev}} Inequalities and Exponential
		  Entropy Decay in Non-Commutative Algebras},
  Author	= {Carbone, Raffaella},
  Date		= {2014},
  EPrint	= {1402.6948},
  EPrintType	= {arXiv},
  EPrintClass	= {math.OA},
  URL		= {https://arxiv.org/abs/1402.6948}
}

@Article{	  carbone2015,
  Title		= {Logarithmic {{Sobolev}} Inequalities in Non-Commutative
		  Algebras},
  Author	= {Carbone, Raffaella and Martinelli, Andrea},
  Date		= {2015},
  JournalTitle	= {Infinite Dimensional Analysis, Quantum Probability and
		  Related Topics},
  Volume	= {18},
  Number	= {02},
  EPrint	= {https://doi.org/10.1142/S0219025715500113},
  Pages		= {1550011},
  DOI		= {10.1142/S0219025715500113},
  URL		= {https://doi.org/10.1142/S0219025715500113}
}

@Book{		  cerrai2001,
  Title		= {Second {{Order PDE}}’s in {{Finite}} and {{Infinite
		  Dimension}}},
  Editor	= {Cerrai, Sandra},
  Date		= {2001},
  Series	= {Lecture {{Notes}} in {{Mathematics}}},
  Volume	= {1762},
  Publisher	= {Springer},
  Location	= {Berlin, Heidelberg},
  DOI		= {10.1007/b80743},
  URL		= {http://link.springer.com/10.1007/b80743},
  URLDate	= {2026-09-05},
  ISBN		= {978-3-540-42136-8},
  langid	= {english}
}

@Article{	  cerrai2025,
  Title		= {Smoothing Effects and Maximal {{Hölder}} Regularity for
		  Non-Autonomous {{Kolmogorov}} Equations in Infinite
		  Dimension},
  Author	= {Cerrai, Sandra and Lunardi, Alessandra},
  Date		= {2025-07-25},
  JournalTitle	= {Journal of Differential Equations},
  ShortJournal	= {Journal of Differential Equations},
  Volume	= {434},
  Pages		= {113245},
  ISSN		= {0022-0396},
  DOI		= {10.1016/j.jde.2025.113245},
  URL		= {https://www.sciencedirect.com/science/article/pii/S0022039625002608},
  URLDate	= {2026-09-05}
}

@Unpublished{	  chialastri2026,
  Title		= {Bounds on Relative Modular {{Hamiltonians}} in General
		  {{QFT}}},
  Author	= {Chialastri, Adriano and Minz, Christoph and Sanders, Ko},
  Date		= {2026-05},
  EPrint	= {2605.27198},
  EPrintType	= {arXiv},
  EPrintClass	= {math-ph}
}

@Article{	  cipriani1997,
  Title		= {Dirichlet Forms and Markovian Semigroups on Standard Forms
		  of von Neumann Algebras},
  Author	= {Cipriani, Fabio},
  Date		= {1997},
  JournalTitle	= {Journal of Functional Analysis},
  Volume	= {147},
  Number	= {2},
  Pages		= {259--300},
  ISSN		= {0022-1236},
  DOI		= {10.1006/jfan.1996.3063},
  URL		= {https://www.sciencedirect.com/science/article/pii/S0022123696930633}
}

@Article{	  cipriani2003,
  Title		= {Derivations as Square Roots of {{Dirichlet}} Forms},
  Author	= {Cipriani, Fabio and Sauvageot, Jean-Luc},
  Date		= {2003},
  JournalTitle	= {Journal of Functional Analysis},
  Volume	= {201},
  Number	= {1},
  Pages		= {78--120},
  ISSN		= {0022-1236},
  DOI		= {10.1016/S0022-1236(03)00085-5},
  URL		= {https://www.sciencedirect.com/science/article/pii/S0022123603000855}
}

@Article{	  derezinski2026,
  Title		= {Miniatures on Open Quantum Systems},
  Author	= {Dereziński, Jan and Jakšić, Vojkan and Pillet,
		  Claude-Alain},
  Date		= {2026-06},
  JournalTitle	= {Complex Analysis and Operator Theory},
  Volume	= {20},
  Number	= {5},
  Publisher	= {{Springer Science and Business Media LLC}},
  ISSN		= {1661-8262},
  DOI		= {10.1007/s11785-026-01967-9},
  URL		= {http://dx.doi.org/10.1007/s11785-026-01967-9}
}

@Article{	  derezinskijaksicpillet2003,
  Title		= {Perturbation Theory of {{W}}*-Dynamics, Liouvilleans and
		  {{KMS-states}}},
  Author	= {Dereziński, J. and Jakšić, V. and Pillet, C.-A.},
  Date		= {2003},
  JournalTitle	= {Reviews in Mathematical Physics},
  Volume	= {15},
  Number	= {05},
  EPrint	= {https://doi.org/10.1142/S0129055X03001679},
  Pages		= {447--489},
  DOI		= {10.1142/S0129055X03001679},
  URL		= {https://doi.org/10.1142/S0129055X03001679}
}

@Article{	  duch2022,
  Title		= {Renormalization of Singular Elliptic Stochastic {{PDEs}}
		  Using Flow Equation},
  Author	= {Duch, Paweł},
  Date		= {2025-02},
  JournalTitle	= {Probability and Mathematical Physics},
  Volume	= {6},
  Number	= {1},
  Pages		= {111--138},
  Publisher	= {Mathematical Sciences Publishers},
  ISSN		= {2690-0998},
  DOI		= {10.2140/pmp.2025.6.111},
  URL		= {http://dx.doi.org/10.2140/pmp.2025.6.111}
}

@Article{	  duch2025,
  Title		= {Flow Equation Approach to Singular Stochastic {{PDEs}}},
  Author	= {Duch, Paweł},
  Date		= {2025-03},
  JournalTitle	= {Probability and Mathematical Physics},
  Volume	= {6},
  Number	= {2},
  Pages		= {327--437},
  Publisher	= {Mathematical Sciences Publishers},
  ISSN		= {2690-0998},
  DOI		= {10.2140/pmp.2025.6.327},
  URL		= {http://dx.doi.org/10.2140/pmp.2025.6.327}
}

@online{	  frob2025,
  Title		= {Bounding Relative Entropy for Non-Unitary Excitations in
		  Quantum Field Theory},
  Author	= {Fröb, Markus B. and Sangaletti, Leonardo},
  Date		= {2026-04},
  EPrint	= {2604.18383},
  EPrintType	= {arXiv},
  EPrintClass	= {math-ph},
  PubState	= {prepublished}
}

@Article{	  frohlich1980,
  Title		= {Unbounded, Symmetric Semigroups on a Separable {{Hilbert}}
		  Space Are Essentially Selfadjoint},
  Author	= {Fröhlich, J},
  Date		= {1980},
  JournalTitle	= {Advances in Applied Mathematics},
  Volume	= {1},
  Number	= {3},
  Pages		= {237--256},
  ISSN		= {0196-8858},
  DOI		= {10.1016/0196-8858(80)90012-3},
  URL		= {https://www.sciencedirect.com/science/article/pii/0196885880900123}
}

@online{	  galanda2025,
  Title		= {Equilibrium States for Non Relativistic {{Bose}} Gases
		  with Condensation},
  Author	= {Galanda, Stefano and Pinamonti, Nicola},
  Date		= {2025-09-29},
  EPrint	= {2509.25101},
  EPrintType	= {arXiv},
  EPrintClass	= {math-ph},
  DOI		= {10.48550/arXiv.2509.25101},
  URL		= {http://arxiv.org/abs/2509.25101},
  PubState	= {prepublished}
}

@online{	  galanda2026,
  Title		= {Equilibrium States for Non Relativistic {{Bose}} Gases and
		  the {{Gross-Pitaevskii}} Limit},
  Author	= {Galanda, Stefano and Pinamonti, Nicola},
  Date		= {2026-07-07},
  EPrint	= {2606.25775},
  EPrintType	= {arXiv},
  EPrintClass	= {math-ph},
  DOI		= {10.48550/arXiv.2606.25775},
  URL		= {http://arxiv.org/abs/2606.25775},
  PubState	= {prepublished}
}

@Article{	  gerard2005,
  Title		= {Thermal Quantum Fields with Spatially Cutoff Interactions
		  in 1+1 Space–Time Dimensions},
  Author	= {Gérard, Christian and Jäkel, Christian D.},
  Date		= {2005},
  JournalTitle	= {Journal of Functional Analysis},
  Volume	= {220},
  Number	= {1},
  Pages		= {157--213},
  ISSN		= {0022-1236},
  DOI		= {10.1016/j.jfa.2004.08.003},
  URL		= {https://www.sciencedirect.com/science/article/pii/S0022123604003106}
}

@Article{	  gerard2005b,
  Title		= {Thermal Quantum Fields without Cut-Offs in 1+1 Space-Time
		  Dimensions},
  Author	= {Gérard, Christian and Jäkel, Christian D.},
  Date		= {2005},
  JournalTitle	= {Reviews in Mathematical Physics},
  Volume	= {17},
  Number	= {02},
  EPrint	= {https://doi.org/10.1142/S0129055X05002303},
  Pages		= {113--173},
  DOI		= {10.1142/S0129055X05002303},
  URL		= {https://doi.org/10.1142/S0129055X05002303}
}

@Article{	  gielerak1998,
  Title		= {Stochastically Positive Structures on {{Weyl}} Algebras.
		  {{The}} Case of Quasi-Free States},
  Author	= {Gielerak, R. and Jakóbczyk, L. and Olkiewicz, R.},
  Date		= {1998-12},
  JournalTitle	= {Journal of Mathematical Physics},
  Volume	= {39},
  Number	= {12},
  Pages		= {6291--6328},
  Publisher	= {AIP Publishing},
  ISSN		= {1089-7658},
  DOI		= {10.1063/1.532639},
  URL		= {http://dx.doi.org/10.1063/1.532639}
}

@Book{		  glimmjaffe1987,
  Title		= {Quantum Physics: A Functional Integral Point of View},
  Author	= {Glimm, James and Jaffe, Arthur},
  Date		= {1987},
  Edition	= {2},
  Publisher	= {Springer},
  Location	= {New York},
  DOI		= {10.1007/978-1-4612-4728-9},
  ISBN		= {978-1-4612-4728-9}
}

@InCollection{	  gross1967,
  Title		= {Abstract Wiener Spaces},
  BookTitle	= {Proceedings of the Fifth Berkeley Symposium on
		  Mathematical Statistics and Probability},
  Author	= {Gross, Leonard},
  Editor	= {Le Cam, Lucien M. and Neyman, Jerzy},
  Date		= {1967},
  Volume	= {2},
  Pages		= {31--42},
  Publisher	= {University of California Press},
  Location	= {Berkeley, California},
  URL		= {https://projecteuclid.org/euclid.bsmsp/1200513262},
  Part		= {1}
}

@Article{	  gross1967potential,
  Title		= {Potential Theory on {{Hilbert}} Space},
  Author	= {Gross, Leonard},
  Date		= {1967},
  JournalTitle	= {Journal of Functional Analysis},
  Volume	= {1},
  Number	= {2},
  Pages		= {123--181},
  ISSN		= {0022-1236},
  DOI		= {10.1016/0022-1236(67)90030-4},
  URL		= {https://www.sciencedirect.com/science/article/pii/0022123667900304}
}

@Article{	  gross1975,
  Title		= {Logarithmic Sobolev Inequalities},
  Author	= {Gross, Leonard},
  Date		= {1975},
  JournalTitle	= {American Journal of Mathematics},
  Volume	= {97},
  Number	= {4},
  EPrint	= {2373688},
  EPrintType	= {jstor},
  Pages		= {1061--1083},
  Publisher	= {Johns Hopkins University Press},
  ISSN		= {00029327, 10806377},
  URL		= {http://www.jstor.org/stable/2373688},
  URLDate	= {2026-08-29}
}

@Misc{		  hairer2026,
  Title		= {Advanced Stochastic Calculus},
  Author	= {Hairer, Martin},
  Date		= {2026},
  URL		= {https://hairer.org/notes/StochasticAnalysisCourse.pdf},
  Note		= {Lecture notes, EPFL and Imperial College London}
}

@online{	  hollands2025,
  Title		= {Bekenstein Bound for Approximately Local Charged States},
  Author	= {Hollands, Stefan and Longo, Roberto},
  Date		= {2025-01},
  EPrint	= {2501.03849},
  EPrintType	= {arXiv},
  EPrintClass	= {hep-th},
  PubState	= {prepublished}
}

@online{	  hollands2026,
  Title		= {Bekenstein's Bound for Wave Packets},
  Author	= {Hollands, Stefan and Longo, Roberto and Morsella,
		  Gerardo},
  Date		= {2026-02},
  EPrint	= {2602.03606},
  EPrintType	= {arXiv},
  EPrintClass	= {math-ph},
  PubState	= {prepublished}
}

@Article{	  keller1992,
  Author	= {Keller, G. and Kopper, Christoph and Salmhofer, M.},
  Date		= {1992},
  JournalTitle	= {Helvetica Physica Acta},
  ShortJournal	= {Helv. Phys. Acta},
  Volume	= {65},
  Number	= {MPI-PAE-PTH-65-90},
  Pages		= {32--52},
  Title		= {Perturbative Renormalization and Effective {{Lagrangians}}
		  in {$\phi^4$} in Four-Dimensions}
}

@Article{	  keller1993,
  Author	= {Keller, Georg and Kopper, Christoph},
  Date		= {1994},
  JournalTitle	= {Communications in Mathematical Physics},
  ShortJournal	= {Commun. Math. Phys.},
  Volume	= {161},
  Number	= {UNIGOE-THPHY-4-93},
  Pages		= {515--532},
  DOI		= {10.1007/BF02101931},
  Title		= {Perturbative Renormalization of Massless {$\phi^4$} in
		  Four-Dimensions with Flow Equations}
}

@Article{	  klein1978,
  Title		= {The Semigroup Characterization of {{Osterwalder-Schrader}}
		  Path Spaces and the Construction of {{Euclidean}} Fields},
  Author	= {Klein, Abel},
  Date		= {1978},
  JournalTitle	= {Journal of Functional Analysis},
  Volume	= {27},
  Number	= {3},
  Pages		= {277--291},
  ISSN		= {0022-1236},
  DOI		= {10.1016/0022-1236(78)90009-5},
  URL		= {https://www.sciencedirect.com/science/article/pii/0022123678900095}
}

@Article{	  klein1981,
  Title		= {Stochastic Processes Associated with {{KMS}} States},
  Author	= {Klein, Abel and Landau, Lawrence J},
  Date		= {1981},
  JournalTitle	= {Journal of Functional Analysis},
  Volume	= {42},
  Number	= {3},
  Pages		= {368--428},
  ISSN		= {0022-1236},
  DOI		= {10.1016/0022-1236(81)90096-3},
  URL		= {https://www.sciencedirect.com/science/article/pii/0022123681900963}
}

@Article{	  klein1981b,
  Title		= {Construction of a Unique Self-Adjoint Generator for a
		  Symmetric Local Semigroup},
  Author	= {Klein, Abel and Landau, Lawrence J},
  Date		= {1981},
  JournalTitle	= {Journal of Functional Analysis},
  Volume	= {44},
  Number	= {2},
  Pages		= {121--137},
  ISSN		= {0022-1236},
  DOI		= {10.1016/0022-1236(81)90007-0},
  URL		= {https://www.sciencedirect.com/science/article/pii/0022123681900070}
}

@InCollection{	  ledoux2004,
  Title		= {Logarithmic {{Sobolev}} Inequalities for Unbounded Spin
		  Systems Revisited},
  BookTitle	= {Séminaire de Probabilités {{XXXV}}},
  Author	= {Ledoux, Michel},
  Date		= {2004},
  Pages		= {167--194},
  Publisher	= {Springer}
}

@Article{	  longo2024,
  Title		= {A Bekenstein-Type Bound in {{QFT}}},
  Author	= {Longo, Roberto},
  Date		= {2025},
  JournalTitle	= {Communications in Mathematical Physics},
  ShortJournal	= {Commun. Math. Phys.},
  Volume	= {406},
  Number	= {5},
  EPrint	= {2409.14408},
  EPrintType	= {arXiv},
  EPrintClass	= {math-ph},
  Pages		= {95},
  DOI		= {10.1007/s00220-025-05261-1}
}

@online{	  longomorinelli2024,
  Title		= {An Entropy Bound Due to Symmetries},
  Author	= {Longo, Roberto and Morinelli, Vincenzo},
  Date		= {2024-01},
  EPrint	= {2401.02345},
  EPrintType	= {arXiv},
  EPrintClass	= {math.OA},
  PubState	= {prepublished}
}

@Article{	  longoxu2018,
  Title		= {Comment on the Bekenstein Bound},
  Author	= {Longo, Roberto and Xu, Feng},
  Date		= {2018},
  JournalTitle	= {Journal of Geometry and Physics},
  ShortJournal	= {J. Geom. Phys.},
  Volume	= {130},
  EPrint	= {1802.07184},
  EPrintType	= {arXiv},
  EPrintClass	= {math-ph},
  Pages		= {113--120},
  DOI		= {10.1016/j.geomphys.2018.03.004}
}

@InProceedings{	  nelson1966,
  Title		= {A Quartic Interaction in Two Dimensions},
  BookTitle	= {Mathematical Theory of Elementary Particles, Proc.
		  {{Conf}}., Dedham, Mass., 1965},
  Author	= {Nelson, Edward},
  Date		= {1966},
  Pages		= {69--73},
  Publisher	= {MIT press}
}

@Article{	  nelson1973,
  Title		= {Construction of Quantum Fields from {{Markoff}} Fields},
  Author	= {Nelson, Edward},
  Date		= {1973},
  JournalTitle	= {Journal of Functional Analysis},
  Volume	= {12},
  Number	= {1},
  Pages		= {97--112},
  ISSN		= {0022-1236},
  DOI		= {10.1016/0022-1236(73)90091-8},
  URL		= {https://www.sciencedirect.com/science/article/pii/0022123673900918}
}

@Article{	  nelson1973a,
  Title		= {The Free {{Markoff}} Field},
  Author	= {Nelson, Edward},
  Date		= {1973},
  JournalTitle	= {Journal of Functional Analysis},
  Volume	= {12},
  Number	= {2},
  Pages		= {211--227},
  ISSN		= {0022-1236},
  DOI		= {10.1016/0022-1236(73)90025-6},
  URL		= {https://www.sciencedirect.com/science/article/pii/0022123673900256}
}

@Article{	  olkiewicz1999,
  Author	= {Olkiewicz, Robert and Zegarlinski, Bogusław},
  Date		= {1999},
  JournalTitle	= {Journal of Functional Analysis},
  Volume	= {161},
  Number	= {1},
  Pages		= {246--285},
  ISSN		= {0022-1236},
  DOI		= {10.1006/jfan.1998.3342},
  URL		= {https://www.sciencedirect.com/science/article/pii/S0022123698933420},
  Title		= {Hypercontractivity in Noncommutative {$L_p$} {{Spaces}}}
}

@Article{	  polchinski1983,
  Title		= {Renormalization and Effective Lagrangians},
  Author	= {Polchinski, Joseph},
  Date		= {1984},
  JournalTitle	= {Nuclear Physics B},
  ShortJournal	= {Nucl. Phys. B},
  Volume	= {231},
  Number	= {HUTP-83-A018},
  Pages		= {269--295},
  DOI		= {10.1016/0550-3213(84)90287-6},
  URL		= {https://doi.org/10.1016/0550-3213(84)90287-6}
}

@Article{	  reed1974,
  Title		= {Support Properties of the Free Measure for {{Boson}}
		  Fields},
  Author	= {Reed, M. and Rosen, L.},
  Year		= {1974/06/01, 1974},
  JournalTitle	= {Communications in Mathematical Physics},
  Volume	= {36},
  Number	= {2},
  Pages		= {123--132},
  DOI		= {10.1007/BF01646326},
  URL		= {https://doi.org/10.1007/BF01646326},
  id		= {Reed1974},
  ISBN		= {1432-0916}
}

@Book{		  salmhofer1999,
  Title		= {Renormalization: {{An}} Introduction},
  Author	= {Salmhofer, M.},
  Date		= {1999},
  Series	= {Theoretical and Mathematical Physics},
  Publisher	= {Springer Berlin},
  Location	= {Heidelberg},
  DOI		= {10.1007/978-3-662-03873-4},
  ISBN		= {978-3-540-64666-2}
}

@Article{	  schwinger1958,
  Title		= {On the Euclidean Structure of Relativistic Field Theory},
  Author	= {Schwinger, Julian},
  Date		= {1958},
  JournalTitle	= {Proceedings of the National Academy of Sciences of the
		  United States of America},
  Volume	= {44},
  Number	= {9},
  EPrint	= {90042},
  EPrintType	= {jstor},
  Pages		= {956--965},
  Publisher	= {National Academy of Sciences},
  ISSN		= {00278424, 10916490},
  URL		= {http://www.jstor.org/stable/90042},
  URLDate	= {2026-08-29}
}

@Article{	  sheffield2007,
  Title		= {Gaussian Free Fields for Mathematicians},
  Author	= {Sheffield, Scott},
  Date		= {2007},
  JournalTitle	= {Probability theory and related fields},
  Volume	= {139},
  Number	= {3},
  Pages		= {521--541},
  Publisher	= {Springer}
}

@Article{	  symanzik1966,
  Title		= {Euclidean Quantum Field Theory. {{I}}. {{Equations}} for a
		  Scalar Model},
  Author	= {Symanzik, K.},
  Date		= {1966},
  JournalTitle	= {Journal of Mathematical Physics},
  ShortJournal	= {J. Math. Phys.},
  Volume	= {7},
  Number	= {3},
  Pages		= {510},
  DOI		= {10.1063/1.1704960}
}

@Book{		  takesaki1970,
  Title		= {Tomita's Theory of Modular Hilbert Algebras and Its
		  Applications},
  Author	= {Takesaki, M.},
  Date		= {1970},
  Series	= {Lecture Notes in Mathematics},
  Publisher	= {Springer-Verlag},
  DOI		= {10.1007/bfb0065832}
}

@Article{	  wilson73,
  Title		= {The {{Renormalization}} Group and the Epsilon Expansion},
  Author	= {Wilson, K. G. and Kogut, John B.},
  Date		= {1974},
  JournalTitle	= {Phys. Rept.},
  Volume	= {12},
  Pages		= {75--199},
  DOI		= {10.1016/0370-1573(74)90023-4},
  URL		= {https://doi.org/10.1016/0370-1573(74)90023-4}
}

@Article{	  ziebell2023,
  Title		= {A {{Rigorous Derivation}} of the {{Functional
		  Renormalisation Group Equation}}},
  Author	= {Ziebell, Jobst},
  Date		= {2023-11},
  JournalTitle	= {Communications in Mathematical Physics},
  ShortJournal	= {Commun. Math. Phys.},
  Volume	= {403},
  Number	= {3},
  EPrint	= {2106.09466},
  EPrintType	= {arXiv},
  EPrintClass	= {math-ph},
  Pages		= {1329--1361},
  ISSN		= {0010-3616, 1432-0916},
  DOI		= {10.1007/s00220-023-04821-7},
  URL		= {http://arxiv.org/abs/2106.09466},
  URLDate	= {2026-09-01}
}

\end{document}